  \let\oldparagraph\paragraph
  \renewcommand{\paragraph}{
    \@ifstar
      \xxxParagraphStar
      \xxxParagraphNoStar
  }
  \newcommand{\xxxParagraphStar}[1]{\oldparagraph*{#1}\mbox{}}
  \newcommand{\xxxParagraphNoStar}[1]{\oldparagraph{#1}\mbox{}}
  \let\oldsubparagraph\subparagraph
  \renewcommand{\subparagraph}{
    \@ifstar
      \xxxSubParagraphStar
      \xxxSubParagraphNoStar
  }
  \newcommand{\xxxSubParagraphStar}[1]{\oldsubparagraph*{#1}\mbox{}}
  \newcommand{\xxxSubParagraphNoStar}[1]{\oldsubparagraph{#1}\mbox{}}
\patchcmd\longtable{\par}{\if@noskipsec\mbox{}\fi\par}{}{}
\def\maxwidth{\ifdim\Gin@nat@width>\linewidth\linewidth\else\Gin@nat@width\fi}
\def\maxheight{\ifdim\Gin@nat@height>\textheight\textheight\else\Gin@nat@height\fi}
\def\fps@figure{htbp}
  \renewcommand*\contentsname{Table of contents}
  \newcommand\contentsname{Table of contents}
  \renewcommand*\listfigurename{List of Figures}
  \newcommand\listfigurename{List of Figures}
  \renewcommand*\listtablename{List of Tables}
  \newcommand\listtablename{List of Tables}
  \renewcommand*\figurename{Figure}
  \newcommand\figurename{Figure}
  \renewcommand*\tablename{Table}
  \newcommand\tablename{Table}
\definecolor{darkgreen}{RGB}{0,100,0}
\definecolor{darkred}{rgb}{0.64, 0.0, 0.0}
\newcommand{\blue}[1]{{\leavevmode\color{blue}{#1}}}
\DeclareMathOperator*{\argmax}{arg\,max}
\newcommand{\nlim}{\xrightarrow[]{n \to \infty}}
\newcommand{\pnlim}{\xrightarrow[n \to \infty]{P}}
\newcommand{\klim}{\xrightarrow[]{k \to \infty}}
\newcommand{\dnlim}{\xrightarrow[n \to \infty]{d}}
\newcommand{\bmu}{\boldsymbol{\mu}}
\newcommand{\calC}{\mathcal{C}}
\newcommand{\calD}{\mathcal{D}}
\newcommand{\calF}{\mathcal{F}}
\newcommand{\calL}{\mathcal{L}}
\newcommand{\calH}{\mathcal{H}}
\newcommand{\calI}{\mathcal{I}}
\newcommand{\R}{\mathbb{R}}
\newcommand{\N}{\mathbb{N}}
\newcommand{\bigo}{\mathcal{O}}
\newcommand{\smallo}{o}
\newcommand{\bX}{\boldsymbol{X}}
\newcommand{\bY}{\boldsymbol{Y}}
\newcommand{\bR}{\boldsymbol{R}}
\newcommand{\bzet}{\boldsymbol{\zeta}}
\newcommand{\bh}{\boldsymbol{h}}
\newcommand{\bD}{\boldsymbol{D}}
\newcommand{\bS}{\boldsymbol{S}}
\newcommand{\bM}{\boldsymbol{M}}
\newcommand{\ba}{\boldsymbol{a}}
\newcommand{\balp}{\boldsymbol{\alpha}}
\newcommand{\bDel}{\boldsymbol{\Delta}}
\newcommand{\bA}{\boldsymbol{A}}
\newcommand{\bC}{\boldsymbol{C}}
\newcommand{\bZ}{\boldsymbol{Z}}
\newcommand{\bU}{\boldsymbol{U}}
\newcommand{\bP}{\boldsymbol{P}}
\newcommand{\bI}{\boldsymbol{I}}
\newcommand{\beps}{\boldsymbol{\epsilon}}
\newcommand{\bSig}{\boldsymbol{\Sigma}}
\newcommand{\E}{\mathbb{E}}
\newcommand{\colsp}{\mathscr{C}}
\newtheorem{theorem}{Theorem}
\newtheorem{corollary}{Corollary}
\newtheorem{lemma}{Lemma}
\newtheorem{assumption}{Assumption}
\newtheorem{definition}{Definition}
\newcommand{\bH}{\boldsymbol{H}}
\newcommand{\tilX}{\Tilde{\boldsymbol{X}}}
\newcommand{\bgam}{\boldsymbol{\gamma}}
\newcommand{\bzero}{\boldsymbol{0}}
\newcommand{\Prob}{\mathbb{P}}
\newcommand{\anon}{1}
\begin{document}

\def\spacingset#1{\renewcommand{\baselinestretch}%
{#1}\small\normalsize} \spacingset{1}


\if1\anon
{
  \title{\bf 
  Robust Spatial Confounding Adjustment via Basis Voting} 
  \author{Anik Burman, Elizabeth L. Ogburn$^{\ast}$, Abhirup Datta\thanks{
   Address for correspondence: eogburn@jhu.edu; abhidatta@jhu.edu
    }\hspace{.2cm}\\
    Department of Biostatistics, Johns Hopkins University\\}
    \date{}
  \maketitle
} \fi

\if0\anon
{
  \bigskip
  \bigskip
  \bigskip
  \begin{center}
    {\LARGE\bf Robust Spatial Confounding Adjustment via Basis Voting}
\end{center}
  \medskip
} \fi

\bigskip
\begin{abstract}
Estimating effects of spatially structured exposures is complicated by unmeasured spatial confounders, which undermine identifiability in spatial linear regression models unless structural assumptions are imposed. We develop a general framework for  effect estimation in spatial regression models that relaxes the commonly assumed requirement that exposures contain higher-frequency variation than confounders. We propose \textit{basis voting}, a plurality-rule estimator --- novel in the spatial literature --- that consistently identifies causal effects only under the assumption that, in a spatial basis expansion of the exposure and confounder, there exist several basis functions in the support of the exposure but not the confounder. This assumption generalizes existing assumptions of differential basis support used for identification of the causal effect under spatial confounding, and does not require prior knowledge of which basis functions satisfy this support condition. We design this estimator as the mode of several candidate estimators each computed based on a single working basis function. We also show that the standard projection-based candidate estimator typically used in other plurality-rule based methods is inefficient, and provide a more efficient novel candidate. Extensive simulations and a real-world application demonstrate that our approach reliably recovers unbiased causal estimates whenever exposure and confounder signals are separable on a plurality of basis functions. By not relying on higher-frequency variation, our method remains applicable to settings where exposures are smooth spatial functions, such as distance to pollution sources or major roadways, common in environmental studies.
\end{abstract}

\noindent%
{\it Keywords:} spatial statistics; geostatistics; plurality rule; mode estimation; kernel density; environmental exposures 

\vfill

\newpage
\spacingset{1.8} 

\section{Introduction}\label{sec-intro}


Estimating the coefficient corresponding to a structural linear spatial regression of a spatially varying outcome on a spatially varying exposure is a fundamental problem in spatial statistics and environmental epidemiology. However, this task is challenging in the presence of unmeasured spatial variables that simultaneously influence both the exposure (covariate) and the outcome. When such variables are not accounted for, the (conditional) effect size in standard regression models applied to point-referenced spatial data is unidentifiable, leading to biased estimates. This phenomenon is referred to as \textit{spatial confounding}, and it arises frequently in real-world applications where spatial structure is intrinsic to both the exposures and to unmeasured background processes.

An early formal recognition of this issue can be found in \cite{clayton1993spatial}, which demonstrated how including spatially varying error terms in the model changes covariate effect estimates, and this change was favored over the unadjusted estimate which is biased under confounding. Subsequent works have had differing opinions about this \citep{woodard1999estimating,reich2006effects,wakefield2007disease,hodges2010adding,schnell2020mitigating,khan2022restricted,zimmerman2022deconfounding} with arguments put forth both for and against including spatially correlated errors in the model. A particularly influential contribution by \cite{paciorek2010importance} quantified the bias due to spatial confounding in generalized least squares (GLS) settings, relating it to the relative smoothness of the exposure and the unmeasured confounder. This work also underscored the identifiability challenges of separating the exposure effect from the influence of spatial confounding, unless certain restrictions or structural assumptions are imposed. Extensive theoretical and empirical results in \cite{khan2023re} largely agree with the conclusions of \cite{paciorek2010importance} that it is better to include spatial error terms in the analysis model when the exposure is rougher than the unmeasured confounder. \cite{page2017estimation} arrives at similar conclusions for spatial predictions. 
\cite{gilbert2025consistency} proved that the GLS estimator is consistent under varied forms of spatial confounding as long as the exposure has some non-spatial variation. \cite{dupont2022spatial+} proves consistency of their spatial+ method under a similar assumption of Gaussian noise in the exposure. \cite{gilbert2021causal} grounded the spatial confounding problem in the framework of causal inference, going beyond the linear regression case. Their sufficient conditions for identifiability of a causal effect under confounding also assumed some non-spatial variation in the exposure. 

The story is more complicated in settings where the exposure of interest is a smooth function of spatial location, with no non-spatial component.
Estimators like GLS and spatial+, the consistency of which is guaranteed in the presence of noise in the exposure, may no longer work in this smooth-exposure setting. When exposures are \textit{too} smooth relative to the confounder, the exposure effect cannot be identified \citep{bolin2025spatial,datta2025consistent}, but many  methods make progress via the assumption that unmeasured confounders vary mainly at broad spatial scales, while exposures display more localized, high-frequency spatial patterns \citep{thaden2018structural,keller2020selecting,dupont2022spatial+,dupont2023demystifying,guan2023spectral}. In this case the exposure is unconfounded at higher frequencies and, along with the assumption of a linear relationship between the outcome and the exposure, common in spatial regression, this solves the problem of spatial confounding. 

When the unmeasured confounder has the same or finer spatial variation than the exposure, existing techniques generally fail, and standard spatial estimators can perform worse than even the unadjusted estimator as seen in \cite{paciorek2010importance,khan2023re}. This can be viewed through the lens of causal inference as a violation of the \textit{positivity assumption}, which requires that all levels of exposure be observed across levels of confounders \citep{schnell2020mitigating}.  
Examples of exposures that themselves vary at broad spatial scales are common in many environmental studies, e.g., distance to fracking mines \citep{casey2016unconventional,currie2017hydraulic} or to roadways \citep{karner2010near}. \cite{brugge2013highway} explicitly raises concerns about the inability to adjust for co-smooth spatial confounders. Furthermore, the notion of smoothness defined by an ordering of basis functions may not generalize to two or higher-dimensional spatial domains. For example, using the Kronecker product of basis functions of one-dimensional bases creates a two-dimensional basis with two indices dictating smoothness (one along each coordinate);  Approaches like those of \cite{keller2020selecting}, which model the confounder using the lowest indexed basis functions, do not easily generalize for such basis sets. 

In this work, we develop a general framework for effect estimation in spatial linear regression models in the presence of unmeasured spatial confounding, without relying on the assumption that confounding operates more smoothly in space than the exposure. As far as we are aware, ours is the first approach to spatial confounding to relax this assumption. We introduce a  \textit{plurality-rule} assumption inspired by ideas in the robust instrumental variable (IV) literature. The central assumption is that, in a representation of the exposure $X$ and the confounder $U$ along a set of orthonormal basis functions, there are enough basis functions that support the exposure but not the confounder. More precisely, we assume that the set of ratios of the projections of $U$ and $X$ onto each of the basis functions has a larger cluster around zero than around any other value. This assumption subsumes the typical assumption of finer scale variation in the  exposure as a special case. Under the plurality assumption, we show that a set of candidate estimators of the regression effect, one for each basis function, will have the mode at the true coefficient value. We prove that this enables consistent estimation of the effect without prior knowledge of which basis functions support only the exposure. We call the method {\em basis voting}, as each basis yields a candidate estimate, and the most popular candidate -- the mode -- is the final estimate. 

A second contribution of this work is to improve the efficiency of the candidate estimators relative to projection-based estimators that have been used in the IV literature and in a plurality-rule-based deconfounding method recently proposed for time-series data \citep{schur2025decor}. 
We show that in our setting the projection-based candidate estimator is inefficient and present a novel \textit{drop-one} candidate estimator with lower asymptotic variance and finite-sample unbiasedness. We prove consistency and asymptotic normality of the valid  candidate estimators and consistency of the aggregate mode estimator. We empirically demonstrate robustness to the choice of basis functions used for analyzing the data.

The remainder of the paper is organized as follows. Section~\ref{sec:problem-setup} introduces the data-generating mechanism and formalizes the identifiability challenges. Section~\ref{sec:plurality-estimator} presents the proposed basis voting estimator using basis function expansions. Section~\ref{sec:candidates} describes competing methods for constructing candidate estimators and analyzes their properties in large samples.   Section~\ref{sec:simulation} reports results from a comprehensive simulation study comparing our method to existing approaches. Section~\ref{sec:data} showcases an empirical analysis of county-level COVID-19 mortality and air pollution, illustrating the practical implementation and utility of the proposed method. Finally, Section~\ref{sec:discussion} concludes with a discussion of the implications of our findings and directions for future work.

\section{Problem Setup and Assumptions}\label{sec:problem-setup}

Let $\mathcal{D} \subseteq \mathbb{R}^d$ denote a compact spatial domain, and let $\bS_n = \{S_1, \dots, S_n\}$ represent $n$ spatial locations sampled  from a distribution with density $f_S(\cdot)$ on $\mathcal{D}$. At each location $S_i$, we observe an exposure value $X_i \coloneqq X(S_i)$ and an outcome value $Y_i \coloneqq Y(S_i)$. We denote the exposure and outcome vectors by $\bX_n = (X_1, \dots, X_n)^\top$ and $\bY_n = (Y_1, \dots, Y_n)^\top$, respectively. In addition to $X$ and $Y$, we assume there exists an unmeasured confounder $U$ defined at each location, $U_i\coloneqq U(S_i)$, represented by the vector $\bU_n = (U_1, \dots, U_n)^\top$, which is correlated with the exposure and affects the outcome. We assume a univariate exposure, and without loss of generality under the linear model the confounder can be represented by univariate $U$.

In keeping with the existing literature on spatial statistics and spatial confounding, we assume throughout a linear  data generating process (DGP) for the outcome at a generic spatial location $s \in \mathcal{D}$:
\begin{equation}\label{eq:dgp}
Y(s) = \beta X(s) + U(s) + \epsilon(s),
\end{equation}
where $\beta$ is the exposure effect of interest, $\epsilon(s)$ is an independent mean-zero error term, and $U(s)$ is collinear with $X(s)$. If the confounder $U$ were measured, then it would be straightforward to estimate the exposure effect $\beta$, simply by regressing $Y$ in $X$ and $U$ and looking at the estimated slope for $X$. As $U$ is not measured, we cannot directly adjust for it. 
Both $X$ and $U$ are assumed to be deterministic (fixed) continuous functions of the spatial domain from a functional space $\calF(\calD,f_S)$ of continuous functions in $\calD$ equipped with the inner product $\langle g,h\rangle_{f_S}\coloneqq \int_\calD g(s)h(s)f_S(s)ds$ and the corresponding norm $\|\cdot \|_{f_S}$ with respect to the sampling measure $f_S$. Let $\mathcal{H}=\{h_1,h_2, \cdots\}$ denote an infinite-dimensional orthonormal basis of $\calF(\mathcal{D}, f_S)$.  Unlike prior approaches, our method does not require an assumption that the basis functions $h_1,h_2, \cdots$ are ordered by their degree of smoothness or roughness, with $X$ having a longer expansion than $U$. 

Let $X(s) = \sum_{j=1}^\infty \alpha_{x}^{(j)} h_j(s)$ and $U(s) = \sum_{j=1}^\infty \alpha_{u}^{(j)} h_j(s)$. As they are continuous functions on a compact domain, they have finite $L^2$ norms: $\mathbb{E}[X(S)]^2 = \sum_j \left|\alpha_x^{(j)}\right|^2 < \infty$ and $\mathbb{E}[U(S)]^2 = \sum_j \left|\alpha_u^{(j)}\right|^2 < \infty$. By orthogonality, the projections of $X$ on each of the basis functions are given by the coefficients  $\alpha_x^{(j)} = \langle X, h_j\rangle_{f_{S}}$ and similarly for $\alpha_u^{(j)}$. 

Without loss of generality, we partition $\calH$  into three disjoint sets $\calH = \calH_{UX} \sqcup \calH_{X} \sqcup \calH_{U}$ where $\calH_{UX}$ is the set of basis functions shared by both $U$ and $X$, $\calH_X$ is the set unique to $X$ and similarly for $\calH_U$. Figure \ref{fig:dag-spatial} depicts our DGP, with and without the basis expansion. 
\begin{figure}[h]
\centering
\begin{subfigure}[t]{0.45\textwidth}
\centering
\begin{tikzpicture}[thick, >=Stealth, every node/.style={draw, circle, line width=1pt, minimum size=1cm}]
    \node (U) at (0,2) {\(U\)};
    \node (X) at (-2,0) {\(X\)};
    \node (Y) at (2,0) {\(Y\)};

    \draw[->] (U) -- (Y);
    \draw[->] (X) -- (Y);

    \draw[<->, bend left=30] (X) to (U);
\end{tikzpicture}
\caption{Standard confounding setting }
\end{subfigure}
\hfill
\begin{subfigure}[t]{0.45\textwidth}
\centering
\begin{tikzpicture}[thick, >=Stealth, every node/.style={draw, circle, line width=1pt, minimum size=1cm}]
    \node[fill=gray!30] (HX) at (-6,-1) {$\calH_X$};
    \node[fill=gray!30] (HUX) at (-6,1) {$\calH_{UX}$};
    \node[fill=gray!30] (HU) at (-6,3) {$\calH_U$};

    \node (U) at (-4,3) {\(U\)};
    \node (X) at (-4,-1) {\(X\)};
    \node (Y) at (-2,1) {\(Y\)};

    \draw[->] (HU) -- (U);
    \draw[->] (HUX) -- (U);
    \draw[->] (HUX) -- (X);
    \draw[->] (HX) -- (X);

    \draw[->] (U) -- (Y);
    \draw[->] (X) -- (Y);
\end{tikzpicture}
\caption{Expanded setting for spatial confounding with basis components (shaded grey)}
\end{subfigure}

\caption{Two graphical representations of the data-generating mechanism: (a) standard confounding and (b) the expanded graph, which shows the basis representation.}
\label{fig:dag-spatial}
\end{figure}
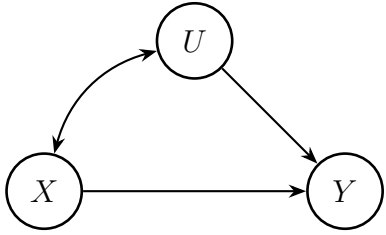
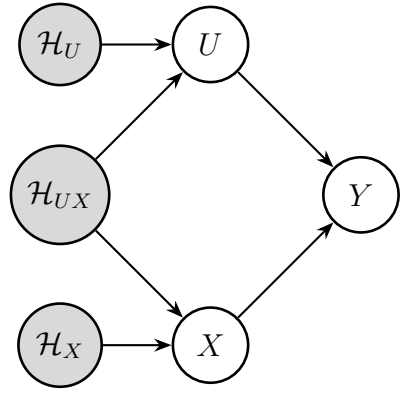
If $\calH_{UX}=\emptyset$, where $\emptyset$ is the null set, then $U$ is orthogonal to $X$ and does not confound the effect of $X$ on $Y$. We therefore focus on the setting where $\calH_{UX}$ is nonempty, representing nontrivial confounding. It may be the case that $\calH_U = \emptyset$, implying that the confounder does not have any independent variation, but we must have $\calH_X \neq \emptyset$ because if not, it is not possible to extract the independent effect of $X$ on $Y$ and thus $\beta$ is unidentifiable. Let us formalize this identification assumption:
\begin{assumption}[Oracle identification assumption]\label{asm:identify}
Let $\balp_x = (\alpha_x^{(1)}, \alpha_x^{(2)}, \dots)$ and $\balp_u = (\alpha_u^{(1)}, \alpha_u^{(2)}, \dots)$ be the sequence of coefficients of $X$ and $U$ for the basis function set $\calH$ respectively, then there exists a non-empty subset $O_{X} \subseteq \mathbb N$ 
such that for all $j \in O_X$, we have $\alpha_u^{(j)} = 0$ and $\alpha_x^{(j)} \neq 0$. 
\end{assumption}

 If the index $j$ of even a single basis function $h_j\in O_X$ were known, that  would suffice to identify $\beta$. This is because the effect of $h_j$ itself on $Y$ is unconfounded, and under linearity, its effect is equal to $\beta$. In practice, however, this will not be the case as $O_X$ is unknown and hence this is only an oracle identification assumption. Previous methods have instead relied on the assumption that the higher frequency basis functions are unconfounded. More formally, let $d$ denote the number of basis functions used in the analysis. Then when using a sufficiently large $d$, these methods essentially assume that the basis functions are ordered in terms of increasing roughness or frequency  $h_1\lessapprox h_2 \lessapprox \cdots \lessapprox h_{d}$, and there exists some $t \leq d \in \N$ such that $\{t+1,\cdots,\} \in O_X$. So, Assumption \ref{asm:identify} is strictly weaker as, rather than requiring unconfounded high-frequency components, we only assume the existence of a non-trivial set $\{h_j: j \in O_X\}$. 
 Figure \ref{fig:coefs-plot} shows two scenarios amenable to our method, one in which the assumption of higher frequency variation in $X$ holds and one in which it is violated. 
\begin{figure}[t!]
    \centering
    \hspace{-1cm} 
    \includegraphics[width=\textwidth]{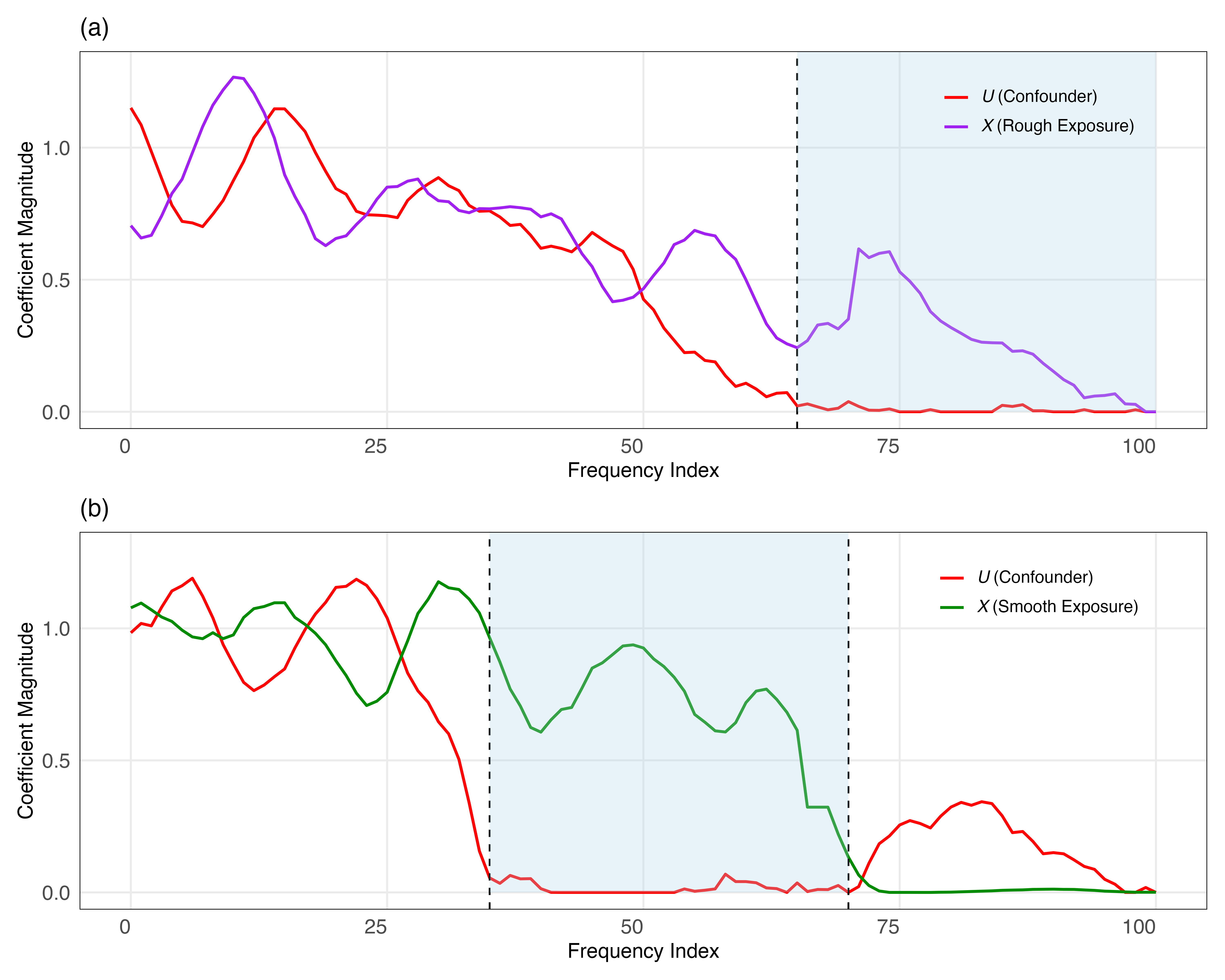}
    \caption{Fourier coefficients of exposures and confounders under varying confounding structures. (a) Exposure $X$ overlaps with unmeasured confounder $U$ at low frequencies, but their behavior diverges at higher frequencies, where $U$ fades and $X$ retains support. (b) $X$ shares low-frequency components with $U$ but decays faster, indicating a smoother exposure than $U$ with an unconfounded region in the middle frequencies. In both Figures, shaded regions mark unconfounded frequency ranges. 
}
    \label{fig:coefs-plot}
\end{figure}
A helpful way to interpret the subset of basis functions $\{ h_j : j \in O_X\}$ is through the lens of \textit{instrumental variable (IV)} methods in causal inference. In IV frameworks, one seeks an instrument $Z$ that satisfies two key properties: (i) it is correlated with the exposure $X$, and (ii) it is independent of the unmeasured confounder $U$ and the outcome $Y$, except through its effect on $X$. When such a valid instrument exists, projecting both $X$ and $Y$ onto the space spanned by $Z$ and regressing the projected outcome on the projected exposure yields a consistent estimate of the causal effect.
In our spatial basis framework, the basis functions $\{h_j\}$ can be thought of as candidate instruments. Under Assumption~\ref{asm:identify}, those $h_j$ with $j \in O_X$ are akin to \textit{valid instruments}, in the sense that they influence $X$ but not $U$, and hence satisfy the classical exclusion restriction. The remaining basis functions $h_j$ for $j \notin O_X$ are \textit{invalid instruments}, since they affect both $X$ and $U$, and their use would introduce bias into causal effect estimation.
 
To illustrate this connection, suppose we knew the set $O_X$, and let $O_{X,d} := O_X \cap \{1,2,\ldots,d\}$ for a sufficiently large $d$ such that $O_{X,d}$ is non-empty. Then a least squares regression of $Y$ onto the basis functions in $O_{X,d}$, which gives a consistent estimator of $\beta$, is equivalent to the standard IV two-stage least squares procedure that projects $X$ onto the span of $\{h_j : j \in O_{X,d}\}$ and regresses $Y$ onto the predicted values.
This equivalence is due to the fact that the projection of $X$ onto the span of $\{h_j : j \in O_{X,d}\}$ simply returns a linear combination of the basis functions in that set.  
This IV-based perspective is consistent with the framework of \citet{woodward2024instrumental}, who show how a variety of existing approaches to unmeasured spatial confounding can be unified under a common IV interpretation. 

The challenge, of course, is that $O_{X}$ is unknown as we no longer assume that the unconfounded variation in $X$ is at the end of the frequency spectrum of the chosen set of $d$ basis functions. Since we do not observe $U$, we cannot directly verify which basis functions fall in the support of $X$ but not $U$.
In the IV literature, several recent approaches have been developed to address this problem under a \textit{majority-rule} \citep[]{kang2016instrumental,windmeijer2019use} or \textit{plurality-rule}  \citep[]{hartwig2017robust,guo2018confidence} assumption, namely that more than 50\% of candidate instruments (majority rule) or that the largest subset of candidate instruments (plurality rule) 
result in estimators with the same unconfounded probability limit.  A similar idea was proposed in the time-series setting with unmeasured temporal confounders by \citet{schur2025decor}. They propose a  \textit{DecoR} estimator that  projects both $X$ and $Y$ onto a collection of basis functions and defines aggregate estimators from these projections based on a majority-rule assumption. 
They frame the problem as an adversarial outlier problem and use the Torrent algorithm \citep{bhatia2015robust} in a robust regression framework to estimate the effect size, based on a user-specified upper bound on cardinality of the set of invalid candidates, which is assumed to be small relative to the sample size. We compare \textit{DecoR}  to our method in a one-dimensional time series simulation setup in the supplementary material (\ref{supsec:decor-vs-bv}) and see that DecoR fails to recover the effect when the majority rule does not hold but the plurality rule does.

Inspired by these ideas, we propose a \textit{plurality-rule estimator} which is defined as the mode of a set of candidate estimators, where each candidate is constructed based on each of the spatial basis functions $\{h_j\}_{j=1}^d$. This approach allows us to recover $\beta$ even without knowledge of $O_{X}$, under a plurality-rule assumption that enough of the candidate basis functions are valid, i.e., contribute to $X$ but not $U$. Most of the existing methods based on majority- or plurality-rule assumptions use a two-stage approach with sample splitting, in which they first select a set of valid candidates and then use them in a second estimation step. In contrast, we propose a mode-based estimator that obviates sample splitting as it is often hard to justify in spatial setting. Instead, our drop-one candidate estimator includes invalid basis functions as precision variables to increase efficiency.  \cite{hartwig2017robust} proposed a similar mode-based estimator with projection-based candidates in the Mendelian randomization IV setting, however they did not provide any theoretical guarantees for their procedure.
In the next section, we formally define the plurality-rule assumption and our estimator, present its theoretical guarantees, and describe its advantages in comparison to existing approaches.

\section{Plurality Rule Mode Estimator}\label{sec:plurality-estimator}

We say that a basis function $h_j$ is \textit{valid} if it gives rise to a candidate estimator $\hat \beta_j$ that is consistent for $\beta$. In order to be valid, the basis function must capture variation in the exposure but not in the unmeasured confounder,  i.e., it must isolate unconfounded exposure variation. This is the case when 
$j \in O_X$, i.e.,  $\alpha_u^{(j)}/\alpha_x^{(j)}=0$. On the other hand $h_j$ (and, $\hat \beta_j$) will be invalid if $\alpha_u^{(j)}/\alpha_x^{(j)}=\nu$ for some value $\nu \neq 0$. 

The \textit{plurality rule} assumes that invalid estimators tend to have distinct biases and probability limits, and therefore in the set of probability limits of all candidate estimators, the largest cluster, i.e., the \textit{plurality}, corresponds to the set of valid estimators. In our case, we will show that this assumption clusters the basis functions $h_j$ in terms of the ratio $\nu$, and posits that the biggest cluster is at $0$. The plurality rule does not enforce any minimal cardinality of $O_X$ in terms of the sample size. In fact, $|O_X|$ can be as low as $2$ (provided no two of the remaining ratios are equal).  We define the plurality rule formally below. 

For any $d' \in \mathbb N$, define the set $S_{X,d'} = \left\{1\leq k \leq d': \alpha^{(k)}_x \neq 0\right\}$, that is, the set denoting the indices of the basis functions that have non-zero projection on $X$.  Next, define the set of confounder-exposure projection ratios for the basis functions with indices in the set $S_{X,d'}$ as:
 \begin{equation}\label{eq:ratio}
        R_{d'} = \left\{r_{k}\coloneqq \dfrac{\alpha_u^{(k)}}{\alpha_x^{(k)}}: k \in S_{X,d'} \text{ i.e } \alpha_x^{(k)} \neq 0\right\}
\end{equation}

Let $\calC_{n,d'} = \left\{\hat{\beta}_{k,n}: k \in S_{X,d'}\right\}$ be a class of candidate estimators, where $\hat{\beta}_{k,n}$ corresponds to the basis function $h_k$ such that for each $k\in S_{X,d'}$, $\hat{\beta}_{k,n} \pnlim \beta+r_{k}$. We denote in-probability convergence using $\pnlim$ and weak convergence using $\dnlim$ throughout. Specific candidate estimators will be presented in Section \ref{sec:candidates}, but the main idea of the plurality-rule based estimator is agnostic to this choice. For any $\nu \in \mathbb R$, define
\begin{equation}\label{eq:ratio-value}
    \calI_{\nu,d'} = \left\{r_{k} \in R_{d'}: r_{k} = \nu\right\}
\end{equation}
Then, $k \in O_{X,d'} \iff r_{k} \in \calI_{0,d'}$ and this one-to-one correspondence gives us the rationale for the next assumption, which will help us estimate $\beta$ without the knowledge of $O_X$:
\begin{assumption}[Plurality Rule]\label{asm:estimate}  
    There exists some $d_0\in \N$, such that for all $d^\prime \geq d_0$, $\left|\calI_{0,d^\prime}\right| > \left|\calI_{\nu,d^\prime}\right|$ for any $\nu \neq 0$.
\end{assumption}

The plurality rule  posits that when considering a sufficiently large number of basis functions $h_k$, the cluster of ratios $r_{k}$ at $0$ is larger than clusters of these ratios at any other real number $\nu$. 
This implies that when using  $d \geq d_0$  of basis functions for the analysis, the mode of the set $R_d$ is $0$. To illustrate that this is a reasonable assumption in real world applications, we extracted two meteorological variables--monthly minimum temperature $(^\circ C)$ and total precipitation (mm), for the year 2024-- from the TerraClimate dataset (\cite{abatzoglou2018terraclimate}), which provides monthly climate variables globally at $\sim 4$ km spatial resolution. We averaged them over the 12 months and restricted the domain to the contiguous United States. These variables have an empirical correlation of 0.42, and in an analysis either of them can play the role of an exposure or an unmeasured confounder. Here, we have considered a set of basis functions that are the Cartesian product of two one-dimensional Fourier basis functions and thus lack a natural frequency ordering.  We projected both variables onto this basis function set and looked at both the basis coefficients as well as the mode of $R_d$ for varying choices of $d$. 
Figure \ref{fig:real-data-ratio-demo} presents the illustration.

\begin{figure}[t!]
  \centering
  \begin{subfigure}[t]{0.9\textwidth}
    \includegraphics[width=\textwidth]
    {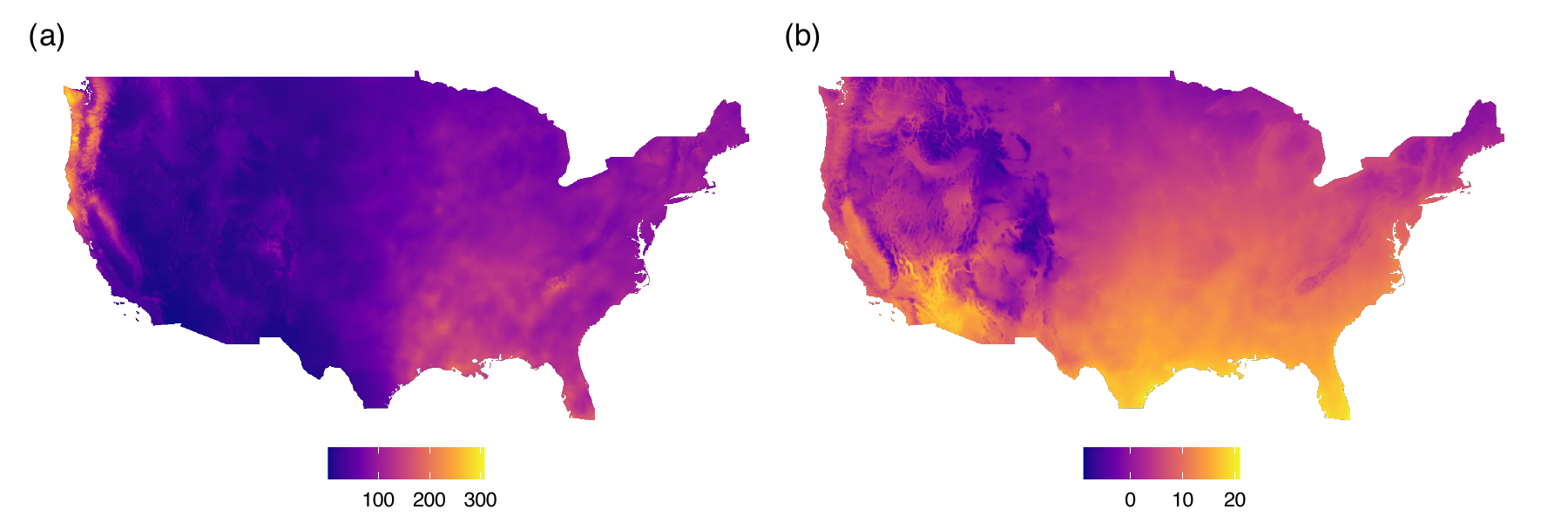} 
  \end{subfigure}
  \hfill
  \begin{subfigure}[t]{0.98\textwidth}
    \includegraphics[width=\textwidth]{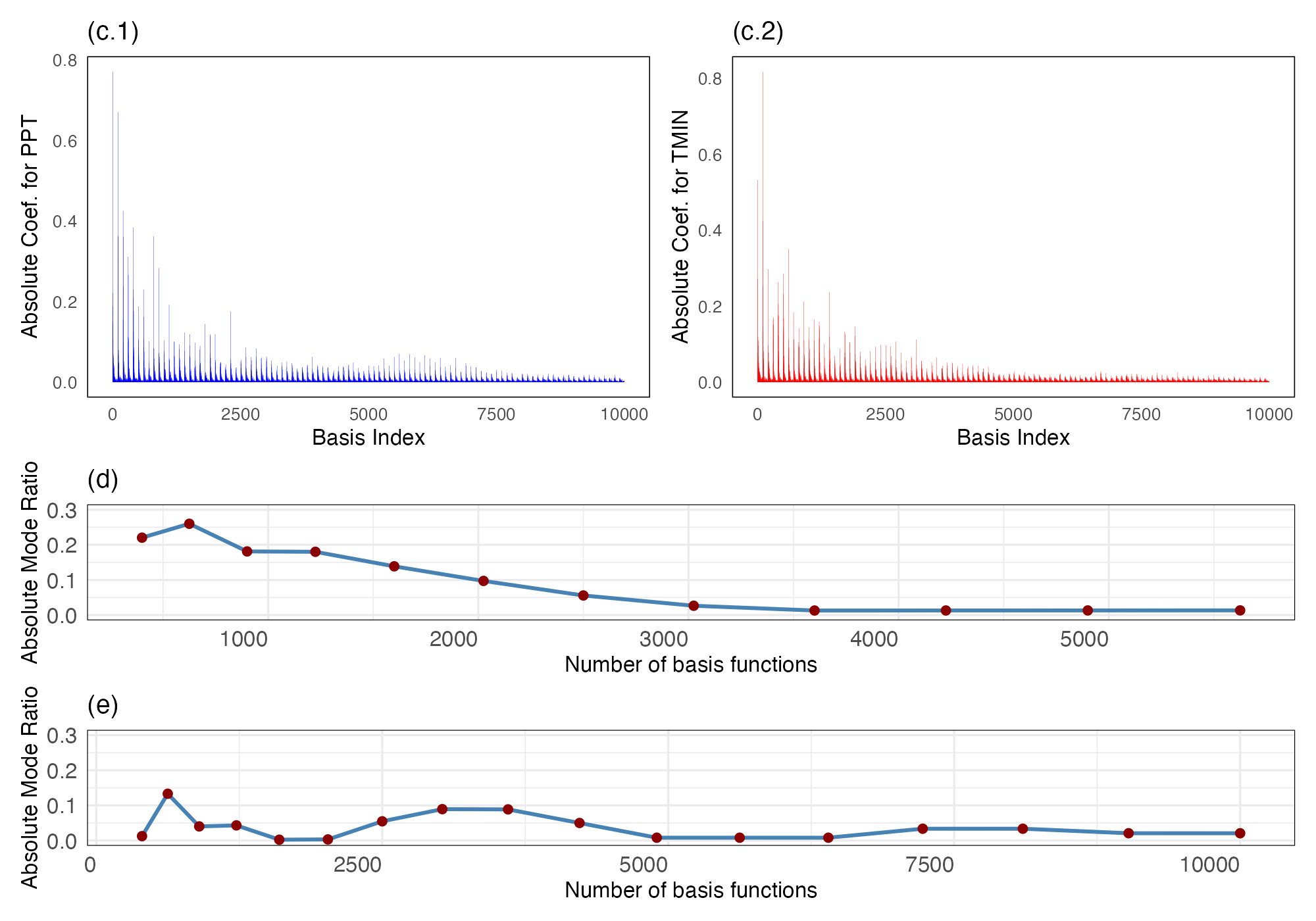}
  \end{subfigure}
  \caption{(a) Annual mean precipitation (Ppt)  in mm and (b) annual mean minimum temperature (Tmin) in $^\circ C$ for 2024 over the contiguous U.S. Absolute Fourier basis coefficients for (c.1) Ppt and (c.2) Tmin using 10,000 basis functions. Mode of the density of absolute values of the coefficient ratios: (d) Ppt/Tmin and (e) Tmin/Ppt across varying numbers of basis functions.}
    \label{fig:real-data-ratio-demo}
\end{figure}
Comparing Panels (a) and (b), that the map of temperature looks much smoother than that of precipitation. This is  confirmed by the basis coefficient plots in panels (c.1) and (c.2), which show that the support of precipitation has a longer tail. Existing spatial deconfounding methods would thus do well when  precipitation is the exposure and  temperature is the unmeasured confounder, but not the other way around. To assess whether the plurality rule holds in either direction, we plot the modes of $R_d$ as a function of $d$ both ways -- with precipitation as confounder and temperature as exposure in Panel (d), and vice versa in Panel (e). We see that in both plots, the modes decay to $0$ and stabilize as $d$ increases, suggesting that the plurality rule is satisfied in either direction, and estimators based on the plurality rule can account for a wider range of scenarios of spatial confounding.

We formalize the estimation of the mode below. As $\hat \beta_{k,n} \pnlim \beta + r_{d}$, if the mode of $R_d$ is $0$, intuitively the mode of $\calC_{n,d}$ will be $\beta + \text{mode}(R_d) = \beta$. Thus the mode of the candidate estimators can be a good estimator for $\beta$. 
In practice, it is likely that each value of the estimates $\hat \beta_{k,n}$ will be unique, and we use a kernel density smoothing to estimate the mode. 
\begin{definition}\label{def:kde}
    \textit{For a symmetric Kernel function $K$, and a finite set $\ba_m = \{a_1, \cdots, a_m\}$ with each $ a_i\in \mathbb \R$, we can define a Kernel Density, with bandwidth $h$, as:
    $$
    KD(\ba_m,K,h,x) = \dfrac{1}{mh}\sum_{i=1}^m K\left(\dfrac{x - a_i}{h}\right). 
    $$}
\end{definition}
The mode of a KDE based on a finite set of values with unique mode also has the same mode, for small enough bandwidth (See Appendix \ref{lem:plurality-guarantee}). 
Here, the bandwidth will depend on the functional form of the kernel $K$ and the separation from $0$ of any other cluster of the set $R_d$. As the limit points of $\calC_{n,d}$ are $\beta + R_d:= \{\beta + r_k : r_k \in R_d\}$, using shift-invariance of mode\footnote{If $X$ is a random variable with mode at $\mu$, then for any $c \in \mathbb R$, $X+c$ has mode at $\mu+c$.}, we then have  
$\underset{x\in\R}{\argmax} \ KD(\beta + R_{d},K,h,x) = \beta + \underset{x\in \R}{\argmax} \ KD(R_{d},K,h,x) = \beta$. 
This motivates us to propose the following ensemble estimator by combining all the candidate estimators from the class $\calC_{n,d}$: 
\begin{equation}\label{eq:mode_est}
    \hat{\beta}_n = 
    \underset{x\in\R}{\argmax} 
    \ KD(\calC_{n,d},K,h,x).
\end{equation}
We call our method {\em `basis voting'} as the process to select the final estimator $\hat \beta_n$ mimics a voting process. Each basis $h_k$, $k=1,\ldots,d$, represents a voter who votes for a candidate $\hat \beta_{k,n}$, and the most popular candidate (formally, the KDE mode) among the set of candidates $\calC_{n,d} = \{\hat \beta_{k,n}\}$ is the winner, i.e., the final estimator. 
The following theorem gives us theoretical guarantees for the mode estimator. All proofs are in the supplement.
\begin{theorem}\label{thm:mode-consistent}
    Consider the class of estimators $\calC_{n,d} = \left\{\hat{\beta}_{k,n}: k \in S_{X,d}\right\}$ such that for each $k \in S_{X,d}$, $\hat{\beta}_{k,n} \pnlim \beta + r_k$ where $r_k \in R_d$ defined in Equation \ref{eq:ratio}. Then under Assumption \ref{asm:estimate}, for any $d \geq d_0$, a Lipschitz continuous Kernel $K$, and bandwidth $h \leq h_0$ where $h_0$ is some constant depending on $K$ and $\underset{r_k \in R_{d}}{\min} \left|r_k\right|$, 
    with $
    \hat{\beta}_n$ as defined in (\ref{eq:mode_est}), we have  
    $\hat \beta_n \pnlim \beta 
    $.
\end{theorem}
From Theorem \ref{thm:mode-consistent}, we have a consistent estimator of $\beta$ under the plurality rule without any assumption about the relative smoothness of the exposure and confounder. The result holds as long as we are using $d \geq d_0$ number of basis functions. In practice, as $d_0$ is unknown, we recommend repeating the analysis for a range of choice of $d$ and checking that the estimates stabilize after sufficiently large $d$. 

The consistency of the ensemble mode-based basis voting estimator relies on the convergence in probability of the candidate estimators $\hat \beta_{k,n}$ to their respective limits $\beta + r_k$. So, the performance and efficiency of our estimator of $\beta$ requires well-behaved and efficient candidate estimators. The following section presents two such candidate estimators. 

\section{Candidate Estimators}\label{sec:candidates}
The building blocks of the mode estimator $\hat \beta_n$ are the candidate estimators $\hat \beta_{j,n}$ and in this section, we present two sets of candidates. We start with a projection-based estimator similar to two-stage least squares and then we introduce a new drop-one candidate estimator, comparing its performance with the projection-based estimator.  Our goal in this section is to demonstrate how our new drop-one candidate estimator can improve on both bias and efficiency in estimating $\beta$. 

\subsection{Projection-Based Estimator}\label{sec:projn-method}
We begin with a projection-based candidate, similar to two-stage least squares. In the context of time-series deconfounding, this candidate has been used by \citep{schur2025decor} under a special case of the plurality-type assumption. Consider any single basis function $h_j(\cdot)$ with index $j \in S_{X,d}$, indicating that $h_j$ contributes to the spatial variation in the exposure $X$. 
We use this basis function to define a projection-on-projection estimator: For a given set of sampled spatial coordinates $(S_1, \dots, S_n)$, let $\bh_{n,j} = (h_j(S_1), \dots, h_j(S_n))^\top$ denote the vector of evaluated values for the basis $h_j$. The projection matrix onto $\bh_{n,j}$ is given by
$\bP_{n,j} \coloneqq \bh_{n,j}\left({\bh_{n,j}^\top\bh_{n,j}}\right)^{
-1}\bh_{n,j}^\top.$
Then, for any $j \in S_{X,d}$, the projection candidate is computed as the OLS between the projections of the outcome and the exposure, i.e., 
\begin{equation}\label{eq:iv-est}
    \hat{\beta}_{j,n}^{PROJ}\coloneqq \left[\left(\bP_{n,j}\bX_{n}\right)^\top\left(\bP_{n,j}\bX_{n}\right)\right]^{-1}\left(\bP_{n,j}\bX_{n}\right)^\top\bP_{n,j}\bY_{n}.
\end{equation}
Conditional on the sampled locations $\bS_n =(S_1,\cdots,S_n)^\top$,
$\E\left[\hat{\beta}_{j,n}^{PROJ}\right] =
    \beta + \dfrac{\bh_{n,j}^\top\bU_{n}}{\bh_{n,j}^\top\bX_{n}}$. The following theorem provides large sample properties of the estimator. 
\begin{theorem}\label{thm:IV-est-prop}
    Under the DGP as in Equation \ref{eq:dgp} and the estimator as in Equation \ref{eq:iv-est}, assuming a sampling density $f_S > 0$ on the compact domain $\calD$ for the i.i.d. sampled coordinates with $\epsilon(S_i)|S_i$'s being i.i.d with mean 0 and variance $\sigma^2_\epsilon$, defining $r_{j} = \alpha_u^{(j)}/\alpha_x^{(j)}$ for $j$ such that $\alpha_x^{(j)}\neq 0$ 
    , we have:
    \begin{enumerate}[(a)]
        \item $\hat{\beta}_{j,n}^{PROJ}\pnlim \beta + r_j$
        \item $    \sqrt{n}\left(\hat{\beta}_{j,n}^{PROJ}-(\beta+ r_{j})\right) \dnlim N\left(0, \sigma^2_{j}\right)$
            \\where $\sigma^2_{j} \coloneqq \frac{\sigma^2_{U,j} + \sigma^2_\epsilon}{{\alpha_x^{(j)}}^2} + \alpha_u^{(j)}\left(\frac{\alpha_u^{(j)}}{{\alpha_x^{(j)}}^4}\sigma^2_{X,j} - \frac{2}{{\alpha_x^{(j)}}^3}\sigma_{U,X,j}\right)$ for 
$\sigma^2_{U,j}\coloneqq \|h_j U\|^2_{f_S} - {\alpha_u^{(j)}}^2$, $\sigma^2_{X,j}\coloneqq \|h_j X\|^2_{f_S} - {\alpha_x^{(j)}}^2$ and $\sigma_{U,X,j}\coloneqq \langle h_jX, h_jU\rangle_{f_S} - {\alpha_u^{(j)}}{\alpha_x^{(j)}}$.
\end{enumerate}
\end{theorem}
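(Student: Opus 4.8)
The plan is to first collapse the matrix expression in \eqref{eq:iv-est} to an explicit scalar ratio, and then treat part (a) as a law-of-large-numbers statement and part (b) as a central-limit-theorem statement for i.i.d.\ averages over the sampled locations. Since $\bP_{n,j} = \bh_{n,j}(\bh_{n,j}^\top\bh_{n,j})^{-1}\bh_{n,j}^\top$ is a rank-one orthogonal projection, it is symmetric and idempotent, so $(\bP_{n,j}\bX_n)^\top(\bP_{n,j}\bX_n) = \bX_n^\top\bP_{n,j}\bX_n$ and $(\bP_{n,j}\bX_n)^\top\bP_{n,j}\bY_n = \bX_n^\top\bP_{n,j}\bY_n$. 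Writing $\bP_{n,j} = \bh_{n,j}\bh_{n,j}^\top/(\bh_{n,j}^\top\bh_{n,j})$ and cancelling the common scalar factors, the estimator reduces to
$$\hat\beta_{j,n}^{PROJ} = \frac{\bh_{n,j}^\top\bY_n}{\bh_{n,j}^\top\bX_n}.$$
Substituting the DGP $\bY_n = \beta\bX_n + \bU_n + \beps_n$ yields $\hat\beta_{j,n}^{PROJ} = \beta + (\bh_{n,j}^\top\bU_n + \bh_{n,j}^\top\beps_n)/(\bh_{n,j}^\top\bX_n)$, which upon conditioning on $\bS_n$ and using $\E[\beps_n\mid\bS_n]=\bzero$ also recovers the stated conditional mean $\beta + \bh_{n,j}^\top\bU_n/\bh_{n,j}^\top\bX_n$.

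For part (a), I would divide numerator and denominator by $n$ and recognize each as an i.i.d.\ average in the sampled locations. By the weak law of large numbers, $n^{-1}\bh_{n,j}^\top\bX_n = n^{-1}\sum_i h_j(S_i)X(S_i)\pnlim \E[h_j(S)X(S)] = \langle h_j, X\rangle_{f_S} = \alpha_x^{(j)}$, and likewise $n^{-1}\bh_{n,j}^\top\bU_n \pnlim \alpha_u^{(j)}$. For the noise term, conditioning on $S_i$ with $\E[\epsilon(S_i)\mid S_i]=0$ gives $\E[h_j(S)\epsilon(S)]=0$, so $n^{-1}\bh_{n,j}^\top\beps_n\pnlim 0$. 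Since $j\in S_{x,d}$ guarantees $\alpha_x^{(j)}\neq 0$, Slutsky's theorem gives $\hat\beta_{j,n}^{PROJ}\pnlim \beta + \alpha_u^{(j)}/\alpha_x^{(j)} = \beta + r_{j,d}$.

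For part (b), I would center at $\beta + r_{j,d}$, multiply by $\sqrt n$, and write
$$\sqrt n\,(\hat\beta_{j,n}^{PROJ} - \beta - r_{j,d}) = \frac{n^{-1/2}\sum_{i=1}^n Z_i}{n^{-1}\bh_{n,j}^\top\bX_n}, \qquad Z_i := h_j(S_i)\big(U(S_i) - r_{j,d}X(S_i)\big) + h_j(S_i)\epsilon(S_i).$$
The crucial observation is that the defining choice $r_{j,d}=\alpha_u^{(j)}/\alpha_x^{(j)}$ makes $\E[Z_i] = \alpha_u^{(j)} - r_{j,d}\alpha_x^{(j)} = 0$, so the $Z_i$ are i.i.d.\ mean-zero; this centering is what lets me avoid a full bivariate delta-method argument. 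Finite variance of $Z_i$ is exactly where the $\calL^4$ assumption enters: by Cauchy--Schwarz, $\E[h_j^2U^2]\le (\E[h_j^4])^{1/2}(\E[U^4])^{1/2}<\infty$ and similarly $\E[h_j^2X^2]<\infty$, so all relevant second moments exist. The univariate CLT then gives $n^{-1/2}\sum_i Z_i\dnlim N(0,\mathrm{Var}(Z))$, and Slutsky's theorem with the denominator converging to $\alpha_x^{(j)}$ yields asymptotic normality with variance $\mathrm{Var}(Z)/(\alpha_x^{(j)})^2$.

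It then remains to identify $\mathrm{Var}(Z)/(\alpha_x^{(j)})^2$ with $\sigma^2_{j,d}$. I would split $Z = g(S) + h_j(S)\epsilon(S)$ with $g := h_j(U - r_{j,d}X)$ a deterministic function of $S$; the two pieces are uncorrelated because $\E[h_j^2(U-r_{j,d}X)\epsilon] = \E[h_j^2(U-r_{j,d}X)\,\E(\epsilon\mid S)]=0$, and orthonormality $\|h_j\|_{f_S}^2=1$ gives $\mathrm{Var}(h_j\epsilon)=\sigma_\epsilon^2\,\E[h_j^2]=\sigma_\epsilon^2$. Since the three defined quantities are precisely $\sigma_{U,j}^2=\mathrm{Var}(h_jU)$, $\sigma_{X,j}^2=\mathrm{Var}(h_jX)$, and $\sigma_{U,X,j}=\mathrm{Cov}(h_jU,h_jX)$, the variance of $g$ expands directly as $\mathrm{Var}(h_jU - r_{j,d}h_jX) = \sigma_{U,j}^2 - 2r_{j,d}\sigma_{U,X,j} + r_{j,d}^2\sigma_{X,j}^2$. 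Adding $\sigma_\epsilon^2$, dividing by $(\alpha_x^{(j)})^2$, and inserting $r_{j,d}=\alpha_u^{(j)}/\alpha_x^{(j)}$ reproduces $\sigma^2_{j,d}$ exactly. The main obstacle here is not conceptual but bookkeeping — keeping the $\calL^4$-integrability justification clean and verifying that the centered $\sigma$-definitions absorb all the cross terms — while the genuinely load-bearing step is the mean-zero centering of $Z_i$, which simultaneously delivers both consistency to $\beta+r_{j,d}$ and the clean variance formula.
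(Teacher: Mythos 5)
Your proposal is correct, and its proof of part (b) takes a genuinely different route from the paper's. Part (a) is essentially identical in both: reduce the rank-one projection to the scalar ratio $\bh_{n,j}^\top\bY_n/\bh_{n,j}^\top\bX_n$, apply the WLLN to each inner product, and use $\alpha_x^{(j)}\neq 0$ with Slutsky (the paper routes the identity $\E[h_j(S)X(S)]=\alpha_x^{(j)}$ through a truncation-plus-DCT argument, whereas you invoke it directly as the definitional inner product $\langle h_j,X\rangle_{f_S}$; both are legitimate since the expansion converges in $\calL^2$). For part (b), the paper proves a bivariate CLT for $(\bar A_n,\bar B_n)$ with $A_i=h_j(S_i)\bigl(U(S_i)+\epsilon_i\bigr)$, $B_i=h_j(S_i)X(S_i)$, and then applies the delta method to $g(x,y)=x/y$, obtaining $\sigma^2_{j,d}$ as the quadratic form $\nabla g^\top\bSig_{AB}\nabla g$. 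You instead self-center the numerator: since $r_{j,d}=\alpha_u^{(j)}/\alpha_x^{(j)}$ makes $Z_i=h_j(S_i)\bigl(U(S_i)-r_{j,d}X(S_i)+\epsilon_i\bigr)$ exactly mean zero, the decomposition $\sqrt n\bigl(\hat\beta^{PROJ}_{j,n}-\beta-r_{j,d}\bigr)=\bigl(n^{-1/2}\sum_i Z_i\bigr)/\bigl(n^{-1}\bh_{n,j}^\top\bX_n\bigr)$ needs only a univariate CLT plus Slutsky. I checked the variance bookkeeping: $\mathrm{Var}(Z)=\sigma^2_{U,j}-2r_{j,d}\,\sigma_{U,X,j}+r_{j,d}^2\,\sigma^2_{X,j}+\sigma^2_\epsilon$ (the $\epsilon$ cross terms vanish by conditioning on $S$, and $\mathrm{Var}(h_j\epsilon)=\sigma^2_\epsilon$ by orthonormality), and dividing by ${\alpha_x^{(j)}}^2$ and substituting $r_{j,d}=\alpha_u^{(j)}/\alpha_x^{(j)}$ reproduces $\sigma^2_{j,d}$ term by term, so the two routes agree. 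What each buys: your decomposition is algebraically exact — the only asymptotic step beyond the CLT is replacing the random denominator by its probability limit — so you avoid the delta-method linearization entirely and make transparent that only the single variance $\mathrm{Var}(Z)$, not the full matrix $\bSig_{AB}$, governs the limit; the paper's bivariate-CLT-plus-delta-method machinery is more mechanical and is the template it reuses in the proof of Theorem \ref{thm:drop-est-prop}, where a product map $g(x,y)=xy$ is handled the same way. Both arguments require the same moment conditions, which you correctly supply via Cauchy--Schwarz and the $\calL^4$ assumption on $X$, $U$, and the basis.
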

The theorem proves that the projection-based candidate estimators $\hat \beta_{j,n} \pnlim \beta + r_j$ thereby satisfying the requirement to be used for the mode estimator (\ref{eq:mode_est}). It also provides the asymptotic variance of the candidates, which, as we will show, is important in determining efficiency. An immediate corollary from the theorem is the consistency and asymptotic normality of the valid projection candidates. 
\begin{corollary}\label{cor:IV-consistent}
    Under the setup mentioned in Theorem \ref{thm:IV-est-prop}, for any $j \in O_X$ defined in Assumption \ref{asm:identify}, we have $\hat{\beta}_{j,n}^{PROJ} \pnlim \beta$, and after proper scaling, we have the asymptotic distribution  $\sqrt{n}\left(\hat{\beta}_{j,n}^{PROJ}-\beta\right)  \dnlim N\left(0,  \frac{\sigma^2_{U,j}+ \sigma^2_\epsilon}{{\alpha_x^{(j)}}^2}\right)$
    where $\sigma^2_{U,j}$ reduces to $\|h_j U\|^2_{f_S}$.
\end{corollary}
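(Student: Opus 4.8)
The plan is to obtain Corollary~\ref{cor:IV-consistent} as an immediate specialization of Theorem~\ref{thm:IV-est-prop} to indices at which the confounder has no support, so essentially no new analytic work is required beyond a substitution. First I would record the set-membership fact that makes the theorem directly applicable: Assumption~\ref{asm:identify} defines $O_X$ so that every $j \in O_X$ satisfies $\alpha_u^{(j)} = 0$ together with $\alpha_x^{(j)} \neq 0$. Since $S_{x,d} = \{1 \le k \le d : \alpha_x^{(k)} \neq 0\}$, this gives the inclusion $O_X \subseteq S_{x,d}$, so Theorem~\ref{thm:IV-est-prop} and all its distributional hypotheses (the DGP in Equation~\ref{eq:dgp}, the i.i.d.\ sampling from $f_S > 0$ on compact $\calD$, and the mean-zero, variance-$\sigma^2_\epsilon$ errors) apply verbatim to each such $j$.

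For the consistency claim, I would simply evaluate the limit $r_{j,d} = \alpha_u^{(j)}/\alpha_x^{(j)}$ at $\alpha_u^{(j)} = 0$, giving $r_{j,d} = 0$. Part~(a) of Theorem~\ref{thm:IV-est-prop} then yields $\hat{\beta}_{j,n}^{PROJ} \pnlim \beta + r_{j,d} = \beta$, which is exactly the desired convergence.

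For the asymptotic distribution I would substitute $\alpha_u^{(j)} = 0$ into the general variance expression $\sigma^2_{j,d}$ from part~(b). The entire second summand carries an overall factor of $\alpha_u^{(j)}$, so it vanishes, leaving $\sigma^2_{j,d} = (\sigma^2_{U,j} + \sigma^2_\epsilon)/(\alpha_x^{(j)})^2$. I would then simplify $\sigma^2_{U,j}$ using its definition $\sigma^2_{U,j} = \|h_j U\|^2_{f_S} - (\alpha_u^{(j)})^2$: at $\alpha_u^{(j)} = 0$ this collapses to $\sigma^2_{U,j} = \|h_j U\|^2_{f_S}$. Combining these, part~(b) reads $\sqrt{n}\big(\hat{\beta}_{j,n}^{PROJ} - \beta\big) \dnlim N\!\big(0,\, (\|h_j U\|^2_{f_S} + \sigma^2_\epsilon)/(\alpha_x^{(j)})^2\big)$, matching the stated corollary.

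There is no genuine obstacle here; the only thing to be careful about is bookkeeping, namely confirming that every hypothesis of Theorem~\ref{thm:IV-est-prop} is inherited at indices $j \in O_X$ (it is, via $O_X \subseteq S_{x,d}$) and that the vanishing of the second variance term is legitimate rather than an artifact of a $0/0$ form — which it is, since $\alpha_u^{(j)}$ enters as a clean multiplicative factor while the denominators involve only the nonzero $\alpha_x^{(j)}$. Thus the corollary follows by direct substitution.
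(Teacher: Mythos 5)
Your proposal is correct and follows exactly the paper's route: the paper likewise obtains the corollary by noting $O_X \subseteq S_{x,d}$ and setting $\alpha_u^{(j)}=0$ in Theorem~\ref{thm:IV-est-prop}, so that $r_{j,d}=0$ and the variance collapses to $(\sigma^2_{U,j}+\sigma^2_\epsilon)/{\alpha_x^{(j)}}^2$ with $\sigma^2_{U,j}=\|h_jU\|^2_{f_S}$. Your added bookkeeping checks (the inclusion $O_X \subseteq S_{x,d}$ and ruling out a $0/0$ artifact) are sound and, if anything, slightly more explicit than the paper's one-line deduction.
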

The corollary can be deduced directly from Theorem \ref{thm:IV-est-prop} by setting $\alpha_u^{(j)}=0$ as $j \in O_X$. This implies that if we can choose a valid basis function $h_j$, i.e., $h_j$ is present in the expansion of $X$ and not in the expansion of $U$, then we can get a consistent estimate of $\beta$. Note that, excluding pathological cases, $\sigma^2_{U,j}=\int_\calD (h_j(s)U(s))^2f_S(s)ds - {\alpha_u^{(j)}}^2 = \text{Var}(h_j(S)U(S)) > 0$  even if $\alpha_u^{(j)}=0$. So the asymptotic variance of $\hat{\beta}_{j,n}^{PROJ}$ is strictly greater than $\sigma^2_\epsilon/{\alpha_x^{(j)}}^2$. This inequality has important consequences on the efficiency of the projection estimator, as we will show later.

\subsection{Drop-One Estimator}\label{sec:drop-one-method}
Despite being consistent, the projection-based estimator has finite-sample bias $\frac{\bh_{n,j}^\top\bU_n}{\bh_{n,j}^\top\bX_n}$. Even when $j \in O_{X}$ and the term vanishes asymptotically, it still contributes to the asymptotic variance of $\hat{\beta}_{j,n}^{PROJ}$. We can improve the efficiency of the candidate estimator by including basis functions from the set $\calH_U$ (see right panel of Figure \ref{fig:dag-spatial}). These variables support $U$ but not $X$ and are thus not part of the confounding mechanism. Instead, they are \textit{precision variables}: predictors of the outcome but not exposure, the inclusion of which increases the precision of the regression by explaining some of the variation in the outcome that is not due to or confounded with the exposure. 

We first provide the intuition behind the approach. For simplicity, suppose $U$ admits a finite basis expansion up to $d$ bases. For any $j \in \{1,2,\ldots,d\}$ which supports $X$ but not $U$, i.e., $j \in O_{X,d}$, we \textit{drop} the $j^{th}$ basis function and regress the outcome $Y$ on $X$ and $h_1,h_2,..h_{j-1},h_{j+1},\ldots,h_d$. This adjusts for all confounders, i.e., all $h_k$ which support both $X$ and $U$, and also includes as precision variables all $h_{k'}$ that support only $U$ but not $X$. Leaving out $h_j$ ensures that there is residual variation left in $X$ that is not accounted for the included set of basis functions, adjusting for the confounders ensures that the coefficient corresponding to $X$ is consistent for $\beta$ leveraging this residual variation in $X$ and the linearity of the regression, and including precision variables reduces the variance of this candidate compared to the projection-based estimator.

We now describe the process formally without assuming a true finite basis expansion of $U$. Let $\bH_{1:d} =\begin{bmatrix}
    \bh_{n,1} & \cdots & \bh_{n,d}
\end{bmatrix}$ be the evaluation matrix for the first $d$ basis functions and denote the corresponding projection on its column space by $\bP_d = \bH_{1:d}\left(\bH_{1:d}^\top \bH_{1:d}\right)^{-1}\bH_{1:d}^\top$. Define the projected exposures and outcomes on the first $d$ basis as $\bX_{n,d} = \bP_d\bX_n$ and $\bY_{n,d} = \bP_d\bY_n$. 
For $j \in \{1,\cdots,d\}$, 
define the matrix $
\bH_{-j|1:d} = 
\begin{bmatrix}
    \bh_{n,1} & \cdots & \bh_{n,j-1} & \bh_{n,j+1} & \cdots & \bh_{n,d} 
\end{bmatrix},
$ i.e., the first $d$ basis functions, dropping the $j$-th one. We can decompose $U(s) = \sum_{l=1}^\infty\alpha_u^{(l)}h_l(s) = \alpha_u^{(j)}h_j(s) + \sum_{l \in 1:d, l \neq j} \alpha_u^{(l)}h_l(s) + (1-\lambda_{d}) U_{>d}(s)$ for $1\leq j \leq d$, where $\lambda_{d} = 1$ if $U$ has support only on the first $d$ basis functions and 0 otherwise and hence $U_{>d}(s) = \sum_{l = d+1}^\infty\alpha_u^{(l)}h_l(s)$. Hence 
we can write $\bU_n = \alpha_u^{(j)}\bh_{n,j} + \bH_{-j|1:d}\balp_{-j,u} + (1-\lambda_{d})\bU_{n,>d}$ where $\balp_{-j,u} = \left(\alpha_u^{(1)},\cdots,\alpha_u^{(j-1)},\alpha_u^{(j+1)},\cdots,\alpha_u^{(d)}\right)^\top$. 
Thus, the DGP for the observed values projected onto the first $d$ basis functions can be written as:
\begin{equation}\label{eq:leave-one-dgp}
    \begin{split}
        \bY_{n,d} = \bP_d \bY_n & = \beta \bP_d\bX_n + \bP_d\bH_{-j|1:d}\balp_{-j,u}+ \alpha_u^{(j)}\bP_d\bh_{n,j} + (1-\lambda_{d})\bP_d\bU_{n,>d} + \bP_d\beps_n\\
        & = \beta\bX_{n,d} + \bH_{-j|1:d}\balp_{-j,u}+ \alpha_u^{(j)}\bh_{n,j} + (1-\lambda_{d})\bP_d\bU_{n,>d} + \bP_d\beps_n\\
& = \begin{bmatrix}
    \bX_{n,d} & \bH_{-j|1:d}
\end{bmatrix}\begin{bmatrix}
    \beta\\
    \balp_{-j,u}
\end{bmatrix} + \alpha_u^{(j)}\bh_{j,n} + (1-\lambda_{d})\bP_d\bU_{n,>d} + \bP_d\beps_n\\
& \coloneqq \tilX_{-j}\bgam_{-j,u} + \alpha_u^{(j)}\bh_{n,j} + (1-\lambda_{d})\bP_d\bU_{n,>d} + \bP_d\beps_n\\
    \end{split}
\end{equation}
If $j \in O_{X,d}$, the second term above is $0$. Also, as the columns of $\bP_d$ are asymptotically uncorrelated with $\bU_{n,>d}$, $\bP_d\bU_{n,>d} / n$ vanishes asymptotically. 
Hence, for $j \in O_{X,d}$, we can write the regression as 
$$\E\left[ \bY_{n,d}\right] \approx  \tilde \bX_{-j}\bgam_{-j,u},$$ where the first element of $\bgam_{-j,u}$ is $\beta$. Hence, the first element of the OLS estimate of $\bY_{n,d}$ on $\tilde \bX_{-j}$ is a consistent estimate for $\beta$. Equivalently, if we do a two-stage regression, first regressing $\bY_{n,d}$ on everything but the first column of $\tilde \bX_{-j}$, i.e., on $\bH_{-j|1:d}$, and then regress the residual on the first column of $\tilde \bX_{-j}$, i.e., $\bX_{n,d}$, we should have a consistent estimator of $\beta$. This motivates the following 
drop-one estimator
\begin{equation}\label{eq:drop-est}
    \hat{\beta}_{-j,n} = \dfrac{\bX_{n,d}^\top(\bI_n - \bP_{-j,d})\bY_{n,d}}{\bX_{n,d}^\top(\bI_n - \bP_{-j,d})\bX_{n,d}}, \mbox{ where } \bP_{-j,d} = \bH_{-j|1:d}\left(\bH_{-j|1:d}^\top\bH_{-j|1:d}\right)^{-1}\bH_{-j|1:d}^\top.
\end{equation} 
The following theorem formalizes our claim that by adjusting for $\bH_{-j | 1:d}$, the drop-one candidate estimator both adjusts for confounders and accounts for variation in the outcome due to the precision variables
and is therefore a more efficient consistent estimator than the projection-based estimator.
\begin{theorem}\label{thm:drop-est-prop}
    Consider the DGP as in Equation \ref{eq:dgp} and the estimator as in Equation \ref{eq:drop-est}, assuming a sampling density $f_S > 0$ on the compact domain $\calD$ for the i.i.d. sampled coordinates with $\epsilon(S_i)|S_i$'s being iid with mean 0 and variance $\sigma^2_\epsilon$. Let $d$ be such that $O_{X,d}$ is non-empty. Defining $r_j = \alpha_u^{(j)}/\alpha_x^{(j)}$ such that $j \in S_{X,d}$, we have:
    \begin{enumerate}[(a)]
    \item $\hat{\beta}_{-j,n}\pnlim \beta + r_j$.
    \item For any $j \in O_{X,d}$, we have 
        $\sqrt{n}\left(\hat{\beta}_{-j,n} - \beta\right)   \dnlim N\left(0,\frac{(1-\lambda_{d})\sigma^2_{U,j,>d} + \sigma^2_\epsilon}{{\alpha_x^{(j)}}^2}\right)$
        where $\sigma^2_{U,j,>d} \coloneqq Var(h_jU_{>d}) = \|h_j U_{>d}\|^2_{f_S}$. 
    \item If $\lambda_{d} = 1$, then for $j \in O_{X,d}$ we have $\E\left[\hat{\beta}_{-j,n}\right] = \beta$.
    \end{enumerate}
\end{theorem}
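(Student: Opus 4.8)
The plan is to first collapse the apparently complicated drop-one estimator of Equation~\ref{eq:drop-est} into a transparent single-direction ratio estimator, and then read off (a)--(c) from the law of large numbers and a central limit theorem. The crucial observation is that the residual projector $\bI_n - \bP_{-j,d}$ annihilates every column of $\bH_{-j|1:d}$, so writing $\bX_{n,d} = \bH_{1:d}\hat{\balp}_x$ with $\hat{\balp}_x = (\bH_{1:d}^\top\bH_{1:d})^{-1}\bH_{1:d}^\top\bX_n$ gives the exact identity $(\bI_n - \bP_{-j,d})\bX_{n,d} = \hat\alpha_x^{(j)}\tilde{\bh}_{n,j}$, where $\tilde{\bh}_{n,j} \coloneqq (\bI_n - \bP_{-j,d})\bh_{n,j}$ is the part of the $j$-th basis vector orthogonal to all the other retained basis vectors and $\hat\alpha_x^{(j)}$ is the $j$-th OLS coefficient of $\bX_n$ on $\bH_{1:d}$. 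Using symmetry and idempotency of $\bI_n - \bP_{-j,d}$, together with the fact that $\tilde{\bh}_{n,j}$ lies in the column space of $\bH_{1:d}$ (so $\bP_d$ acts as the identity on it and $\tilde{\bh}_{n,j}^\top\bY_{n,d}=\tilde{\bh}_{n,j}^\top\bY_n$, $\tilde{\bh}_{n,j}^\top\bX_{n,d}=\tilde{\bh}_{n,j}^\top\bX_n$), both numerator and denominator simplify and a factor of $\hat\alpha_x^{(j)}$ cancels, yielding the clean reduced form
$$\hat{\beta}_{-j,n} = \frac{\tilde{\bh}_{n,j}^\top \bY_n}{\tilde{\bh}_{n,j}^\top \bX_n}.$$
This is exactly a one-instrument ratio (2SLS) estimator with the residualized direction $\tilde{\bh}_{n,j}$ playing the role of the instrument, which makes the remaining arguments routine.

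For part (a), I would substitute the DGP of Equation~\ref{eq:dgp}, $\bY_n = \beta\bX_n + \bU_n + \beps_n$, into the numerator to obtain $\hat{\beta}_{-j,n} - \beta = (\tilde{\bh}_{n,j}^\top\bU_n + \tilde{\bh}_{n,j}^\top\beps_n)/\tilde{\bh}_{n,j}^\top\bX_n$. Decomposing $\bU_n = \alpha_u^{(j)}\bh_{n,j} + \bH_{-j|1:d}\balp_{-j,u} + (1-\lambda_d)\bU_{n,>d}$ as in Equation~\ref{eq:leave-one-dgp} and using the annihilation $\tilde{\bh}_{n,j}^\top\bH_{-j|1:d} = \bzero$, the confounder contribution reduces to $\alpha_u^{(j)}\tilde{\bh}_{n,j}^\top\tilde{\bh}_{n,j} + (1-\lambda_d)\tilde{\bh}_{n,j}^\top\bU_{n,>d}$, while $\tilde{\bh}_{n,j}^\top\bX_n = \hat\alpha_x^{(j)}\tilde{\bh}_{n,j}^\top\tilde{\bh}_{n,j}$. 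Dividing numerator and denominator by $n$ and invoking the LLN, orthonormality gives $\tfrac1n\tilde{\bh}_{n,j}^\top\tilde{\bh}_{n,j}\to 1$ and $\hat\alpha_x^{(j)}\to\alpha_x^{(j)}$, while $\tfrac1n\tilde{\bh}_{n,j}^\top\bU_{n,>d}$ and $\tfrac1n\tilde{\bh}_{n,j}^\top\beps_n$ both converge to $0$ in probability because $\langle h_j, U_{>d}\rangle_{f_S} = 0$ and $\E[\epsilon(S)\mid S]=0$. This leaves $\hat{\beta}_{-j,n}\pnlim \beta + \alpha_u^{(j)}/\alpha_x^{(j)} = \beta + r_{j,d}$.

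For part (b), fix $j\in O_X$ (so $\alpha_u^{(j)}=0$) and scale by $\sqrt n$; the leading quantity is $\tfrac{1}{\sqrt n}[(1-\lambda_d)\tilde{\bh}_{n,j}^\top\bU_{n,>d} + \tilde{\bh}_{n,j}^\top\beps_n]$ divided by $\hat\alpha_x^{(j)}\tfrac1n\tilde{\bh}_{n,j}^\top\tilde{\bh}_{n,j}\to\alpha_x^{(j)}$. The key technical step is to replace $\tilde{\bh}_{n,j}$ by $\bh_{n,j}$ at the $\sqrt n$ scale: writing $\tilde{\bh}_{n,j} = \bh_{n,j} - \bP_{-j,d}\bh_{n,j}$, the correction factors as $(\tfrac1n\bH_{-j|1:d}^\top\bh_{n,j})^\top(\tfrac1n\bH_{-j|1:d}^\top\bH_{-j|1:d})^{-1}(\tfrac1{\sqrt n}\bH_{-j|1:d}^\top\bW_n)$ for $\bW_n\in\{\bU_{n,>d},\beps_n\}$, and since $\tfrac1n\bH_{-j|1:d}^\top\bh_{n,j}\to\bzero$ by orthonormality while the middle factor converges to $\bI$ and $\tfrac1{\sqrt n}\bH_{-j|1:d}^\top\bW_n=O_p(1)$ (finite variances guaranteed by $\calL^4$-integrability), the correction is $o_p(1)$. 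I would then combine the two remaining mean-zero pieces into the single i.i.d.\ sum $\tfrac1{\sqrt n}\sum_i W_i$ with $W_i = h_j(S_i)[(1-\lambda_d)U_{>d}(S_i) + \epsilon(S_i)]$, compute $\E[W_i]=0$ (using $\langle h_j,U_{>d}\rangle_{f_S}=0$ and $\E[\epsilon\mid S]=0$) and $\mathrm{Var}(W_i) = (1-\lambda_d)\sigma^2_{U,j,>d} + \sigma^2_\epsilon$ (the cross term vanishing by conditional mean-zero noise, and $\E[h_j^2]=\|h_j\|_{f_S}^2=1$), and invoke the Lindeberg--L\'evy CLT with Slutsky's theorem to obtain the stated $N\!\big(0, ((1-\lambda_d)\sigma^2_{U,j,>d}+\sigma^2_\epsilon)/{\alpha_x^{(j)}}^2\big)$ limit.

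Part (c) is then the easiest: when $\lambda_d=1$ the tail $\bU_{n,>d}$ vanishes, so for $j\in O_X$ the confounder contribution is annihilated entirely and the reduced form collapses to $\hat{\beta}_{-j,n} = \beta + \tilde{\bh}_{n,j}^\top\beps_n / (\hat\alpha_x^{(j)}\tilde{\bh}_{n,j}^\top\tilde{\bh}_{n,j})$; conditioning on the locations $\bS_n$ (which fixes $\bh_{n,j}$, $\tilde{\bh}_{n,j}$, and $\hat\alpha_x^{(j)}$ since $X$ is deterministic) and using $\E[\beps_n\mid\bS_n]=\bzero$ gives conditional unbiasedness, and the tower property yields $\E[\hat{\beta}_{-j,n}]=\beta$. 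I expect the main obstacle to be the $\sqrt n$-scale replacement of $\tilde{\bh}_{n,j}$ by $\bh_{n,j}$ in part (b): one must verify that the merely empirical (not exact) orthogonality of the basis leaves the projection correction asymptotically negligible even after multiplication by $\sqrt n$, which is precisely where the exact population orthogonality $\langle h_j, U_{>d}\rangle_{f_S}=0$ is needed so that $\tfrac1{\sqrt n}\bH_{-j|1:d}^\top\bU_{n,>d}$ stays $O_p(1)$ rather than blowing up; the same exact orthogonality also ensures the tail term $\bU_{n,>d}$ feeds the asymptotic \emph{variance} and not an asymptotic bias.
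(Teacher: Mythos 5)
Your proof is correct, and it takes a genuinely different and leaner route than the paper's. The paper never collapses the estimator: it keeps the full $d\times d$ matrix $\bA_{j,d}$ arising from the identity $\frac1n\bP_d(\bI_n-\bP_{-j,d})\bP_d=\frac1n\bH_{1:d}\,\bA_{j,d}\,\frac1n\bH_{1:d}^\top$, proves $\bA_{j,d}\pnlim \mathrm{diag}\{\mathbb{I}(u=j)\}$, and --- this is its heaviest machinery --- establishes the $\sqrt n$-tightness $\left\|\sqrt n\left(\bA_{j,d}-\bA^{lim}_{j,d}\right)\right\|_2=\bigo_P(1)$ (Lemma \ref{lem:A-lim}, via the Neumann-series matrix-inverse perturbation bound of Proposition \ref{prop:tight-matrix-inv}), which it needs because in its decomposition the $\sqrt n$ lands on the matrix deviation while the vector factor $\frac1n\bH_{1:d}^\top((1-\lambda_d)\bU_{n,>d}+\beps_n)$ is only $\smallo_P(1)$; the normal limit then comes from a bivariate CLT plus the delta method on $g(x,y)=xy$. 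Your Frisch--Waugh--Lovell reduction $(\bI_n-\bP_{-j,d})\bX_{n,d}=\hat\alpha_x^{(j)}\tilde{\bh}_{n,j}$ --- valid since $\bI_n-\bP_{-j,d}$ annihilates every column of $\bH_{-j|1:d}$ and $\bP_d$ fixes $\tilde{\bh}_{n,j}$ --- is exact finite-sample algebra, and it redistributes the $\sqrt n$ more economically: your correction term factors as $\left(\frac1n\bh_{n,j}^\top\bH_{-j|1:d}\right)\left(\frac1n\bH_{-j|1:d}^\top\bH_{-j|1:d}\right)^{-1}\left(\frac1{\sqrt n}\bH_{-j|1:d}^\top\bW_n\right)$, i.e.\ $\smallo_P(1)\cdot\bigo_P(1)\cdot\bigo_P(1)$, where the last factor is $\bigo_P(1)$ by an ordinary CLT precisely because $\langle h_k,U_{>d}\rangle_{f_S}=0$ for $k\le d$ --- the point you correctly single out as load-bearing. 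This renders the paper's matrix-tightness apparatus unnecessary and replaces the delta-method step by a plain Lindeberg--L\'evy CLT plus Slutsky on a scalar ratio. Both arguments ultimately rest on the same two structural facts: exact annihilation of the within-$d$ confounder components when $\alpha_u^{(j)}=0$ (your $\tilde{\bh}_{n,j}^\top\bH_{-j|1:d}=\bzero$ is the paper's Equation \ref{eq:projn-orth}), and exact population orthogonality of the tail $U_{>d}$ to the retained basis, which channels the tail into the asymptotic variance rather than a bias. Your part (c) is the paper's conditioning-on-$\bS_n$ argument verbatim, just expressed in reduced form. Two pedantic points worth a sentence each in a full write-up: the cancellation of $\hat\alpha_x^{(j)}$ requires $\hat\alpha_x^{(j)}\neq 0$, which holds with probability tending to one since $\alpha_x^{(j)}\neq 0$ for $j\in S_{x,d}$ (and on the complementary event the original estimator is undefined anyway); and your variance computation implicitly uses $(1-\lambda_d)^2=(1-\lambda_d)$, true since $\lambda_d\in\{0,1\}$.
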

The first two results are similar to those in Theorem \ref{thm:IV-est-prop}, establishing consistency and asymptotic normality of the candidate drop-one estimators. Additionally, if $U$ admits a finite basis expansion with respect to $\calH$, which is implicitly assumed by many existing spatial deconfounding approaches, e.g., the spline-based adjustment of \cite{keller2020selecting}, the third result shows that additionally  $\hat{\beta}_{-j,n}$ is exactly finite-sample unbiased for any $j \in O_{X,d}$. This unbiasedness property is an improvement over the projection-based estimator, which has a finite-sample bias. 

\subsection{Comparison of Drop-one and Projection-based Candidates}\label{sec:drop-one-vs-proj-sd}
We now look at the comparison of asymptotic variances of the estimators based on valid candidate basis functions which support $X$ but not $U$, i.e., $h_j$ for $j \in O_X$. For a finite sample comparison, see the simulations in Section \ref{sec:misspec}. We have seen from Theorems \ref{thm:IV-est-prop} and \ref{thm:drop-est-prop} that for any $j \in O_{X,d}$ respectively, both the candidate's estimates are consistent for $\beta$ and asymptotically normal with mean $\beta$. The following theorem formally states that the asymptotic variance of the drop-one estimator is smaller than that of the projection estimator:
\begin{theorem}\label{thm:asm-var-comparison}
    Consider the DGP as in Equation \ref{eq:dgp} and the estimator as in Equation \ref{eq:drop-est}, assuming a sampling density $f_S > 0$ on the compact domain $\calD$ for the i.i.d. sampled coordinates with $\epsilon(S_i)|S_i$'s being iid with mean 0 and variance $\sigma^2_\epsilon$. Let the basis functions $h_1,h_2,\ldots,$ be such that each $h_j$ has a zero set $\{s \in \calD : h_j(s)=0\}$ of $f_S$-measure zero. Define the candidate estimators $\hat{\beta}^{PROJ}_{j,n}$ and $\hat{\beta}_{-j,n}$ as in Equations \ref{eq:iv-est} and \ref{eq:drop-est} respectively based on basis functions $h_1,\ldots,h_d$. Then for large enough $d$ and $j \in O_{X,d}$, 
    $$
    avar\left(\hat{\beta}_{-j,n}\right)
    <
    avar\left(\hat{\beta}^{PROJ}_{j,n}\right),
    $$
    where $avar(\cdot)$ denotes the asymptotic variance.
\end{theorem}
The  assumption of Theorem \ref{thm:asm-var-comparison}, that the zero set of each of the basis functions are of measure zero, is 
satisfied by many of the standard families of basis functions used for analysis in a compact spatial domain with a suitable choice of the sampling density $f_S>0$, such as Fourier basis and polynomial basis classes. The results proves that the drop-one candidate estimator is more efficient than the standard projection-based candidate estimator under a broad range of scenarios. Therefore, we recommend using the drop-one candidates in the basis-voting estimation procedure of Section \ref{sec:plurality-estimator}.

\section{Simulation Study}\label{sec:simulation}
\subsection{Simulation Data Generation}\label{sec:sim-data}
To investigate the performance of the proposed estimator under controlled conditions, we have conducted a comprehensive simulation study. The primary goal of the simulation was to evaluate whether basis voting could recover the causal effect $\beta$ in the presence of unmeasured spatial confounding, when the structure of confounding aligns with our plurality assumption.

We considered a spatial domain $\mathcal{D} = [0,1]^2$, where the spatial coordinates were chosen as a grid inside it. For the main simulations, we used a regular grid design with $n = 30^2 = 900$ spatial locations. At each location, both the exposure $X(s)$ and the unmeasured confounder $U(s)$ were generated as linear combinations of a common set of spatial basis functions, constructed from sine and cosine transforms of spatial coordinates, i.e., the Fourier basis functions. This construction allowed us to control the smoothness and spatial scale of both the exposure and the confounder. The coefficients for $X$ and $U$ on the basis functions were chosen to emulate realistic confounding while ensuring that the ratio $\alpha_u^{(j)} / \alpha_x^{(j)}$ has a well-separated mode at zero. Specifically, the coefficients were designed such that a subset of basis functions contributed to both $X$ and $U$ (i.e., confounded directions) while some basis functions were exclusive to either $X$ or $U$.

The non-zero values of the exposure coefficients vector $\left\{\alpha_x^{(j)}\right\}$ and confounder coefficients vector $\left\{\alpha_u^{(j)}\right\}$ were sampled independently from $Unif(3,6)$ distribution along with random sign $\{-1,1\}$ assigned with equal probability. The coefficients were kept fixed over all the data replicates. 
The magnitude of the confounding bias in naive OLS was calibrated based on the inner product of the coefficient vectors, ensuring a challenging identification task. 

The observed outcome was generated using the structural equation 
$Y(s) = \beta X(s) + U(s) + \epsilon(s)$
where $\beta = 2.5$ is the true causal effect, and $\varepsilon(s) \sim \mathcal{N}(0, \sigma^2_\varepsilon)$ with $\sigma_\epsilon = 0.1$ is i.i.d. noise. Based on whether $X$ is smoother than $U$ or not, there are 2 different DGP scenarios that we have considered. 

\subsection{Correct Basis Specification}
We simulate two distinct scenarios: 
(a) the exposure is a rougher function of space than the confounder, and (b) the exposure is smoother. Many existing methods assume the exposure is spatially rougher than the confounder, and we aim to assess the robustness of all methods when this assumption is violated.

\begin{figure}[ht]
\centering
\includegraphics[width=0.95\textwidth]{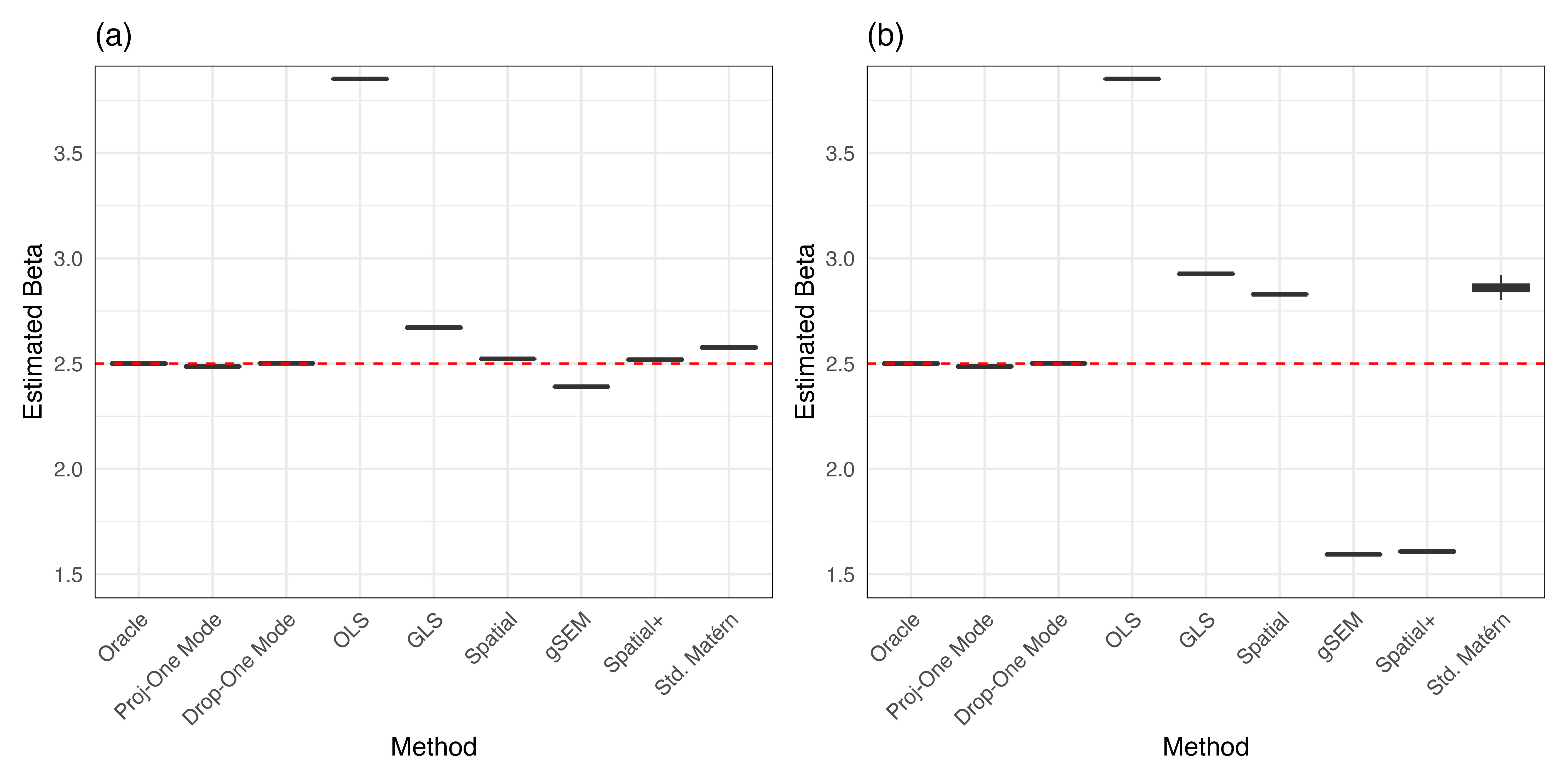}
\caption{Performance of different methods for estimating $\beta = 2.5$ under two scenarios: (a) when the exposure $X$ is rougher than the unmeasured confounder $U$ and (b) when $X$ is smoother than $U$,
using the correct basis specification. Our mode-based estimators (named {\em Proj-One Mode} and {\em Drop-One Mode}) recover the effect sizes in both cases; competitors bias when $X$ is smoother.
}
\label{fig:est-correct}
\end{figure}

Figure~\ref{fig:est-correct} compares our proposed estimator to several existing methods, using the oracle estimator as a benchmark (which leverages knowledge of the true $U$ and the shared basis functions for optimal efficiency). Competing methods include Ordinary Least Squares (OLS), Generalized Least Squares (GLS) with a pre-specified exponential covariance kernel using the BRISC R-package \citep{saha2018brisc}, the \textit{Spatial} basis adjustment approach using TPRS basis set in a similar essence to \citet{keller2020selecting} by adding a spline term in the regression (to adjust for the unmeasured confounder $U$) additional to the exposure of interest, the gSEM method of \citet{thaden2018structural}, the Spatial+ approach of \citet{dupont2022spatial+}, and the Standard Matérn model-based spectral domain adjustment of \citet{guan2023spectral}. The results reveal important differences in method performance. When the exposure is rougher than the unmeasured confounder with respect to the Fourier basis (Panel (a)), nearly all methods, including ours, accurately recover the true value of $\beta = 2.5$. The only exception is the unadjusted OLS estimator, which of course is expected to be biased under confounding. When the exposure is smoother than the confounder (Panel (b)), all existing methods yield substantially biased estimates, while our mode-based estimators --- both project-on-one and drop-one approaches -- are almost unbiased and accurately estimate $\beta$. This experiment highlights the robustness of our method to violation of the rough exposure assumption. 

\subsection{Robustness to Basis Misspecification}\label{sec:misspec}

The experiments in the previous section were in the idealized scenario in which the basis functions used for estimation match those that generated the data. Using the correct basis is not a requirement for our method, which only requires that the plurality holds under whichever basis set is used in the analysis to represent the exposure and the confounder. Here, we assess the robustness of our method to basis misspecification.

To explore the effect of basis choice, we consider two misspecified alternatives for analyzing the data: (a) standard thin plate regression splines (TPRS), orthogonalised over the set of locations, and (b) eigen functions derived from the empirical covariance structure of the coordinates, where the latter are obtained by eigen decomposition of a Mat\'ern covariance kernel with fixed parameters.

Figure~\ref{fig:coef-ratio-miss} displays the ratios of the projected coefficients for the confounder and exposure, $\alpha_{u}^{(j)}/\alpha_{x}^{(j)}$, onto the $j$-th misspecified TPRS basis function, under two smoothness regimes for $X$ and $U$.
\begin{figure}[ht]
    \centering    \includegraphics[width=0.95\textwidth]{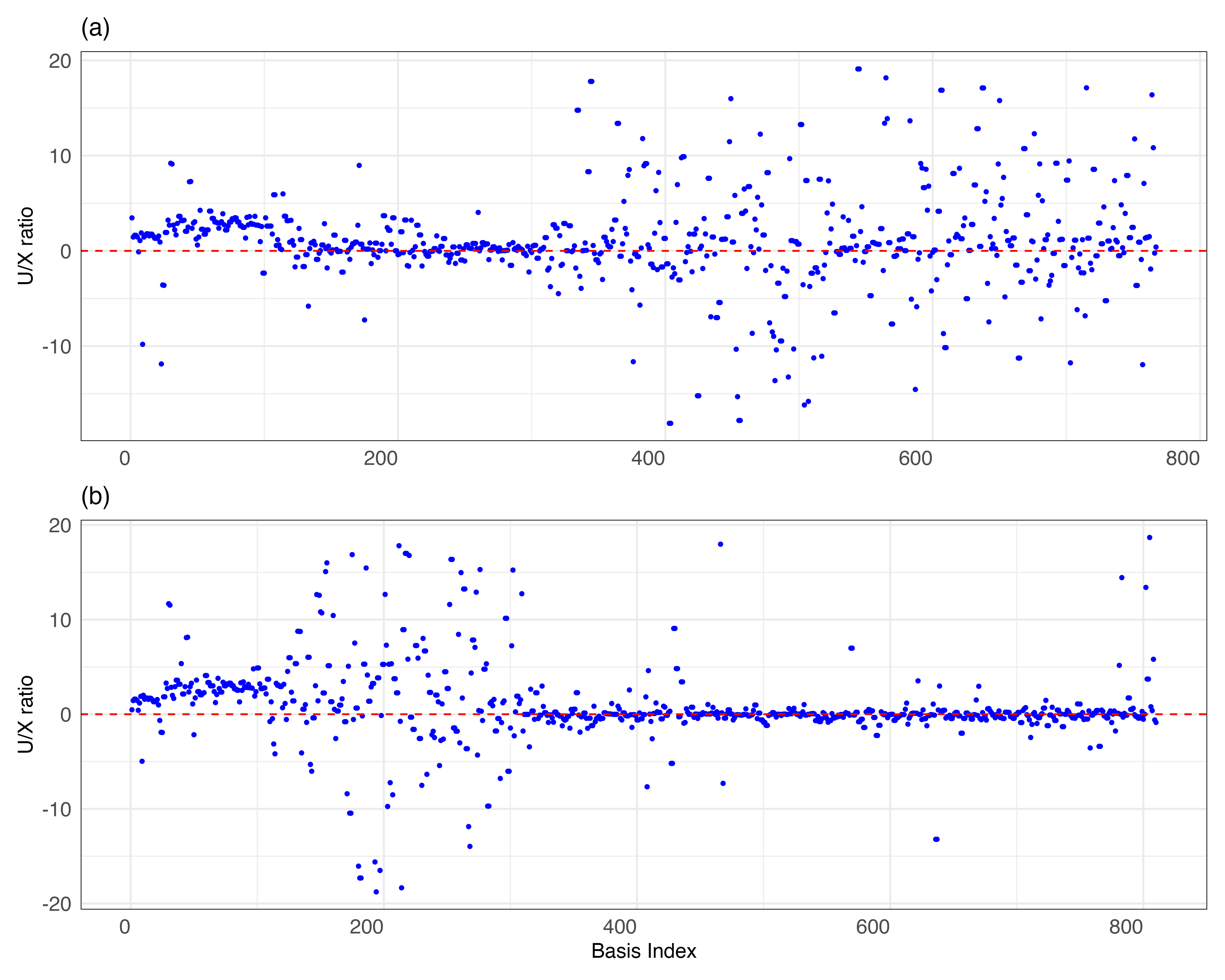}
    \caption{Estimated ratio $\alpha_{u}^{(j)}/\alpha_{x}^{(j)}$ for a misspecified TPRS basis under two smoothness scenarios: (a) $X$ is smoother than $U$, and (b) $X$ is rougher than $U$.}
    \label{fig:coef-ratio-miss}
\end{figure}
As we see, in both scenarios, the coefficient ratios have the biggest clustering around the $y=0$ line, indicating that the modal value remains close to zero. This supports the validity of Assumption~\ref{asm:estimate} for this class of basis functions which was not used to generate the data, regardless of whether the exposure is smoother or rougher. Similar findings were obtained for the eigen basis functions. 

Figure~\ref{fig:basis-misspecified} presents the estimated effect sizes produced by our method when using the first $d$ 
basis functions (with $d$ varying from 150 to 850, for a sample size of 900 observations), as the true value of $d_0$ from Assumption \ref{asm:estimate} is unknown, especially under basis misspecification. Results are presented for both TPRS and eigen basis sets, and under both relative smoothness scenarios for $X$ and $U$. By varying $d$ from small values up to nearly $n$, we observe that the estimator stabilizes as the number of candidate basis estimates increases and the plurality rule becomes more clearly satisfied. Evaluating the estimator across a range of $d$ therefore provides a practical safeguard against the instability that can arise when $d$ is chosen to be too small for the plurality condition to hold.

\begin{figure}[http]
    \centering    \includegraphics[width=0.95\textwidth]{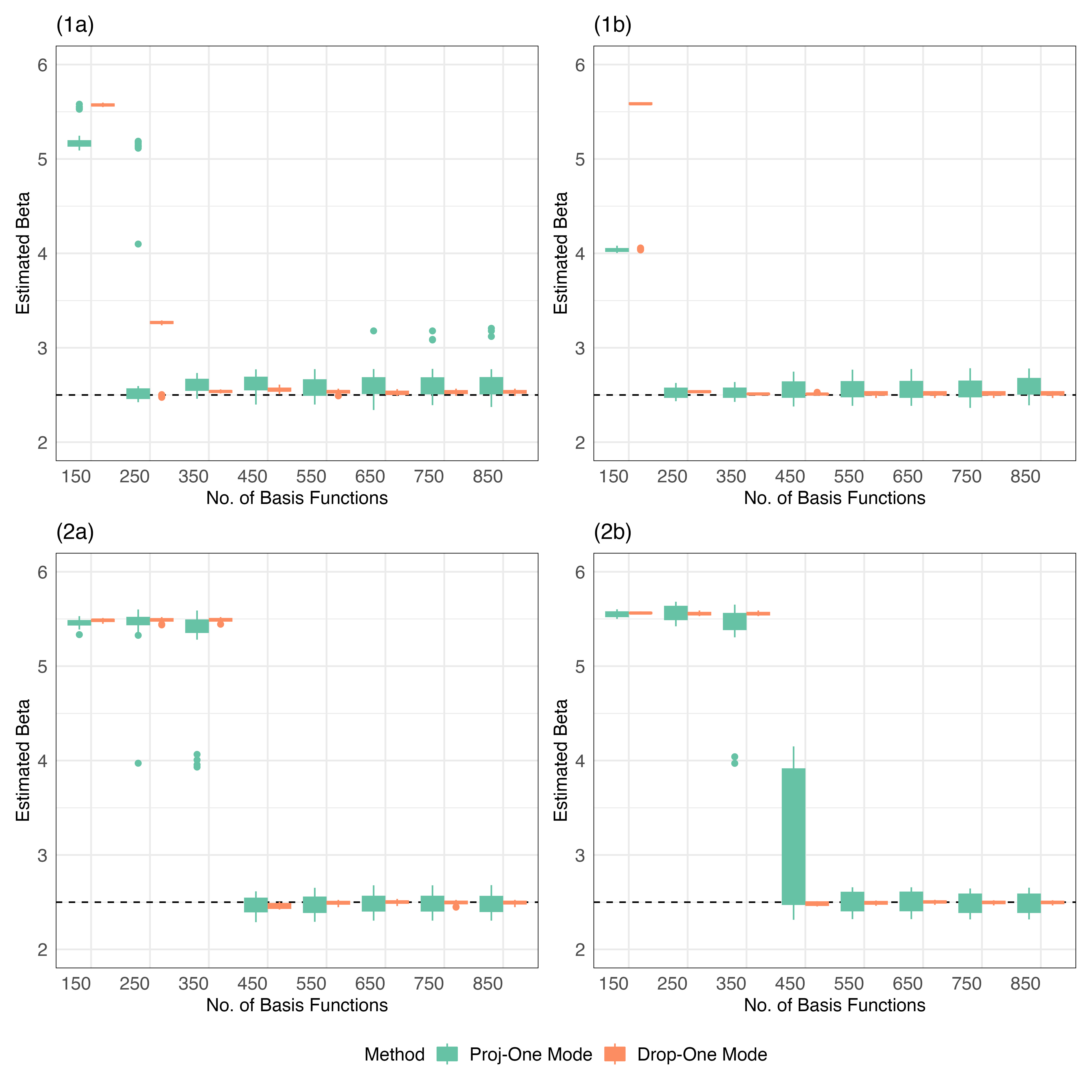}
    \caption{Performance of the two proposed methods under two smoothness scenarios: (1) $X$ is smoother than $U$, and (2) $X$ is rougher than $U$, each evaluated with two misspecified basis function types: (a) TPRS basis and (b) Eigen basis functions. The black dotted line is the true effect size $\beta = 2.5$.}
    \label{fig:basis-misspecified}
\end{figure}

These results show that even under basis misspecification, stable estimation of the effect size is attained once a sufficient number of basis functions are included in the model for the plurality rule to kick in. In terms of bias, the projection method performs worse as compared to the drop-one methods in finite samples. Also, the projection method exhibits a greater standard error. Both of these trends are consistent with the theoretical findings in Section~\ref{sec:drop-one-vs-proj-sd}. We can see that as the number of basis functions in the model increases, the standard error of the estimates also increases, illustrating a bias-variance tradeoff in the number of basis functions included: more basis functions will generally reduce the bias of the estimate, but at the cost of increasing the standard error due to more parameters being estimated. Overall, the drop-one candidates based mode estimator is much more robust to the number of basis functions used. The experiments in this section show that our mode estimator works well even with a misspecified basis, both when the exposure is rougher or smoother, and that using the drop-one candidate estimators generally leads to better performance than the projection-based candidate estimators.  
\section{Real Data Analysis}\label{sec:data}

We illustrate the robustness of our mode estimator in an analysis studying the effect of air quality on COVID-19 mortality. We utilize the county-level COVID-19 dataset analyzed in \cite{guan2023spectral}  and originally collected by \cite{wu2020air}, which includes comprehensive information for $n=3109$ U.S. counties. For each county, the outcome is the cumulative number of COVID-19 deaths through May 12, 2020 (originally obtained from Johns Hopkins University Coronavirus Resource Center), scaled by the county’s population. The primary exposure variable is the long-term average PM\textsubscript{2.5} concentration, computed by averaging estimates from an established exposure prediction model over the years 2000–2016. Figure~\ref{fig:real-data-exp-outcome} displays the scaled death counts and PM\textsubscript{2.5} concentrations for each county. Counties with zero death counts are shown in gray in the figure. 
\begin{figure}[http]
    \centering
    \includegraphics[width=0.95\textwidth]
    {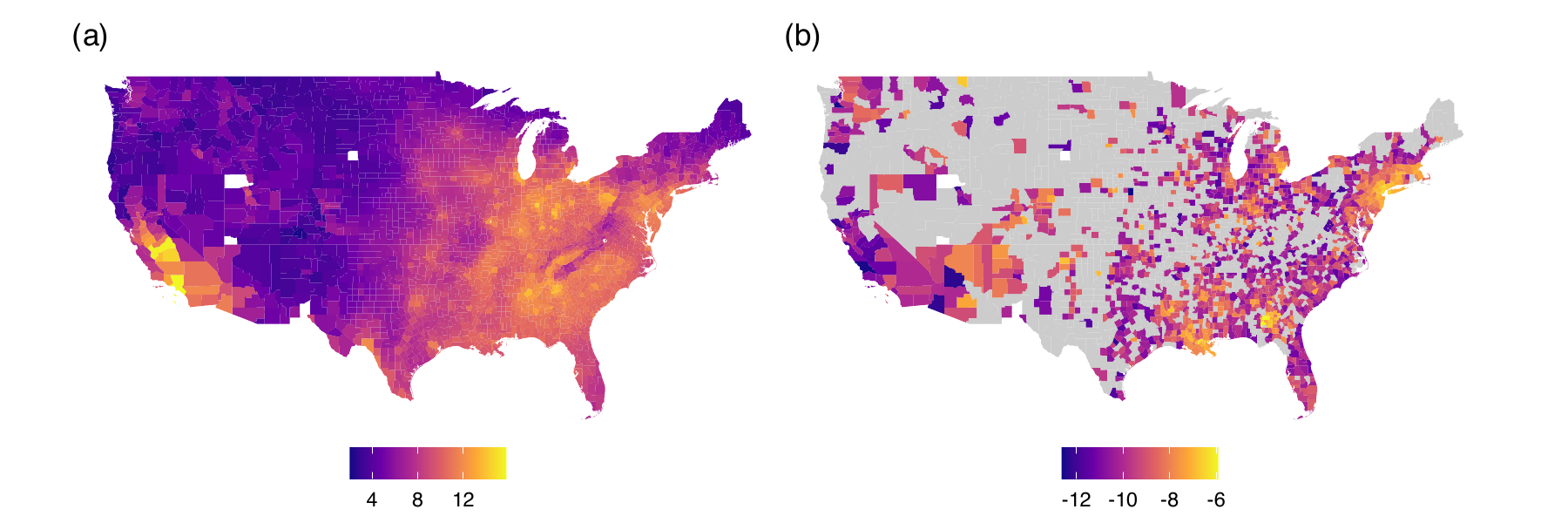}
    \caption{Maps showing (a) Average PM\textsubscript{2.5} in $(\mu g/m^3)$ for the time period 2000-2016, and (b) the cumulative log COVID-19 mortality rate through May 12, 2020, scaled by population. }
    \label{fig:real-data-exp-outcome}
\end{figure}

In addition to the main exposure and outcome variables, the data set contains 20 additional county-level covariates which are potential confounders. To transform this data into a standard geostatistical one, we assign each county the coordinates of its geographic centroid. Counties for which the boundary centroids could not be calculated were dropped from the analysis resulting in an effective sample size of $n=3082$. Consistent with our focus on linear models, we apply a logarithmic transformation to the scaled mortality outcome, replacing zeros with a small positive value to avoid loss of observations and ensure stability of the log transformation. The ordinary least square estimate of the mortality rate ratio for 1 unit increase in PM\textsubscript{2.5} even after adjusting for the measured confounders is $\exp(-1.286) = 0.267$ which shows a counterintuitive result that higher pollutant levels are protective against COVID-19 mortality. This paradoxical finding is also discussed in \cite{wu2020air} and \cite{guan2023spectral}, and we agree with their conclusion that there is a possibility for the presence of unmeasured spatial confounding.

According to  \cite{guan2023spectral}, after adjusting for spatial confounding, the estimated percentage increase in mortality rate associated with a one-unit increase in PM\textsubscript{2.5} is approximately 16\%, i.e., the effect size is around $1.16$. To assess the robustness of our proposed approach relative to existing methods that assume an ordered set of basis functions for adjustment, we consider two classes of spatial basis functions: (1) the conventional thin-plate regression splines (TPRS) orthogonalised over the set of locations which exhibit an inherent ordering of smoothness, and (2) the cartesian product of two one-dimensional Fourier bases, which lack such canonical ordering. As a benchmark, we compare our mode-based estimator to the standard basis adjustment approach in regression, which is known to yield consistent results when basis functions are ordered by increasing roughness. 

\begin{figure}[ht]
    \centering
    \includegraphics[width=0.95\textwidth]{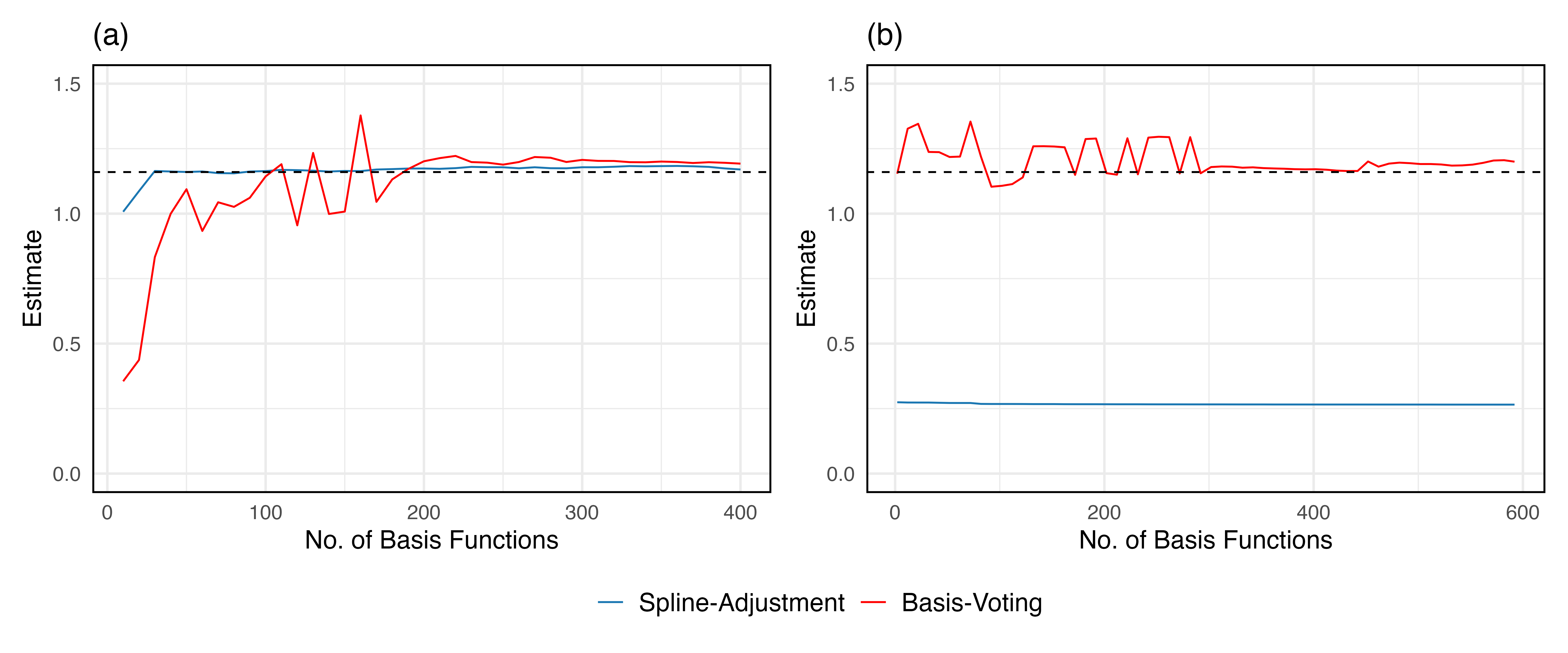}
    \caption{Results for the COVID-19 Data: Estimated mortality-rate ratio for an increase of $1 \mu g/m^3$ increase in PM\textsubscript{2.5} for different numbers of basis functions. The results are for two sets of basis functions: (a) when the basis functions have a canonical ordering of increasing roughness, and (b) when the basis functions do not have a canonical ordering among the basis functions. The blue line is for the model adjusted for $k$ basis functions along with the measured covariates, and the red is for basis voting, our mode-based estimator. The black dotted line is the estimate of 1.16 or 16\% reported by \cite{guan2023spectral}.}
    \label{fig:real-data-figure}
\end{figure}

For our analysis, we first regress out the additional covariates from both the exposure and the outcome to enable direct application of our method. Figure~\ref{fig:real-data-figure} compares our estimator to the standard basis adjustment, with the left panel corresponding to the TPRS basis and the right panel to the cartesian product of Fourier basis. 

As anticipated, when the basis set has an ordering of smoothness (as with TPRS), the results of both estimators are highly similar and align with the estimate from the previous analyses. Notably, when switching to the unordered Fourier basis, our mode-based estimator continues to produce estimates close to the presumed effect, whereas the standard basis adjustment method fails to control for unmeasured spatial confounding, producing a very different estimate. This result underscores the advantage of our approach in settings where the exposure does not represent a `rougher' function of space than the confounder in terms of a given basis expansion. 

\section{Conclusion and Discussion}\label{sec:discussion}

Unmeasured spatial confounding poses a major challenge for effect size estimation in spatial linear models, particularly when exposures and confounders share smooth spatial structure. Existing approaches typically assume exposures have some non-spatial noise, or at the very least, vary more \textit{roughly} than confounders, but this assumption breaks down in many environmental health settings where exposures themselves are smooth or when using basis functions where there is no single notion of smoothness.

We address this gap by modeling exposures and confounders through basis expansions and show that the effect is identified under a plurality rule assumption on the basis functions in the support of the exposure but not the confounder. Leveraging this structure, we introduce basis voting, a plurality-rule based process to arrive at an estimator that identifies causal effects without requiring prior knowledge of valid candidate bases or bounds on the number of invalid ones. We provide theoretical guarantees for consistency and propose a provably more efficient candidate estimator with improved asymptotic variance. Simulations and a real data analysis confirm that our method delivers unbiased estimates whenever exposure and confounder signals are separable on a plurality of basis functions, regardless of their relative smoothnesses. 

Plurality-rule based estimation provides a new, flexible and robust tool for addressing spatial confounding in complex applied settings, and this work advances spatial causal inference by relaxing the relative smoothness assumptions on which previous work relies.  While our framework is limited to linear effects, we plan to explore future extensions to nonlinear models. Also, developing valid statistical inferential procedures using the mode-estimator is an important future direction.  


\addtolength{\textheight}{-.2in}

\section{Acknowledgement} The work was partly supported by grants from the National Institutes of Health and the Office of Naval Research. 
The authors acknowledge the use of ChatGPT.com for improving the exposition in some parts, code writing, and for some assistance with developing the theoretical results (some claims generated by the authors were proved with help from ChatGPT). All code and proofs were independently verified by the authors, who take full responsibility for the content. We also sincerely thank Professor Jonas Peters and Felix Schur for detailed and valuable feedback on an initial draft of the manuscript. 

\bibliography{bibliography2.bib}

\newpage
\renewcommand{\thesection}{S\arabic{section}} 
\setcounter{section}{0}
\renewcommand{\thelemma}{S\arabic{section}.\arabic{lemma}}
\setcounter{lemma}{0}

\makeatletter
\renewcommand{\theHsection}{S\arabic{section}}
\renewcommand{\theHlemma}{S\arabic{section}.\arabic{lemma}}
\makeatother

\phantomsection\label{supplementary-material}
\bigskip

\begin{center}

{\large\bf SUPPLEMENTARY MATERIAL}

\end{center}

\section{Comparison of Basis Voting and DecoR}\label{supsec:decor-vs-bv}

We present a comparative evaluation of our proposed \emph{basis voting estimator} against the \emph{DecoR} method introduced by \cite{schur2025decor} within a time-series framework. Although designed in a one-dimensional framework, DecoR is conceptually the closest to our basis voting framework. However, some aspects of the underlying philosophies behind the two methods differ substantially. In addition to contrasting their finite-sample behavior, we illustrate how basis voting naturally accommodates settings in which the number basis $d$ used in the analysis is not fixed, but instead allowed to increase with the sample size $n$. This flexibility is important as in practice, 
the number of basis functions used in the analysis typically grows with sample size. 

DecoR formulates the confounding problem as an adversarial outlier problem after transforming the exposure and outcome into the frequency domain by projecting them onto all $n$ basis functions. Its performance crucially depends on the parameter $a$, which encodes prior knowledge about the proportion of transformed data points that must be treated as outliers---equivalently, the proportion of basis functions that generate invalid candidate estimates. The accuracy of DecoR therefore hinges on approximately correctly specifying $a$, making this choice critical in practice. Consistent with the modeling assumptions in \cite{schur2025decor}, we use the correct basis specification for both DecoR and basis voting to ensure a fair comparison.

Figure \ref{fig:decor-vs-mode} displays simulation results in a regime where the proportion of invalid basis-specific estimates increases with $n$. For our method, the mode is computed using all $n$ basis-specific regression estimates, directly highlighting that basis voting remains applicable even when $d = d(n)$ grows with sample size rather than remaining fixed. 
Across all simulations, the true parameter value is $\beta = 2.5$. We consider three representative confounding scenarios:

\begin{enumerate}
    \item[(a)] \textbf{Majority rule violation with homogeneous outliers.} \\
    In this scenario, the majority of basis-specific estimates are contaminated, and all outliers yield the same biased value. 
    As these outliers are majority, the mode is not at $\beta$ but at this biased value. So this is a scenario where both the majority and plurality conditions fail, and we therefore expect both DecoR and basis voting to be inconsistent. This case serves to 
    illustrate
    when either methods cannot recover $\beta$.

    \item[(b)] \textbf{Majority rule holds with a fixed linear proportion of homogeneous outliers.} \\
    A constant fraction of the basis functions produces the same incorrect estimate, but the uncontaminated basis functions (or valid candidates, in our terminology) still form a strict majority. Here, DecoR’s error increases with the error variance (as mentioned in \cite{schur2025decor}), reflecting its sensitivity to noise and its reliance on the tuning parameter $a$. In contrast, plurality rule holds for this setting as the number of valid basis functions are the majority. We see this in the results where basis voting remains stable and demonstrates consistency across sample sizes.

    \item[(c)] \textbf{Plurality (but not majority) rule holds with linearly growing outliers.} \\
    For each $n$, a fixed proportion $p$ of the basis functions are outliers, and although they do not form a majority, their number grows linearly with $n$. The plurality rule holds while the majority rule fails. In this setting, basis voting is consistent whereas DecoR is not. This scenario demonstrates that the truncation threshold $d$ in Assumption \ref{asm:identify} can grow with $n$, provided that the proportion of valid basis-specific estimates remains the largest group.
\end{enumerate}

\begin{figure}[ht]
\centering
    \includegraphics[width=0.8\textwidth]{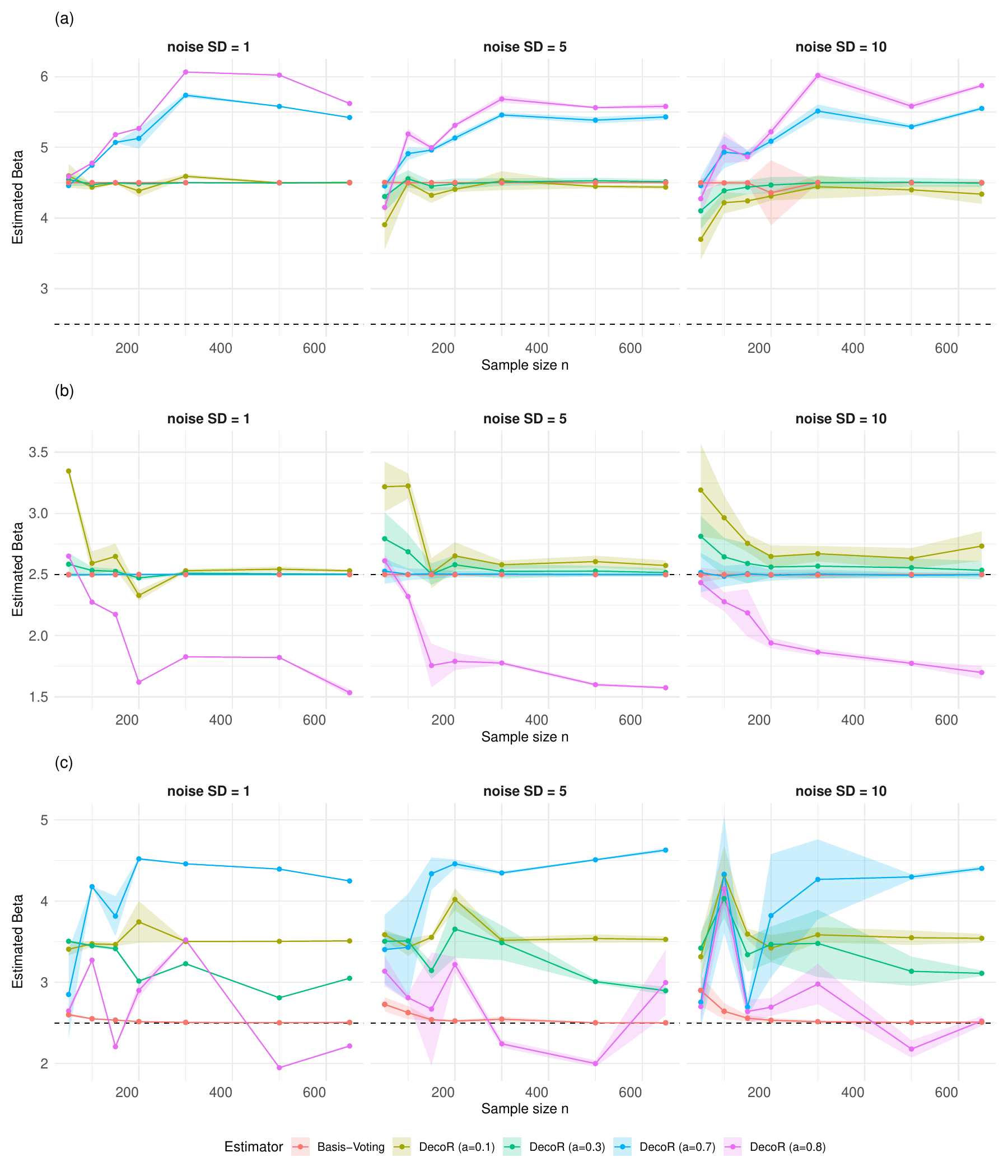}
    \caption{Basis-Voting vs DecoR performance across the three spatial confounding scenarios: (a) 60\% bias of +2, 40\% bias of 0, (b) 70\% bias of 0, 30\% bias of +2, (c) 35\% bias of 0, 25\% bias of +1, 20\% bias of +2, 20\% bias of +3, each evaluated under error variances $\sigma \in \{1,5,10\}$.}
    \label{fig:decor-vs-mode}
\end{figure}

Together, these experiments highlight the distinct philosophies underlying the two approaches. DecoR requires specifying the proportion of outliers and casts confounding as an adversarial contamination problem, whereas basis voting does not rely on such prior knowledge and remains robust under both basis misspecification and growing number of basis functions used in the analysis as long as the plurality rule holds. These results underscore the flexibility and generality of the basis voting framework.
\section{Notations}
We denote $\R$ and $\N$ to be the set of real numbers and natural numbers, respectively. For a sequence of random variables, $\pnlim$ denotes convergence in probability and $\dnlim$ denotes weak convergence. We will also use stochastic ordering notations throughout the proof. For a sequence of random variable $\{X_n\}$ and a sequence of real numbers $\{a_n\}$, we say  $X_n = \bigo_P(a_n)$ if for all $\epsilon >0$ there exists a $M>0$, such that for all $n \in \N$, we have $\Prob\left(\left|\frac{X_n}{a_n}\right|> M\right) < \epsilon$. Also, we say $X_n = \smallo_P(a_n)$ if for all $\epsilon > 0$, $\Prob\left(\left|\frac{X_n}{a_n}\right|> \epsilon\right) \nlim 0$. For a real valued function $f$ defined on an inner-product space $(\mathcal M,\langle\cdot,\cdot\rangle)$, we denote $\|f(\cdot)\|^2_{\mathcal M} \coloneqq \sup_{t \in \mathcal M} \langle f(t),f(t)\rangle$. All the integrals are over the support $\calD$ of the sampling density $f_S$, and hence unless specified, by $\int$ we will mean $\int_\calD$. Recall that $\calF(\calD,f_S)$ is a space of continuous functions in $\calD$ equipped with the inner product $\langle g,h\rangle_{f_S}\coloneqq \int_\calD g(s)h(s)f_S(s)ds$ and the corresponding norm $\|\cdot \|_{f_S}$ with respect to the sampling measure $f_S$.
\section{Supporting Lemmas}\label{supsec:supp-lemmas}
\begin{lemma}[Argmax Theorem \citep{vaart1997weak}]\label{lem:argmax-thm}
        Let $\bM_n$ be a stochastic process indexed by a metric space $\Theta$, and let $\bM: \Theta \mapsto \R$ be a deterministic function. Suppose that $\left\|\bM_n - \bM\right\|_\Theta$ in outer probability and that there exists a $\theta_0$ such that 
        $$
        \bM(\theta_0) > \underset{\theta \notin G}{\sup} \,\bM(\theta)
        $$
        for any open set $G$ that contains $\theta_0$. Then any sequence $\hat{\theta}_n$, such that $\bM_n(\hat{\theta}_n) \geq \sup_\theta \bM_n(\theta) - \smallo_P(1)$, satisfies $\hat{\theta}_n \to \theta$ in outer probability.
    \end{lemma}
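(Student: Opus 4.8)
The plan is to prove this well-separated-maximum consistency result (the classical argmax theorem, cf.\ \cite{vaart1997weak}) by the standard sandwich argument, carried out throughout in outer probability to accommodate the possible non-measurability of $\hat\theta_n$ and of the supremum functionals. First I would fix an arbitrary open set $G \ni \theta_0$; the goal reduces to showing $\Prob^*(\hat\theta_n \notin G) \to 0$, which is exactly the definition of $\hat\theta_n \to \theta_0$ in outer probability. The well-separation hypothesis immediately supplies a quantitative gap: setting $\eta \coloneqq \bM(\theta_0) - \sup_{\theta \notin G}\bM(\theta) > 0$, we have $\bM(\theta) \leq \bM(\theta_0) - \eta$ for every $\theta \notin G$. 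Consequently, on the event $\{\hat\theta_n \notin G\}$ we automatically obtain $\bM(\theta_0) - \bM(\hat\theta_n) \geq \eta$, so it suffices to prove that $\bM(\theta_0) - \bM(\hat\theta_n) = \smallo_P(1)$.

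The core step is the three-term decomposition
$$\bM(\theta_0) - \bM(\hat\theta_n) = \underbrace{[\bM(\theta_0) - \bM_n(\theta_0)]}_{(\mathrm{I})} + \underbrace{[\bM_n(\theta_0) - \bM_n(\hat\theta_n)]}_{(\mathrm{II})} + \underbrace{[\bM_n(\hat\theta_n) - \bM(\hat\theta_n)]}_{(\mathrm{III})}.$$
Terms $(\mathrm{I})$ and $(\mathrm{III})$ are each bounded in absolute value by $\|\bM_n - \bM\|_\Theta$, which tends to $0$ in outer probability by the uniform-convergence hypothesis. For term $(\mathrm{II})$ I would invoke the near-maximization property: since $\bM_n(\theta_0) \leq \sup_\theta \bM_n(\theta)$ while $\bM_n(\hat\theta_n) \geq \sup_\theta \bM_n(\theta) - \smallo_P(1)$, subtracting yields $(\mathrm{II}) \leq \smallo_P(1)$. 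Combining the three bounds gives $\bM(\theta_0) - \bM(\hat\theta_n) \leq 2\|\bM_n - \bM\|_\Theta + \smallo_P(1) = \smallo_P(1)$.

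To close the argument, for the fixed gap $\eta>0$ the inclusion $\{\hat\theta_n \notin G\} \subseteq \{\bM(\theta_0) - \bM(\hat\theta_n) \geq \eta\}$ holds by construction, so by monotonicity of outer probability $\Prob^*(\hat\theta_n \notin G) \leq \Prob^*(\bM(\theta_0) - \bM(\hat\theta_n) \geq \eta) \to 0$, the last convergence following from $\bM(\theta_0) - \bM(\hat\theta_n) = \smallo_P(1)$. Since $G$ was an arbitrary open neighborhood of $\theta_0$, this is precisely outer-probability convergence $\hat\theta_n \to \theta_0$.

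The hard part is not the chain of inequalities, which is routine, but the measurability bookkeeping: $\hat\theta_n$ need not be measurable and $\theta \mapsto \bM_n(\theta)$ need not admit a measurable argmax, so each probability statement must be read as an outer probability and each bound pushed through outer expectations. The one place this genuinely bites is the final transfer step, where I must argue that the (possibly non-measurable) random quantity $\bM(\theta_0) - \bM(\hat\theta_n)$ is dominated by the \emph{measurable} envelope $2\|\bM_n - \bM\|_\Theta + \smallo_P(1)$; because that envelope converges to $0$ in the ordinary sense, monotonicity of outer probability then transfers the convergence to the left-hand side, completing the proof.
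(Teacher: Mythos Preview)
Your proof is correct and is precisely the standard argument for the argmax consistency theorem. Note, however, that the paper does not actually prove this lemma: it is quoted verbatim as a known result from \cite{vaart1997weak} and invoked as a black box inside the proof of Theorem~\ref{thm:mode-consistent}. So there is no ``paper's own proof'' to compare against; you have supplied what the paper deliberately outsources to the reference. One small wording caveat: you call $2\|\bM_n-\bM\|_\Theta + \smallo_P(1)$ a ``measurable envelope,'' but $\|\bM_n-\bM\|_\Theta$ is itself a supremum over a possibly uncountable index set and need not be measurable either---this is exactly why the hypothesis is phrased in \emph{outer} probability. Your argument is unaffected, since monotonicity of outer probability and the assumed outer-probability convergence of the envelope are all that is needed, but the word ``measurable'' should be dropped there.
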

\begin{lemma}\label{lem:dot-prod-lim}
    For any basis function $h_j$ from the orthogonal class $\calH$, defining its evaluation at coordinates $S_1, S_2,\cdots,S_n \overset{iid}{\sim} f_S>0$ as $\bh_{n,j} = (h_j(S_1), \cdots,h_j(S_n))^\top$ and for iid noise vector $\bzet_n = (\zeta_1, \cdots, \zeta_n) \sim N(\bzero,\sigma^2_\zeta\bI_n)$ where $\bzet_n \perp \bS_n = (S_1, \cdots, S_n)^\top$, the following holds:
    \begin{enumerate}
        \item\label{basis-basis-dot} $\dfrac{1}{n} \bh_{n,j}^\top\bh_{n,l} \pnlim \mathbb{I}(j = l)$ where $\mathbb{I}(\cdot)$ is the indicator function for any $j,l \geq 1$
        \item\label{basis-noise-dot} $\dfrac{1}{n} \bh_{n,j}^\top\bzet_n \pnlim 0$ for any $j \geq 1$
    \end{enumerate}
\end{lemma}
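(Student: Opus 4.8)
The plan is to recognize both quantities as sample means of i.i.d.\ summands and apply a law-of-large-numbers argument, with the $\calL^4(\calD,f_S)$ membership of the basis functions supplying the finite-variance conditions that make a Chebyshev bound clean. In both parts the i.i.d.\ sampling of the coordinates $S_1,\dots,S_n\sim f_S$ is what makes the per-coordinate terms independent and identically distributed, so everything reduces to identifying the common mean and checking a finite second moment.

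For part \ref{basis-basis-dot}, I would write $\frac{1}{n}\bh_{n,j}^\top\bh_{n,l} = \frac{1}{n}\sum_{i=1}^n h_j(S_i)h_l(S_i)$ and set $Z_i \coloneqq h_j(S_i)h_l(S_i)$. The $Z_i$ are i.i.d.\ with common mean $\E[Z_i] = \int_\calD h_j(s)h_l(s)f_S(s)\,ds = \langle h_j,h_l\rangle_{f_S} = \mathbb{I}(j=l)$ by orthonormality of $\calH$. To justify convergence in probability I would verify finiteness of the second moment: $\E[Z_i^2]=\E[h_j(S)^2 h_l(S)^2]\le \sqrt{\E[h_j(S)^4]\,\E[h_l(S)^4]}<\infty$ by Cauchy--Schwarz, since $h_j,h_l\in\calL^4(\calD,f_S)$. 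Hence $\mathrm{Var}(Z_i)<\infty$ and $\mathrm{Var}\!\big(n^{-1}\sum_i Z_i\big)=n^{-1}\mathrm{Var}(Z_1)\to 0$, so Chebyshev's inequality (equivalently, the weak law) gives $\frac 1n\sum_i Z_i \pnlim \mathbb{I}(j=l)$.

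For part \ref{basis-noise-dot}, I would write $\frac{1}{n}\bh_{n,j}^\top\bzet_n = \frac 1n\sum_{i=1}^n h_j(S_i)\zeta_i$ and set $W_i\coloneqq h_j(S_i)\zeta_i$. Because $\bzet_n\perp\bS_n$ and each $\zeta_i$ is mean zero, the $W_i$ are i.i.d.\ with $\E[W_i]=\E[h_j(S_i)]\,\E[\zeta_i]=0$, using that $h_j(S_i)$ is integrable (as $\calL^4\subset\calL^1$ on the finite measure $f_S$). The second moment is $\mathrm{Var}(W_i)=\E[h_j(S)^2]\,\E[\zeta^2]=\sigma^2_\zeta\,\|h_j\|^2_{f_S}<\infty$, so the same Chebyshev argument yields $\frac 1n\sum_i W_i\pnlim 0$.

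There is no substantive obstacle here; the only point requiring care is confirming the moment conditions, and this is precisely where the $\calL^4$ assumption on $\calH$ earns its place: it guarantees $\E[h_j^2 h_l^2]<\infty$ so that the product terms in part \ref{basis-basis-dot} have finite variance, while the mean-zero i.i.d.\ structure in part \ref{basis-noise-dot} rests on the stated independence $\bzet_n\perp\bS_n$. I would note that these same second-moment computations reappear when establishing the asymptotic variances in Theorems \ref{thm:IV-est-prop} and \ref{thm:drop-est-prop}, so it is worth recording them cleanly here.
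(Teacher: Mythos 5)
Your proof is correct and follows essentially the same route as the paper's: part \ref{basis-basis-dot} is the weak law applied to the i.i.d.\ products $h_j(S_i)h_l(S_i)$ with mean $\mathbb{I}(j=l)$ by orthonormality, and part \ref{basis-noise-dot} is a mean-zero/variance-$\sigma^2_\zeta/n$ Chebyshev argument, which the paper organizes via the total-variance decomposition conditioning on $\bS_n$ while you work per-summand with $W_i = h_j(S_i)\zeta_i$ --- a trivially equivalent bookkeeping choice. Your explicit verification of the second moment via Cauchy--Schwarz and the $\calL^4$ assumption is a welcome bit of extra care that the paper leaves implicit.
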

\begin{proof}
    Consider the fact that the basis function set $\calH = \left\{ h_k(\cdot) : \int_{\mathcal{D}} h_i(s) h_j(s) f_S(s) ds = \delta_{ij} \right\}$ where $\delta_{ij}=1$ iff $i=j$. Now, this Lemma follows trivially from the Weak Law of Large Numbers (WLLN). To show \ref{basis-basis-dot}, notice that:
    $$
    \begin{aligned}
        \dfrac{1}{n} \bh_{n,j}^\top\bh_{n,l} = & \dfrac{1}{n} \sum_{i=1}^n h_j(s_i)h_l(s_i) \pnlim \E h_j(S_1)h_l(S_1) = \int h_j(s)h_l(s)f_S(s)ds = \mathbb{I}(j=l)
    \end{aligned}
    $$
    Now, to see \ref{basis-noise-dot}, observe that:
    $$
    \E \dfrac{1}{n} \bh_{n,j}^\top\bzet_n= \E_{\bS_n}\dfrac{1}{n}\bh_{n,j}^\top\E_{\bzet_n|\bS_n}\left[\bzet_n|\bS_n\right] = \E_{\bS_n}\dfrac{1}{n}\bh_{n,j}^\top\E_{\bzet_n|\bS_n}\left[\bzet_n\right] = \bzero 
    $$
    Also, observe that:
    $$
    \begin{aligned}
        Var\left(\bh_{n,j}^\top\bzet_n\right) = & Var_{\bS_n}\E[\bh_{n,j}^\top\bzet_n|\bS_n] + \E_{\bS_n}Var[\bh_{n,j}^\top\bzet_n|\bS_n]\\
        = & \E_{\bS_n}Var[\bh_{n,j}^\top\bzet_n|\bS_n]\\
        = & \sigma^2_{\zeta}\E_{\bS_n}\bh_{n,j}^\top\bh_{n,j} = n\sigma^2_\zeta\E h_j^2(S_1)\\
        = & n\sigma^2_\zeta\int_{\calD}h^2_j(s)f_S(s)ds = n\sigma^2_\zeta\\
        \implies Var\left(\dfrac{1}{n}\bh_{n,j}^\top\bzet_n\right) =  & \dfrac{\sigma^2_\zeta}{n} \nlim 0
    \end{aligned}
    $$
    This shows that $\dfrac{1}{n}\bh_{n,j}^\top\bzet_n \pnlim 0$ and this completes the proof of Lemma \ref{lem:dot-prod-lim}.
\end{proof}
\begin{lemma}\label{prop:tight-matrix-inv}
    Let $\bA_n$ be a $k \times k$ finite dimensional matrix of the form $\bA_n = \frac{1}{n}\sum_{l=1}^n A_\ell$ such that each of $A_\ell$ are iid square matrices of dimension $k$ such that $\E A_\ell = \bA$ with $||\bA||_2<\infty$. Then we have:
$$
\left|\left|\sqrt{n}\left(\bA_n^{-1} - \bA^{-1}\right)\right|\right|_2 = \bigo_P(1)
$$
where $||\cdot||_2$ is the spectral norm.
\end{lemma}
\begin{proof}
From Central Limit Theorem for matrices, we can say that the scaled difference $\sqrt{n}\left(\bA_n - \bA\right) \dnlim \bZ$ where $\bZ$ is a matrix normal random variable with mean matrix to be a square $\bzero$ matrix of dimension $k$ and covariance matrix same as the covariance of $vec(A_l)$, and since $||\cdot||_2$ is a continuous function on matrices, thus we can say that $||\sqrt{n}(\bA_n - \bA)||_2 = \bigo_P(1)$. Defining $\bDel_n = \bA_n - \bA$, we have:
$$
\begin{aligned}
    \bA_n^{-1} = & (\bA_n + \bA - \bA)^{-1}\\
    = & (\bA + \bDel_n)^{-1}\\
    = & \bA^{-1}\left(\bI + \bDel_n\bA^{-1}\right)^{-1}\\
    = & \bA^{-1}\sum_{k\geq 0}\left(-\bDel_n\bA^{-1}\right)^k\\
    = & \bA^{-1} -\bA^{-1}\bDel_n\bA^{-1} + \bA^{-1}\left(-\bDel_n\bA^{-1}\right)^2 + \bA^{-1}\left(-\bDel_n\bA^{-1}\right)^3 + \cdots \\
    = & \bA^{-1} -\bA^{-1}\bDel_n\bA^{-1} + \bigo\left(||\bDel_n||_2^2\right)
\end{aligned}
$$
Now, as $||\sqrt{n}\bDel_n||_2 = \bigo_P(1)$, it implies that $\sqrt{n}||\bDel_n||_2^2 = \bigo_P(n^{-1/2}) = \smallo_P(1)$. Thus, we have:
$$
\left|\left|\sqrt{n}(\bA_n^{-1} - \bA^{-1})\right|\right|_2 \leq ||\bA^{-1}||_2^2||\sqrt{n}\bDel_n||_2 + \bigo_P(\sqrt{n}||\bDel_n||_2^2) = \bigo_P(1) 
$$
This completes the proof of the proposition. \end{proof}
\begin{lemma}\label{lem:plurality-guarantee}
If $K$ is Lipschitz continuous and the finite set $\ba_m = \{a_1, \cdots, a_m\}$ has a unique mode $t$, then for all $0<h \leq h_0$, where $h_0$ is some constant depending on $K$ and $\underset{a_i \neq t}{\min} |t - a_i|$, $
    \underset{x\in \R}{\argmax}\ KD(\ba_m, K,h,x)=t
    $.
\end{lemma}
\begin{proof}
Recall we denoted the set $\ba_m= \{a_1,\cdots,a_m\}$ where $m \in \N$ and each of $a_i \in \R$. By the assumption of the lemma, the elements of $\ba_m$ have a unique mode at $t$. Formally, if we define $n_a$ to be the count of the elements of $\ba_m$ equaling the value $a$, then the unique mode assumption can be boiled down to $n_t > \underset{s \in \ba_m, s \neq t}{\max}n_s$. Let $\delta\coloneqq\underset{s\in \ba_m\setminus\{t\}}{\min} |t-s|>0$ and $M\coloneqq\underset{s\in \ba_m\setminus\{t\}}{\max} n_s$. Let us define the \emph{multiplicity gap} $\gamma\coloneqq n_t-M>0$. We consider the two assumptions on $K$ separately.

\medskip
\noindent\textbf{Case (i) [$K$ has compact support]:}
Suppose ${\rm supp}(K)\subseteq[-R,R]$ for some $R<\infty$. Choose $h_0\in\bigl(0,\delta/(2R)\bigr)$. Then for any $0<h\le h_0$, each summand $K\bigl((x-a_i)/h\bigr)$ is supported on an interval of radius $Rh\le \delta/2$. Hence the $n_s$ bumps centered at a location $s\in \ba_m$ do not overlap with those centered at any $s'\neq s$. In particular, for $s\in S$, $KD(\ba_m,K,h,s)=\frac{n_s K(0)}{mh}$,
since only the $n_s$ bumps centered at $s$ contribute at $x=s$, each attaining its maximum $K(0)$; and for some $s'\neq s$, we have $|s-s'|/h> \delta/h > 2R$ which throws the value outside the support of $K$, and we know for $x$ outside the support of all bumps, $K(x)=0$. Therefore,
$$
\max_{x\in\mathbb{R}} KD(\ba_m,K,h,x)
= \max_{s\in \ba_m} KD(\ba_m,K,h,s)
= \frac{K(0)}{mh}\,\max_{s\in \ba_m} n_s
= KD(\ba_m,K,h,t),
$$
and the uniqueness follows from $n_t>M$.

\medskip
\noindent\textbf{Case (ii)[$K(t)\to 0$ as $|t|\to\infty$]:}
By symmetry and continuity, $0\leq K(t)\leq K(0)$ for all $t$, and for every $\varepsilon>0$ there exists $T_\varepsilon<\infty$ such that  $\sup_{|t|\geq T_\varepsilon} K(t)\leq \varepsilon$. Fix small $\varepsilon>0$ so that
\begin{equation}\label{eq:eps-choice}
\varepsilon \;<\; \frac{K(0)}{2m}\gamma.
\end{equation}
Choose $h_0>0$ with $\delta/(2h_0)\ge T_\varepsilon$. For any $0<h\leq h_0$ and $x\in\mathbb{R}$, let $s^\star(x)\in \ba_m$ be a nearest element of $\ba_m$ to $x$ (breaking ties arbitrarily). Points $a_i\neq s^\star(x)$ satisfy $|x-a_i|\geq \delta/2$, hence
$$
K\left(\frac{x-a_i}{h}\right) \;\le\; \sup_{|u|\geq \delta/(2h)} K(u) \leq \varepsilon,
$$
and thus
\begin{equation}\label{eq:upper-anyx}
KD(\ba_m,K,h,x)
\le \frac{n_{s^\star(x)}K(0)}{mh} \;+\; \frac{(m-n_{s^\star(x)})\varepsilon}{mh}.
\end{equation}
Evaluating at $x=t$ gives
\begin{equation}\label{eq:lower-atm}
KD(\ba_m,K,h,t)
\ge \frac{n_t K(0)}{mh} - \frac{(m-n_t)\varepsilon}{mh}.
\end{equation}
Combining \eqref{eq:upper-anyx} and \eqref{eq:lower-atm} with $M=\max_{s\in \ba_m\neq t} n_s$, $\gamma=n_t-M$ and defining $\mathcal{B}(t,r) = \{x \in \R: |x-t|<r\}$, we have for any $x\in\R \setminus \mathcal{B}(t,\delta/2)$:
\begin{equation}\label{eq:KD-bound-ball-out}
    \begin{split}
& KD(\ba_m,K,h,t) - KD(\ba_m,K,h,x)\\
\geq &\frac{1}{mh}\Bigl(n_t K(0)-(m-n_t)\varepsilon - MK(0) - (m - M)\varepsilon\Bigr)\\
= & \frac{1}{mh}\Bigl((n_t -M)K(0)-(m + m-n_t-M)\varepsilon\Bigr)\\
\geq & \frac{1}{mh}\Bigl((n_t -M)K(0)-2m\varepsilon\Bigr)\\
= & \frac{2}{h}\Bigl(\gamma \frac{K(0)}{2m}-\varepsilon\Bigr) > 0
    \end{split}
\end{equation}
By the choice \eqref{eq:eps-choice}, the right-hand side is strictly positive. 

Now consider the the fixed annulus $A \coloneqq \{x:0<|x-t|\le\delta/2\} = \mathcal B(t,\delta/2)\setminus \{t\}$.
Define $\eta(x,h)\ \coloneqq K(0)-K((x-t)/h)$ where $x \in A$. Since $K$ attains its maximum only at $x=0$, we have $\eta(x,h)>0$ for every $x \in A$. moreover $\eta(x,h)\uparrow K(0)$ as $h\downarrow 0$.
Thus, decreasing $h_0$ if needed, ensure that for all $x \in A$,
\begin{equation}\label{eq:bound-KD}
    n_t\cdot\eta(x,h)\ \ge\ 3(m-n_t)\,\varepsilon\ ,\forall\ 0<h\le h_0.
\end{equation}
hence for any $x\in A$ and $0<h\le h_0$,
\begin{equation}\label{eq:bound-KD-diff-ball-in}
    KD(\ba_m,K,h,t)-KD(\ba_m,K,h,x)
\;\ge\; \frac{1}{mh}\Big(n_t \cdot \eta(x,h) - 2(m-n_t)\varepsilon\Big)\;>\;0.
\end{equation}
Now fix any $r>0$, since
$$
\R\setminus\mathcal B(t,r)\ \subset\
\big(\R\setminus\mathcal B(t,\delta/2)\big)\ \cup\ \{x:\ r\le |x-t|\le \delta/2\}\ \subset\
\big(\R\setminus\mathcal B(t,\delta/2)\big)\ \cup\ A,
$$
Combining Equations \ref{eq:KD-bound-ball-out}--\ref{eq:bound-KD-diff-ball-in} yields that for any $0<h\le h_0$ and $x\in \R \setminus \mathcal B(t,r)$, we have:
$$
KD(\ba_m,K,h,t) > KD(\ba_m,K,h,x),
$$
Hence $t$ is the unique global mode in this setup. 
\end{proof}
\begin{lemma}\label{lem:non-zero-proj}
Let $\{h_k\}_{k\ge1}\subset \calF(\calD,f_S)$ an orthonormal system, fix $d\in\mathbb N$, assume that for each $j\le d$, the zero set $\{s: h_j(s)=0\}$ has $f_S$-measure $0$
(equivalently, $h_j^2>0$ $f_S$-a.e.).
For $j\le d$, set $V_j\coloneqq\operatorname{span}\{h_k:\,1\le k\le d,\ k\neq j\}$ and define the set $\mathbb S_j\coloneqq\{v\in V_j:\ \|v\|_{f_S}=1\}$ and $Q_j(v)\coloneqq\int h_j^2\,v^2\,f_S$. Then
$$
c_\ast \coloneqq \inf_{1\le j\le d}\ \inf_{v\in \mathbb S_j} Q_j(v) > 0.
$$
\end{lemma}
\begin{proof}
\emph{Step 1 (strict positivity).} Fix $j\le d$ and $v\in V_j\setminus\{0\}$ such that $v \in \mathbb S_j$. Then $A\coloneqq\{s:\,v(s)\neq 0\}$ has positive $f_S$-measure. Since $h_j^2>0$ $f_S$-a.e., $B\coloneqq\{s:\,h_j(s)\neq 0\}$ has full $f_S$-measure,
and hence $f_S(A\cap B)=f_S(A)>0$.
On $A\cap B$ we have $h_j^2v^2>0$, so
$$
Q_j(v)=\int h_j^2v^2\,f_S>0.
$$
Thus $Q_j$ is strictly positive on $\mathbb S_j$.

\emph{Step 2 (compactness and continuity under orthonormal bases).} Let $\{\phi_1,\ldots,\phi_r\}$ be any orthonormal basis of $V_j$ in $\calL^4(\calD,f_S)$, where $r=d-1$. For $a=(a_1,\ldots,a_r)\in\mathbb R^r$ define $T(a)=\sum_{i=1}^r a_i\phi_i$. Because the basis is orthonormal, $T$ is an \emph{isometry} i.e. $\|T(a)\|_{f_S} = \|a\|_2$. Hence $\mathbb S_j = \{T(a):\ \|a\|_2=1\} = T(S^{r-1})$, where $S^{r-1}=\{a\in\mathbb R^r:\|a\|_2=1\}$ is the Euclidean unit sphere. Since $S^{r-1}$ is compact and $T$ is continuous, $\mathbb S_j$ is compact. Moreover,
$$
Q_j(T(a)) = \sum_{i,k=1}^r a_i a_k
\underbrace{\int h_j^2\,\phi_i\,\phi_k\,f_S}_{\boldsymbol{G}^{(j)}_{i,k}}
= a^\top \boldsymbol{G}^{(j)} a,
$$
where $G^{(j)}$ is a symmetric $r\times r$ matrix.
Thus $Q_j$ is continuous on $V_j$.

\emph{Step 3 (attainment of a strictly positive minimum).}
By the extreme value theorem, the continuous function $Q_j$ attains a minimum
on the compact set $\mathbb S_j$; denote this minimum by $c_j$.
From \textit{Step 1}, $Q_j(v)>0$ for every $v\in\mathbb S_j$, so $c_j>0$.
Since there are finitely many indices $1 \leq j \leq d$, set $c_\ast=\min_{1\leq j\leq d} c_j>0$.
\end{proof}
\begin{lemma}\label{lem:truncated-variance}
Let $\calH = \{h_j\}_{j\geq 1}$ be orthonormal in $\calF(\calD,f_S)$, with
$U=\sum_{j\geq 1}\alpha_u^{(j)} h_j$, $U_{\leq d}=\sum_{j\leq d}\alpha_u^{(j)} h_j$,
$U_{>d}=\sum_{j>d}\alpha_u^{(j)} h_j$ for $d\in\mathbb N$ and define
$S_{\mathrm{in}}\coloneqq \sum_{j\leq d}{\alpha_u^{(j)}}^2$.
Assume:
\begin{enumerate}[(i)]
\item\label{asm:infty-bound-basis} $\max_{1\le j\le d}\|h_j\|_{\infty}\leq M<\infty$.
\item\label{asm:postive-square-basis} For each $j\leq d$, the zero set $\{s: h_j(s)=0\}$ has $f_S$-measure $0$
(equivalently, $h_j^2>0$ $f_S$-a.e.).
\item\label{asm:c-star} $\|U_{>d}\|_{f_S}^2\le \varepsilon$ with
$\displaystyle \varepsilon \;<\; \frac{c_\ast}{4M^2}\,S_{\mathrm{in}}$, where for $V_j\coloneqq\mathrm{span}\{h_k: k \leq d,k\neq j\}$,
$$
c_\ast \coloneqq \min_{1\le j\le d}\ \inf_{\substack{v\in V_j\\ \|v\|_{f_S}=1}}
\int_\calD h_j^2\,v^2\,f_S\,ds > 0
$$
\end{enumerate}
Then for every $j=1,\dots,d$ with $j \in O_{X,d}$ i.e. $\alpha_u^{(j)}=0$,
$$
\|h_j U\|^2_{f_S} > \|h_j U_{>d}\|^2_{f_S}.
$$
\end{lemma}
\begin{proof}
We refer to Lemma \ref{lem:non-zero-proj} for this proof. Fix $1 \leq j\leq d$ and assume $\alpha_u^{(j)}=0$ i.e. $j \in O_{X,d}$. Write $R_j\coloneqq\sum_{k\leq d,\,k\neq j}\alpha_u^{(j)} h_k$ and $T\coloneqq U_{>d}$.
Set
$
A_j\coloneqq\int h_j^2 R_j^2\,f_S\,ds,
B_j\coloneqq\int h_j^2 T^2\,f_S\,ds$ and $
C_j\coloneqq\int h_j^2 R_j T\,f_S\,ds.
$
Then
$$
\|h_jU\|_{f_S}^2-\|h_jU_{>d}\|_{f_S}^2
=\int h_j^2\,(R_j^2+2R_jT)\,f_S\,ds
=A_j+2C_j.
$$
By Cauchy-Schwarz, $|C_j|\le\sqrt{A_j B_j}$, hence
$$
A_j+2C_j \;\ge\; A_j - 2\sqrt{A_jB_j}
\;=\; \sqrt{A_j}\,\big(\sqrt{A_j}-2\sqrt{B_j}\big).
$$
We will show that under the assumptions of the lemma, $\sqrt{A_j}>2\sqrt{B_j}$. First, by Assumption (\ref{asm:postive-square-basis}) which ensures $c_\ast >0$ (See Lemma \ref{lem:non-zero-proj}) and $\alpha_u^{(j)}=0$,
$$
A_j = \int h_j^2 R_j^2\,f_S\,ds
\geq c_\ast\,\|R_j\|_{f_S}^2
= c_\ast\sum_{k\le d,k\neq j}{\alpha_u^{(k)}}^2
= c_*\,S_{\mathrm{in}}.
$$
Second, by Assumptions (\ref{asm:infty-bound-basis}) and (\ref{asm:c-star}),
$$
B_j = \int h_j^2 T^2\,f_S\,ds
\leq \|h_j\|_\infty^2\cdot\|T\|_{f_S}^2
\;\le\; M^2\,\varepsilon.
$$
Thus,
$$
\sqrt{A_j}\ \ge\ \sqrt{c_*\,S_{\mathrm{in}}}
\;>\; 2\sqrt{M^2\,\varepsilon}\ \geq 2\sqrt{B_j},
$$
where the strict inequality is exactly assumption (\ref{asm:c-star}).
Therefore $A_j+2C_j>0$, i.e.,
$\|h_jU\|_{f_S}^2>\|h_jU_{>d}\|_{f_S}^2$.
\end{proof}
\section{Proofs of Theorems}
\subsection{Proof of Theorem \ref{thm:mode-consistent}}
\begin{proof}
        For the proof of this theorem, we will use Lemma \ref{lem:argmax-thm} on the convergence of argmax of stochastic processes. First we will show that the supremum norm of the difference $\left|\left|KD(\calC_{n,d},K,h,\cdot) - KD(\beta+R_d,K,h,\cdot)\right|\right|_\R\pnlim 0$ and then invoke the lemma to prove consistency of our mode estimator. Observe that since the kernel $K()$ is assumed to be Lipschitz continuous, then for any $x,y \in \R$ and a constant $L_K > 0$ (dependent on $K$), we can say $|K(x) - K(y)| \leq L_K |x-y|$. Thus, using this, we have:
    $$
    \begin{aligned}
        & \|KD(\calC_{n,d},K,h,x) - KD(\beta+R_d,X,h,x)\|_\R\\ 
        = & \underset{x \in \R}{\sup}\left|KD(\calC_{n,d},K,x) - KD(\beta+R_d,K,x)\right|\\
        = & \underset{x \in \R}{\sup}\left|\dfrac{1}{mh}\sum_{k \in S_{X,d}}K\left(\dfrac{x - \hat{\beta}_{k,n}}{h}\right) - \dfrac{1}{mh}\sum_{k \in S_{X,d}}K\left(\dfrac{x - (\beta + r_k)}{h}\right)\right|\\
        \leq & \underset{x \in \R}{\sup}\dfrac{1}{mh}\sum_{k \in S_{X,d}}\left|K\left(\dfrac{x - \hat{\beta}_{k,n}}{h}\right) - K\left(\dfrac{x - (\beta + r_k)}{h}\right)\right|\\
        \leq & \underset{x \in \R}{\sup}\dfrac{L_K}{mh}\sum_{k \in S_{X,d}}\left|\left(\dfrac{x - \hat{\beta}_{k,n}}{h}\right) - \left(\dfrac{x - (\beta + r_k)}{h}\right)\right| \quad \text{(using Lipschitz Continuity)}\\
        = & \dfrac{L_K}{mh^2}\sum_{k \in S_{X,d}}\left|\hat{\beta}_{k,n}-(\beta + r_k)\right| \pnlim 0 
    \end{aligned}
    $$
    This last step is because $\frac{L_K}{mh^2}< \infty$ and $|S_{X,d}| = m$, which is fixed, and each of the summands goes to 0 in probability as per the assumption of the theorem. Hence, we have the desired result for the sup-norm of the difference of the $KD$'s. 

    Now, based on the Plurality Rule Assumption \ref{asm:estimate} for some $d \geq d_0$, and Lemma \ref{lem:plurality-guarantee},   
    it is ensured that for $\theta_0 = \beta$, we have $KD(\beta,K,h,\theta_0) > \underset{\theta \notin G}{\sup} KD(\theta+R_d,K,h,\theta)$ for any open set $G$ that contains $\theta_0$. 
    
    Thus defining $\hat{\beta}_n = \underset{x}{\argmax} 
    \ KD(\calC_{n,d},K,h,x)$, we can say that $KD\left(\calC_{n,d},K,h,\hat{\beta}_n\right) \geq \sup_\theta KD\left(\calC_{n,d},K,h,\theta\right) - \smallo_P(1)$. Thus, using Lemma \ref{lem:argmax-thm}, we can say that $\hat{\beta}_n \pnlim \beta$. This completes the proof of the theorem.
\end{proof}

\subsection{Proof of Theorem \ref{thm:IV-est-prop}}\label{thm:IV-est-prop-proof}
\begin{proof}
Let us choose some $j \in \N$, such that $\alpha_x^{(j)}\neq 0$. Now, let us look at $\hat{\beta}_{j,n}^{PROJ}$:
    $$
    \begin{aligned}
        \hat{\beta}_{j,n}^{PROJ}  = \dfrac{\dfrac{1}{n}\bh_{n,j}^\top\bY_{n}}{\dfrac{1}{n}\bh_{n,j}^\top\bX_{n}} = & \dfrac{\dfrac{1}{n}\bh_{n,j}^\top(\beta\bX_{n}+\bU_{n} + \beps_{n})}{\dfrac{1}{n}\bh_{n,j}^\top\bX_{n}}\\
        = & \beta + \dfrac{\dfrac{1}{n}\bh_{n,j}^\top\bU_{n}}{\dfrac{1}{n}\bh_{n,j}^\top\bX_{n}} + \dfrac{\dfrac{1}{n}\bh_{n,j}^\top\beps_{n}}{\dfrac{1}{n}\bh_{n,j}^\top\bX_{n}}
    \end{aligned}
    $$
where we have $\bX_{n} = \sum_{k=1}^\infty\alpha_x^{(k)}\bh_{n,k}$ and $\bU_n = \sum_{k=1}^\infty\alpha_u^{(k)}\bh_{n,k}$. By WLLN, $\frac{1}{n}\bh_{n,j}^\top\bX_n \pnlim \E h_j(S)X(S)$, where we will now try to evaluate the expectation. As we know that $X(s) = \sum_{j=1}^\infty\alpha_{x}^{(j)}h_j(s)$, let us define $X_{k}(s) = \sum_{j=1}^k\alpha_{x}^{(j)}h_j(s)$ that is part of $X$ explained by the first $k$ basis functions of $\calH$. Then we have $h_j(s)X_{k}(s) \klim h_j(s)X(s)$ pointwise for every $s \in \calD$ and we have:
$$
\begin{aligned}
\E h_j(S)X_{k}(S) =&\int h_j(s)X_{k}(s)f_Sds\\
= & \sum_{u=1}^k\alpha_x^{(u)}\int h_j(s)h_u(s)\,f_S\,ds\\
= & \sum_{u=1}^k\alpha_x^{(u)}\mathbb{I}(u=j)\\
\leq & \sum_{u=1}^k\left|\alpha_x^{(u)}\right|\mathbb{I}(u=j)\leq \left(\sum_{u \geq 1} \alpha_x^{(u)^2}\right)^{1/2}< \infty
\end{aligned}
$$
Thus Dominated Convergence Theorem (DCT) implies that:
$$
\E h_j(S)X_{k}(S) \klim \E h_j(S)X(S) = \sum_{u=1}^\infty\alpha_x^{(u)}\mathbb{I}(u=j) = \alpha_x^{(j)}
$$
Similarly we have $\frac{1}{n}\bh_{n,j}^\top\bU_n \pnlim \alpha_u^{(j)}$. Lemma \ref{lem:dot-prod-lim} implies that $\frac{1}{n}\bh_{n,j}^\top\beps_n \pnlim 0$. Thus when $\alpha_x^{(j)} \neq 0$, we have $\hat{\beta}_{j,n}^{PROJ} \pnlim \beta + \alpha_u^{(j)}/\alpha_x^{(j)} = \beta + r_j$ which completes the proof of consistency.
Let us define $A_i = h_j(S_i)[U(S_i)+\epsilon_i]$ and $B_n = h_j(S_i)X(S_i)$,which means $\frac{1}{n}\bh_{n,j}^\top[\bU_n+\beps_n] = \frac{1}{n}\sum_{i=1}^nh_j(S_i)[U(S_i)+\epsilon_i] = \bar{A}_n$ and $\frac{1}{n}\bh_{n,j}^\top\bX_n = \frac{1}{n}\sum_{i=1}^nh_j(S_i)X(S_i) = \bar{B}_n$. We will try to derive the asymptotic distribution of $\bar{A}_n/\bar{B}_n$, which is the bias of the estimator $\hat{\beta}^{PROJ}_{j,n}$. Now we have seen that 
        $\E A_i =\E h_j(S_i)[U(S_i)+\epsilon_i] = \E h_j(S_i)U(S_i) = \int h_j(s)U(s)f_S(s)ds = \alpha_u^{(j)}$ and $\E B_i =\E h_j(S_i)X(S_i) = \alpha_x^{(j)}$
Also, notice that 
$$
\begin{aligned}
    \E A_i^2 & = \E \left[h_j^2(S_i)(U(S_i) + \epsilon_i)^2\right]\\
    & = \E [h_j(S_i)U(S_i)]^2 + \E \left[h^2_j(S_i)\epsilon_i^2\right] + 2\E [h_j(S_i)U(S_i)\epsilon_i]\\
    & = \int [h_j(s)U(s)]^2f_S(s)\,ds +  \sigma^2_\epsilon\\
    & = \|h_jU\|^2_{f_S} + \sigma^2_\epsilon
\end{aligned}
$$   
And similarly, we will have $\E B_i^2 = \int [h_j(s)X(s)]^2f_S(s)\,ds = \|h_jX\|^2_{f_S}$. Also, note that:
$$
\begin{aligned}
    \E A_i B_i = & \E [h_j(S)(U(S) + \epsilon_i)h_j(S)X(S) ]\\
    = & \int h_j^2(s)U(s)X(s)f_S(s)\,ds 
\end{aligned}
$$
Since $\int(h_j U(s))^2f_Sds< \infty$ and $\int(h_j X(s))^2f_Sds < \infty $ as the functions are elements of $\calF(\calD,f_S)$, by Multivariate Central Limit Theorem (CLT), we have:
    \begin{equation}\label{eq:mult-clt}
    \sqrt{n}\left\{\begin{pmatrix}
        \bar{A}_n\\\bar{B}_n
    \end{pmatrix} -  \begin{pmatrix}
        \alpha_u^{(j)}\\\alpha_x^{(j)}
    \end{pmatrix}\right\} = 
     \dfrac{1}{\sqrt{n}}\sum_{i=1}^n\left\{\begin{pmatrix}
        A_i\\B_i
    \end{pmatrix} -  \begin{pmatrix}
        \alpha_u^{(j)}\\\alpha_x^{(j)}
    \end{pmatrix}\right\} \dnlim N\left(\bzero_2,\bSig_{AB}\right)
    \end{equation}
where $\bzero_2 = (0,0)^\top$, and $\bSig_{AB} = \begin{pmatrix}
    \sigma^2_\epsilon + \sigma^2_{U,j} & \sigma_{U,X,j}\\
    \sigma_{U,X,j} & \sigma^2_{X,j} 
\end{pmatrix}$ with:
    $$
\begin{aligned}
    \sigma^2_{U,j} = & \int (h_j(s)U(s))^2f_S(s)ds   - {\alpha_u^{(j)}}^2 = \|h_j U\|^2_{f_S} - {\alpha_u^{(j)}}^2\\
    \sigma^2_{X,j} = & \int (h_j(s)X(s))^2f_S(s)ds   - {\alpha_x^{(j)}}^2 = \|h_j X\|^2_{f_S} - {\alpha_x^{(j)}}^2\\
        \sigma_{U,X,j} = & \int h^2_j(s)X(s)U(s)f_S(s)ds  - \alpha_x^{(j)}\alpha_u^{(j)} = \langle h_jX, h_jU\rangle_{f_S} - {\alpha_u^{(j)}}{\alpha_x^{(j)}}\\
\end{aligned}
$$
Now, using Delta Method on the bivariate function $g: (x,y) \mapsto x/y$, defining $\nabla g = \left(\dfrac{1}{\alpha_x^{(j)}}, -\dfrac{\alpha_u^{(j)}}{{\alpha_x^{(j)}}^2}\right)^\top$, we have:
\begin{equation}\label{eq:clt-ratio}
    \sqrt{n}\left(\dfrac{\bar{A}_n}{\bar{B}_n} - \underbrace{\dfrac{\alpha_u^{(j)}}{\alpha_x^{(j)}}}_{r_j}\right) \dnlim N\left(0, \sigma^2_{j}\right)
\end{equation}
where:
$$
\begin{aligned}
\sigma^2_{j} = \nabla g^\top\bSig_{AB}\nabla g = &\dfrac{\sigma^2_{U,j}+ \sigma^2_\epsilon}{{\alpha_x^{(j)}}^2} + \left(\dfrac{\alpha_u^{(j)}}{{\alpha_x^{(j)}}^2}\right)^2\sigma^2_{X,j} - 2\dfrac{\alpha_u^{(j)}}{{\alpha_x^{(j)}}^3}\sigma_{U,X,j}\\
= & \dfrac{\sigma^2_{U,j} + \sigma^2_\epsilon}{{\alpha_x^{(j)}}^2} + \alpha_u^{(j)}\left(\dfrac{\alpha_u^{(j)}}{{\alpha_x^{(j)}}^4}\sigma^2_{X,j} - \dfrac{2}{{\alpha_x^{(j)}}^3}\sigma_{U,X,j}\right)
\end{aligned}
$$
Thus, if $j \in O_X$, then $\alpha_u^{(j)} = 0$ and thus $r_j = 0$, which means $\sigma_{j}^2 = \frac{\sigma^2_{U,j} + \sigma^2_\epsilon}{\alpha_x^{(j)^2}} $. Thus for such a $j\in O_X$, we have:
$$
\sqrt{n}\left(\hat{\beta}^{PROJ}_{j,n} - \beta\right) \dnlim N\left(0, \dfrac{\sigma^2_{U,j} + \sigma^2_\epsilon}{\alpha_x^{(j)^2}}\right)
$$
This completes the proof of Theorem \ref{thm:IV-est-prop} and also proves Corollary \ref{cor:IV-consistent}.
\end{proof}

\subsection{Proof of Theorem \ref{thm:drop-est-prop}}\label{thm:drop-est-prop-proof}
\begin{proof}
    Recall the notations that we will be using for the proof:
    $$
    \begin{aligned}
        \bH_{1:d} = &\begin{bmatrix}
    \bh_{n,1} & \cdots & \bh_{n,d}
\end{bmatrix} \\
        \bH_{-j|1:d} = &
\begin{bmatrix}
    \bh_{n,1} & \cdots & \bh_{n,j-1} & \bh_{n,j+1} & \cdots & \bh_{n,d}
\end{bmatrix} \\
\bP_{d} = & \bH_{1:d}\left(\bH_{1:d}^\top\bH_{1:d}\right)^{-1}\bH_{1:d}^\top\\
\bP_{-j,d} = & \bH_{-j|1:d}\left(\bH_{-j|1:d}^\top\bH_{-j|1:d}\right)^{-1}\bH_{-j|1:d}^\top\\
\bX_{n,d} = & \bP_d\bX_n\\
\bY_{n,d} = & \bP_d\bY_n
    \end{aligned}
    $$
    We will simplify certain notations $\bH_{1:d} \to \bH_d$ and $\bH_{-j|1:d} \to \bH_{-j,d}$.
    Using these notations, we define out drop-one estimator for $j \in S_{X,d}$ can be simplified as:
    $$
    \begin{aligned}
            \hat{\beta}_{-j,n} = & \dfrac{\bX_{n,d}^\top(\bI_n - \bP_{-j,d})\bY_{n,d}}{\bX_{n,d}^\top(\bI_n - \bP_{-j,d})\bX_{n,d}}\\
            = & \dfrac{\bX_{n}^\top\bP_d(\bI_n - \bP_{-j,d})\bP_d\bY_{n}}{\bX_{n}^\top\bP_d(\bI_n - \bP_{-j,d})\bP_d\bX_{n}}\\
            = & \dfrac{\bX_{n}^\top\bP_d(\bI_n - \bP_{-j,d})\bP_d(\bX_n\beta + \bU_n + \beps_n)}{\bX_{n}^\top\bP_d(\bI_n - \bP_{-j,d})\bP_d\bX_{n}}\\
            = & \beta + \dfrac{\bX_{n}^\top\bP_d(\bI_n - \bP_{-j,d})\bP_d(\bU_n + \beps_n)}{\bX_{n}^\top\bP_d(\bI_n - \bP_{-j,d})\bP_d\bX_{n}}\\
    \end{aligned}
    $$
Notice that:
\begin{equation}\label{eq:proj-simplify}
    \begin{split}
        & \frac{1}{n}\bP_d(\bI_n - \bP_{-j,d})\bP_d\\
        = & \frac{1}{n}\bH_{d}\left(\bH_{d}^\top\bH_{d}\right)^{-1}\bH_{d}^\top(\bI_n - \bP_{-j,d})\bH_{d}\left(\bH_{d}^\top\bH_{d}\right)^{-1}\bH_{d}^\top\\
        = & \frac{1}{n}\bH_{d}\left(\bH_{d}^\top\bH_{d}\right)^{-1}\bH_{d}^\top\left(\bI_n - \bH_{-j,d}\left(\bH_{-j,d}^\top\bH_{-j,d}\right)^{-1}\bH_{-j,d}^\top\right)\bH_{d}\left(\bH_{d}^\top\bH_{d}\right)^{-1}\bH_{d}^\top\\
        = & \frac{1}{n}\bH_{d}\underbrace{\left(\frac{1}{n}\bH_{d}^\top\bH_{d}\right)^{-1}\left(\frac{1}{n}\bH_{d}^\top\bH_{d} - \frac{1}{n}\bH_{d}^\top\bH_{-j,d}\left(\frac{1}{n}\bH_{-j,d}^\top\bH_{-j,d}\right)^{-1}\frac{1}{n}\bH_{-j,d}^\top\bH_{d}\right)\left(\frac{1}{n}\bH_{d}^\top\bH_{d}\right)^{-1}}_{\bA_{j,d}}\frac{1}{n}\bH_{d}^\top\\
        = &  \frac{1}{n}\bH_{d} \bA_{j,d}\frac{1}{n}\bH_{d}^\top
    \end{split}
\end{equation}
Next we prove a lemma about the probability limit of $\bA_{j,d}$:
\begin{lemma}\label{lem:A-lim}
    For the $d \times d$ matrix $\bA_{j,d}$ defined in Equation \ref{eq:proj-simplify}, we have:
    \begin{itemize}
        \item $\bA_{j,d} \pnlim \bA^{lim}_{j,d}\coloneqq diag(\left\{\mathbb{I}(u = j), 1 \leq u \leq d\right\})$
        \item $\left|\left|\sqrt{n}\left(\bA_{j,d} - \bA^{lim}_{j,d}\right)\right|\right|_2 = \bigo_P(1)$ where $||\cdot||_2$ denotes the spectral norm.
    \end{itemize}
\end{lemma}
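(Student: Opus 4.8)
The plan is to recognize that $\bA_{j,d}$ is assembled entirely from empirical Gram matrices of the basis evaluations, so that both claims reduce to an entrywise law of large numbers together with its $\sqrt n$-rate refinement, propagated through matrix inversion and multiplication. Writing $\bm{G}_n = \tfrac1n\bH_d^\top\bH_d$, $\bm{G}_{-j,n}=\tfrac1n\bH_{-j,d}^\top\bH_{-j,d}$, and the cross term $\bm{C}_n=\tfrac1n\bH_d^\top\bH_{-j,d}$, the definition reads $\bA_{j,d}=\bm{G}_n^{-1}\bigl(\bm{G}_n-\bm{C}_n\bm{G}_{-j,n}^{-1}\bm{C}_n^\top\bigr)\bm{G}_n^{-1}$. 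Every entry of these matrices is an average $\tfrac1n\sum_i h_u(S_i)h_l(S_i)$ of i.i.d.\ terms, so Lemma \ref{lem:dot-prod-lim} gives $\bm{G}_n\pnlim\bI_d$, $\bm{G}_{-j,n}\pnlim\bI_{d-1}$, and $\bm{C}_n\pnlim\bm{E}_{-j}$, where $\bm{E}_{-j}$ is the $d\times(d-1)$ matrix obtained by deleting the $j$-th column of $\bI_d$, i.e.\ with $(u,l)$ entry $\mathbb I(u=l)$.

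For the first bullet I would invoke the continuous mapping theorem: inversion is continuous at the invertible limits $\bI_d,\bI_{d-1}$, and sums and products are continuous, so $\bA_{j,d}\pnlim\bI_d-\bm{E}_{-j}\bm{E}_{-j}^\top$. Since the columns of $\bm{E}_{-j}$ are the standard basis vectors $\{\bm e_l:l\neq j\}$, we have $\bm{E}_{-j}\bm{E}_{-j}^\top=\sum_{l\neq j}\bm e_l\bm e_l^\top=\bI_d-\bm e_j\bm e_j^\top$, so the limit collapses to $\bm e_j\bm e_j^\top=\mathrm{diag}(\{\mathbb I(u=j)\})=\bA^{lim}_{j,d}$, as claimed.

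For the rate I would first strengthen Lemma \ref{lem:dot-prod-lim} to a central-limit statement: each product $h_u(S_i)h_l(S_i)$ has finite variance because $h_u,h_l\in\calL^4(\calD,f_S)$ and Cauchy--Schwarz gives $\E[h_u^2h_l^2]\le\|h_u\|_{\calL^4}^2\|h_l\|_{\calL^4}^2<\infty$, so the multivariate CLT yields $\sqrt n(\bm{G}_n-\bI_d)=\bigo_P(1)$, $\sqrt n(\bm{G}_{-j,n}-\bI_{d-1})=\bigo_P(1)$, and $\sqrt n(\bm{C}_n-\bm{E}_{-j})=\bigo_P(1)$ in spectral norm (the dimension $d$ is fixed, so all matrix norms are equivalent). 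I would then propagate this rate with the exact telescoping identities $\bm{M}_n^{-1}-\bm{M}^{-1}=-\bm{M}_n^{-1}(\bm{M}_n-\bm{M})\bm{M}^{-1}$ for the two inverses and $\bm{A}_n\bm{B}_n-\bm{A}\bm{B}=\bm{A}_n(\bm{B}_n-\bm{B})+(\bm{A}_n-\bm{A})\bm{B}$ for the products. The first identity, applied to $\bm{G}_n^{-1}$ and $\bm{G}_{-j,n}^{-1}$, shows the scaled errors of the inverses are $\bigo_P(1)$, using $\|\bm{G}_n^{-1}\|_2,\|\bm{G}_{-j,n}^{-1}\|_2=\bigo_P(1)$; chaining the product identity through the three factors defining $\bA_{j,d}$ then gives $\|\sqrt n(\bA_{j,d}-\bA^{lim}_{j,d})\|_2=\bigo_P(1)$.

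The main obstacle is bookkeeping the $\sqrt n$-rate cleanly through the two nested inversions and the triple product. The crux is that every factor and every inverse appearing along the way is stochastically bounded in spectral norm, which holds precisely because the limits $\bI_d$ and $\bI_{d-1}$ are invertible: on the high-probability event where $\bm{G}_n$ and $\bm{G}_{-j,n}$ lie in a fixed neighborhood of their limits, the inverses stay bounded (and, for $n\ge d$ with continuous sampling, exist almost surely). Once that boundedness is secured, each telescoping step contributes only a bounded multiplicative factor to an $\bigo_P(1)$ scaled difference, and the stated rate follows.
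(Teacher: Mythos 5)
Your proof is correct and follows essentially the same route as the paper's: a law of large numbers for the empirical Gram matrices (the paper's Lemma \ref{lem:dot-prod-lim}) together with continuous mapping gives the probability limit, and a matrix CLT for the Gram and cross matrices is propagated through the triple product by telescoping identities and spectral-norm submultiplicativity to get the $\sqrt{n}$-rate. The only difference is cosmetic: where the paper controls $\left\|\sqrt{n}\left(\bm{M}_n^{-1}-\bm{M}^{-1}\right)\right\|_2$ via a Neumann-series expansion (its Proposition \ref{prop:tight-matrix-inv}), you use the exact resolvent identity $\bm{M}_n^{-1}-\bm{M}^{-1}=-\bm{M}_n^{-1}\left(\bm{M}_n-\bm{M}\right)\bm{M}^{-1}$, which is marginally cleaner (no convergence-of-series issue) but yields the same bound.
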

\begin{proof}
    Using Lemma \ref{lem:dot-prod-lim}, we can say that $\frac{1}{n}\bH_{d}^\top\bH_{d} \pnlim \bI_d$ and $\frac{1}{n}\bH_{-j,d}^\top\bH_{-j,d} \pnlim \bI_{d-1}$, and similarly using the same lemma, we can say that:
    \begin{equation}\label{eq:cross-basis-inner}
        \begin{split}
              \frac{1}{n}\bH_{-j,d}^\top\bH_{d} = & \frac{1}{n}\begin{bmatrix}
            \bh^\top_{n,1}\\
            \vdots\\
            \bh^\top_{n,j-1}\\
            \bh^\top_{n,j+1}\\
            \vdots\\
            \bh^\top_{n,d}
        \end{bmatrix}
        \begin{bmatrix}
    \bh_{n,1} & \cdots & \bh_{n,j-1}&\bh_{n,j}&\bh_{n,j+1}& \cdots& \bh_{n,d}
\end{bmatrix}\\
= & \frac{1}{n}\begin{bmatrix}
    \bH^\top_{1:(j-1),n}\\
     \bH^\top_{(j+1):d,n}\\
\end{bmatrix}_{(d-1)\times n}\begin{bmatrix}
    \bH_{1:(j-1),n} & \bh_{n,j} & \bH_{(j+1):d,n}
\end{bmatrix}_{n \times d}\\
\pnlim & \begin{bmatrix}
    \bI_{j-1} & \bzero_{j-1} & \bzero_{(j-1) \times (d-j)}\\
    \bzero_{(d-j)\times (j-1)} & \bzero_{d-j} & \bI_{d-j} 
\end{bmatrix} = \bR_{j,d}
        \end{split}
    \end{equation}
with:
\begin{equation}\label{eq:drop-inner-drop}
    \bR_{j,d}^\top\bR_{j,d} = \begin{bmatrix}
    \bI_{j-1} & \bzero_{j-1} & \bzero_{(j-1) \times (d-j)}\\
    \bzero^\top_{j-1} & 0 & \bzero_{d-j}\\
    \bzero_{(d-j) \times (j-1)} & \bzero_{d-j} & \bI_{d-j}\\
\end{bmatrix}
\end{equation}
Hence, we have:
$$
\begin{aligned}
    \bA_{j,d} = & \left(\frac{1}{n}\bH_{d}^\top\bH_{d}\right)^{-1}\left(\frac{1}{n}\bH_{d}^\top\bH_{d} - \frac{1}{n}\bH_{d}^\top\bH_{-j,d}\left(\frac{1}{n}\bH_{-j,d}^\top\bH_{-j,d}\right)^{-1}\frac{1}{n}\bH_{-j,d}^\top\bH_{d}\right)\left(\frac{1}{n}\bH_{d}^\top\bH_{d}\right)^{-1}\\
    \pnlim & \bI^{-1}_d\left(\bI_d - R^\top_{j,d}\bI_{d-1}^{-1}R_{j,d}\right)\bI^{-1}_d\\
    = & \bI_d - \begin{bmatrix}
    \bI_{j-1} & \bzero_{j-1} & \bzero_{(j-1) \times (d-j)}\\
    \bzero^\top_{j-1} & 0 & \bzero_{d-j}\\
    \bzero_{(d-j) \times (j-1)} & \bzero_{d-j} & \bI_{d-j} \\
\end{bmatrix} = diag(\left\{\mathbb{I}(u = j), 1 \leq u \leq d\right\}) = \bA^{lim}_{j,d}
\end{aligned}
$$
This completes the proof of the first part lemma. Next, we move to proving the uniform tightness of a matrix deviation. 

We will be using Lemma \ref{prop:tight-matrix-inv} in proving the second part of our lemma. Recall how We defined the matrix $\bA_{j,d}$ as:
$$
A_{j,d} = \left(\frac{1}{n} \bH_d^\top \bH_d \right)^{-1} \left( \frac{1}{n} \bH_d^\top \bH_d - \frac{1}{n} \bH_d^\top \bH_{-j,d} \left( \frac{1}{n} \bH_{-j,d}^\top \bH_{-j,d} \right)^{-1} \frac{1}{n} \bH_{-j,d}^\top \bH_d \right) \left(\frac{1}{n} \bH_d^\top \bH_d \right)^{-1}.
$$
and its probability limit $\bA_{j,d}^{lim} = \text{diag}(\mathbb{I}(u = j), 1 \leq u \leq d)$ which is a $d \times d$ matrix with $1$ at position $(j,j)$ and $0$ elsewhere. Let us define:
$$
\begin{aligned}
    \bM_n \coloneqq & \frac{1}{n} \bH_d^\top \bH_d \pnlim \bI_d\\
\bC_n^\top \coloneqq & \frac{1}{n} \bH_d^\top \bH_{-j,d} \pnlim \bR^\top_{j,d} \text{ (Eqn. \ref{eq:cross-basis-inner})}\\
\bD_n \coloneqq &\frac{1}{n} \bH_{-j,d}^\top \bH_{-j,d} \pnlim \bI_{d-1}    
\end{aligned}
$$
Then,
$$
\bA_{j,d} = \bM_n^{-1} \left(\bM_n - \bC_n^\top \bD_n^{-1} \bC_n\right) \bM_n^{-1} = \bM_n^{-1} - \bM_n^{-1}\bC_n^\top \bD_n^{-1} \bC_n \bM_n^{-1}
$$
Also, observe that we can write down $\bA_{j,d}^{lim} = \bI_d - \bR^\top_{j,d}\bR_{j,d}$ where $\bR^\top_{j,d}\bR_{j,d}$ has all the diagonal entries to be 1 except the $j$-th diagonal element which is 0 as seen in Equation \ref{eq:drop-inner-drop}. We denote $\bR^*_{j,d}\coloneqq \bR^\top_{j,d}\bR_{j,d}$. Thus, we have:
\begin{equation}
    \begin{split}
        & \sqrt{n}\left(\bA_{j,d} - \bA^{lim}_{j,d}\right)\\
        = & \sqrt{n}\left[\left(\bM_n^{-1} - \bI_d\right) - \left(\bM_n^{-1}\bC_n^\top \bD_n^{-1} \bC_n \bM_n^{-1} - \bR^*_{j,d}\right)\right]\\
        = & \sqrt{n}\left(\bM_n^{-1} - \bI_d\right) - \sqrt{n}\left[\bM_n^{-1}(\bC_n^\top \bD_n^{-1} \bC_n - \bR^*_{j,d} + \bR^*_{j,d})\bM_n^{-1} - \bR^*_{j,d}\right]\\
        = & \!\begin{aligned}[t]
        & \sqrt{n}\left(\bM_n^{-1} - \bI_d\right)\\
        - & \bM_n^{-1}\sqrt{n}\left(\bC_n^\top \bD_n^{-1} \bC_n - \bR^*_{j,d}\right)\bM_n^{-1}\\
        - &\sqrt{n}\left(\bM_n^{-1}\bR^*_{j,d}\bM_n^{-1} - \bR^*_{j,d}\right) \\
    \end{aligned}\\
    = & \!\begin{aligned}[t]
        & \sqrt{n}\left(\bM_n^{-1} - \bI_d\right)\\
        - & \bM_n^{-1}\left[\bC_n^\top \sqrt{n}(\bD_n^{-1} - \bI_{d-1}) \bC_n + \sqrt{n}(\bC_n^\top\bC_n - \bR^*_{j,d})\right]\bM_n^{-1}\\
        - &\sqrt{n}\left[(\bM_n^{-1}-\bI_d + \bI_d)\bR^*_{j,d}(\bM_n^{-1}-\bI_d + \bI_d) - \bR^*_{j,d}\right] \\
    \end{aligned}\\
    = & \!\begin{aligned}[t]
        & \sqrt{n}\left(\bM_n^{-1} - \bI_d\right)\\
        - & \bM_n^{-1}\left[\bC_n^\top \sqrt{n}(\bD_n^{-1} - \bI_{d-1}) \bC_n + \sqrt{n}(\bC_n^\top\bC_n - \bR^*_{j,d})\right]\bM_n^{-1}\\
        - &\left[\sqrt{n}(\bM_n^{-1}-\bI_d)\bR^*_{j,d}(\bM_n^{-1}-\bI_d)+\sqrt{n}(\bM_n^{-1}-\bI_d)\bR^*_{j,d} + \bR^*_{j,d}\sqrt{n}(\bM_n^{-1}-\bI_d)\right] \\
    \end{aligned}
    \end{split}    
\end{equation}
Unless otherwise specified, we will refer to $||\cdot||$ as the spectral norm for matrices. Thus, using these simplifications, we have:
\begin{equation}
    \begin{split}
        & \left|\left|\sqrt{n}\left(\bA_{j,d} - \bA^{lim}_{j,d}\right)\right|\right|_2\\
        \leq & \!\begin{aligned}[t]
        & \left|\left|\sqrt{n}\left(\bM_n^{-1} - \bI^{-1}_d\right)\right|\right|_2\\
        + & \left|\left|\bM_n^{-1}\left[\bC_n^\top \sqrt{n}(\bD_n^{-1} - \bI^{-1}_{d-1}) \bC_n + \sqrt{n}(\bC_n^\top\bC_n - \bR^*_{j,d})\right]\bM_n^{-1}\right|\right|_2\\
        + &
        \left|\left|\left[\sqrt{n}(\bM_n^{-1}-\bI^{-1}_d)\bR^*_{j,d}(\bM_n^{-1}-\bI^{-1}_d)+\sqrt{n}(\bM_n^{-1}-\bI^{-1}_d)\bR^*_{j,d} + \bR^*_{j,d}\sqrt{n}(\bM_n^{-1}-\bI^{-1}_d)\right] \right|\right|_2\\
        \end{aligned}\\
        \leq & \!\begin{aligned}[t]
        & \left|\left|\sqrt{n}\left(\bM_n^{-1} - \bI^{-1}_d\right)\right|\right|_2\\
        + & \left|\left|\bM_n^{-1}\right|\right|^2_2\left(\left|\left|\bC_n^\top \sqrt{n}(\bD_n^{-1} - \bI^{-1}_{d-1}) \bC_n\right|\right|_2 + \left|\left|\sqrt{n}(\bC_n^\top\bC_n - \bR^*_{j,d})\right|\right|_2\right)\\
        + &
        \left|\left|\sqrt{n}(\bM_n^{-1}-\bI^{-1}_d)\bR^*_{j,d}(\bM_n^{-1}-\bI^{-1}_d)\right|\right|_2+\left|\left|\sqrt{n}(\bM_n^{-1}-\bI^{-1}_d)\bR^*_{j,d}\right|\right|_2\\ 
        + &\left|\left|\bR^*_{j,d}\sqrt{n}(\bM_n^{-1}-\bI^{-1}_d)\right|\right|_2\\
        \end{aligned}\\
              \leq & \!\begin{aligned}[t]
        & \left|\left|\sqrt{n}\left(\bM_n^{-1} - \bI^{-1}_d\right)\right|\right|_2\\
        + & \left|\left|\bM_n^{-1}\right|\right|^2_2\left(\left|\left|\bC_n \right|\right|^2_2\left|\left|\sqrt{n}(\bD_n^{-1} - \bI^{-1}_{d-1})\right|\right|_2 + \left|\left|\sqrt{n}(\bC_n^\top\bC_n - \bR^*_{j,d})\right|\right|_2\right)\\
        + &
        \left(\left|\left|\sqrt{n}(\bM_n^{-1}-\bI^{-1}_d)\right|\right|_2\left|\left|\bR^*_{j,d}\right|\right|_2\left|\left|(\bM_n^{-1}-\bI^{-1}_d)\right|\right|_2+\left|\left|\sqrt{n}(\bM_n^{-1}-\bI^{-1}_d)\right|\right|_2\left|\left|\bR^*_{j,d}\right|\right|_2\right)\\
        + & \left|\left|\bR^*_{j,d}\right|\right|_2\left|\left|\sqrt{n}(\bM_n^{-1}-\bI^{-1}_d)\right|\right|_2\\
        \end{aligned}\\
    \end{split}
\end{equation}
Let us talk about the spectral norms of some of the components of the upper bound of the quantity $\left|\left|\sqrt{n}\left(\bA_{j,d} - \bA^{lim}_{j,d}\right)\right|\right|_2$. As $\bM_n$ converges to $\bI_d$ and matrix inversion and spectral norm are continuous maps on the space of matrices, hence we can say $\left|\left|\bM_n^{-1}\right|\right|^2_2 = \bigo_P(1)$. Similarly, we can say the same thing about $\left|\left|\bC_n \right|\right|^2_2$ being tight. Also note that since $\sqrt{n}(\bC_n - \bR_{j,d})$ has a matrix CLT due to its structure as $\bC_n$ can be written as sum of $n$ iid matrices of dimension $d \times (d-1)$, combined with the fact that $\bA \mapsto \bA^\top\bA$ is a continuous map, we can say $\left|\left|\sqrt{n}(\bC_n^\top\bC_n - \bR^\top_{j,d}\bR_{j,d})\right|\right|_2$ is $\bigo_P(1)$. Also from Equation \ref{eq:drop-inner-drop}, it is also clear that $\left|\left|\bR^*_{j,d}\right|\right|_2 = 1$. The remaining terms of the upper bound can be said to be $\bigo_P(1)$ by using Proposition \ref{prop:tight-matrix-inv}. This completes the proof of the second part of the lemma.
\end{proof}
Denoting $\balp_{x,d} = \left(\alpha^{(1)}_{x}, \cdots, \alpha^{(d)}_{x}\right)^\top$ and $\balp_{u,d} = \left(\alpha^{(1)}_{u}, \cdots, \alpha^{(d)}_{u}\right)^\top$, using Lemma \ref{lem:dot-prod-lim} and WLLN, we have:
\begin{equation}\label{eq:X-Hk-inner}
    \begin{split}
           \frac{1}{n}\bX_n^\top\bH_d = & \begin{bmatrix}
        \frac{1}{n}\bX_n^\top\bh_{n,1} & \cdots & \frac{1}{n}\bX_n^\top\bh_{n,d}
    \end{bmatrix}_{1 \times d}\\
    \pnlim & \begin{bmatrix}
        \alpha_x^{(1)} & \cdots & \alpha_x^{(d)}
    \end{bmatrix} = \balp_{x,d}^\top
    \end{split}
\end{equation}
Similarly, we have:
$$
    \frac{1}{n}\bU_n^\top\bH_d \pnlim\balp_{u,d}^\top, \frac{1}{n}\beps_n^\top\bH_d \pnlim\bzero_{d}^\top
$$
Thus using the above statements and Lemma \ref{lem:A-lim}, we have:
\begin{equation}
    \begin{split}
        \hat{\beta}_{-j,n} - \beta = &\dfrac{\frac{1}{n}{\bX}_{n}^\top\bP_d(\bI_n - \bP_{-j,d})\bP_d(\bU_n + \beps_n)}{\frac{1}{n}{\bX}_{n}^\top\bP_d(\bI_n - \bP_{-j,d})\bP_d\bX_{n}}\\
        = & \dfrac{\frac{1}{n}{\bX}_{n}^\top\bH_{d} \bA_{j,d}\frac{1}{n}{\bH}_{d}^\top(\bU_n + \beps_n)}{\frac{1}{n}{\bX}_{n}^\top\bH_{d} \bA_{j,d}\frac{1}{n}{\bH}_{d}^\top\bX_{n}}\\
        \pnlim & \frac{\balp_{x,d}^\top A^{lim}_{j,d}(\balp_{u,d} + \bzero_d)}{{\balp}_{x,d}^\top A^{lim}_{j,d}\balp_{x,d}} = \frac{\alpha_x^{(j)}\alpha_u^{(j)}}{\alpha_x^{(j)^2}} = \frac{\alpha_u^{(j)}}{\alpha_x^{(j)}} = r_j
    \end{split}
\end{equation}
This completes the proof for the consistency of the bias of our drop-one estimator. 

Next, we will show the asymptotic distribution of the drop one estimator when the index $j$ of the dropped basis function lies in $O_{X,d}$, i.e. $\alpha_x^{(j)} \neq 0$ and $\alpha_u^{(j)} = 0$. Notice that $\bU_n$ can be decomposed as $\bU_n = \bU_{n,d} + (1 - \lambda_{d})\bU_{n,>d}$, where the above condition also means that $\bU_{n,d} \in \colsp(\bH_{-j,d})$. Thus, we have:
\begin{equation}\label{eq:projn-orth}
    (\bI_n- \bP_{-j,d})\bP_d\bU_n = (1 - \lambda_{d})(\bI_n- \bP_{-j,d})\bP_d\bU_{n,>d}
\end{equation}

Using this, for $j \in O_{X,d}$, we can simplify the bias of the estimator as:
\begin{equation}\label{eq:bias-drop}
    \begin{split}
        & \hat{\beta}_{-j,n} - \beta\\
        = & \dfrac{\frac{1}{n}\bX_{n}^\top\bP_d(\bI_n - \bP_{-j,d})\bP_d(\bU_n + \beps_n)}{\frac{1}{n}\bX_{n}^\top\bP_d(\bI_n - \bP_{-j,d})\bP_d\bX_{n}}\\
        = & \dfrac{\frac{1}{n}\bX_{n}^\top\bP_d(\bI_n - \bP_{-j,d})\bP_d(\bU_{n,d} + (1 - \lambda_{d})\bU_{n,>d} + \beps_n)}{\frac{1}{n}\bX_{n}^\top\bP_d(\bI_n - \bP_{-j,d})\bP_d\bX_{n}}\\
        \overset{\ref{eq:projn-orth}}{=} & \dfrac{\frac{1}{n}\bX_{n}^\top\bP_d(\bI_n - \bP_{-j,d})\bP_d((1 - \lambda_{d})\bU_{n,>d} + \beps_n)}{\frac{1}{n}\bX_{n}^\top\bP_d(\bI_n - \bP_{-j,d})\bP_d\bX_{n}}\\
        = & \dfrac{\frac{1}{n}\bX_{n}^\top\bH_{d} \bA_{j,d}\frac{1}{n}\bH_{d}^\top((1 - \lambda_{d})\bU_{n,>d} + \beps_n)}{\frac{1}{n}\bX_{n}^\top\bH_{d} \bA_{j,d}\frac{1}{n}\bH_{d}^\top\bX_{n}}\\
        = & \underbrace{\dfrac{\frac{1}{n}\bX_{n}^\top\bH_{d} [A^{lim}_{j,d} + (\bA_{j,d} - A^{lim}_{j,d})]\frac{1}{n}\bH_{d}^\top((1 - \lambda_{d})\bU_{n,>d} + \beps_n)}{\balp_{x,d}^\top A^{lim}_{j,d}\balp_{x,d}}}_{\coloneqq L_n}\cdot \underbrace{\frac{\balp_{x,d}^\top A^{lim}_{j,d}\balp_{x,d}}{\frac{1}{n}\bX_{n}^\top\bH_{d} \bA_{j,d}\frac{1}{n}\bH_{d}^\top\bX_{n}}}_{\coloneqq \zeta_{j,n}\pnlim 1}\\
    \end{split}
\end{equation}
Thus, note that the asymptotic distribution of  $\sqrt{n}\left(\hat{\beta}_{-j,n} - \beta\right)$ is same as asymptotic distribution of $\sqrt{n}L_n$. Now:
$$
\begin{aligned}
    & \sqrt{n}L_n \cdot \balp_{x,d}^\top A^{lim}_{j,d}\balp_{x,d}\\
    =& \sqrt{n}\frac{1}{n}\bX_{n}^\top\bH_{d} [A^{lim}_{j,d} + (\bA_{j,d} - A^{lim}_{j,d})]\frac{1}{n}\bH_{d}^\top((1 - \lambda_{d})\bU_{n,>d} + \beps_n)\\
    = & \underbrace{\sqrt{n}\left[\frac{1}{n}\bX_{n}^\top\bH_{d}A^{lim}_{j,d}\frac{1}{n}\bH_{d}^\top((1 - \lambda_{d})\bU_{n,>d} + \beps_n)\right]}_{R_{n}}\\
    &+\underbrace{\left(\frac{1}{n}\bX_{n}^\top\bH_{d}\right)\sqrt{n}(\bA_{j,d} - A^{lim}_{j,d})\left(\frac{1}{n}\bH_{d}^\top((1 - \lambda_{d})\bU_{n,>d} + \beps_n)\right)}_{T_{n}}
\end{aligned}
$$
Next, we show that $T_{n} = \smallo_P(1)$. In order to see that, notice that from Equation \ref{eq:X-Hk-inner} we have, $\frac{1}{n}\bX_{n}^\top\bH_{d} \pnlim \balp_{x,d}$ and thus it is $\bigo_P(1)$. Also using Lemma \ref{lem:dot-prod-lim}, we can say that $\frac{1}{n}\bH_{d}^\top((1 - \lambda_{d})\bU_{n,>d} + \beps_n) \pnlim \bzero_d$, Hence we have using Lemma \ref{lem:A-lim} that:
$$
\begin{aligned}
    |T_{n}| = & \left|\left(\frac{1}{n}\bX_{n}^\top\bH_{d}\right)\sqrt{n}(\bA_{j,d} - A^{lim}_{j,d})\left(\frac{1}{n}\bH_{d}^\top((1 - \lambda_{d})\bU_{n,>d} + \beps_n)\right)\right|\\
    \leq & \underbrace{\left|\left|\frac{1}{n}\bX_{n}^\top\bH_{d}\right|\right|_2}_{\bigo_P(1)}\underbrace{\left|\left|\sqrt{n}(\bA_{j,d} - A^{lim}_{j,d})\right|\right|_2}_{\bigo_P(1)}\underbrace{\left|\left|\frac{1}{n}\bH_{d}^\top((1 - \lambda_{d})\bU_{n,>d} + \beps_n)\right|\right|_2}_{\smallo_P(1)}\\
    = & \smallo_P(1)
\end{aligned}
$$
Thus, we can say that $\sqrt{n}\left(\hat{\beta}_{-j,n} - \beta\right)$ has the same asymptotic distribution as that of the quantity $\sqrt{n}L_n = R_{n}\left(\balp_{x,d}^\top A^{lim}_{j,d}\balp_{x,d}\right)^{-1}$. So let us start deriving the asymptotic distribution of $R_{n}$:
\begin{equation}\label{eq:Rn-clt-map}
    \begin{split}
            R_{n} = & \sqrt{n}\left[\frac{1}{n}\bX_{n}^\top\bH_{d}A^{lim}_{j,d}\frac{1}{n}\bH_{d}^\top((1 - \lambda_{d})\bU_{n,>d} + \beps_n)\right]\\
            = & \sqrt{n}\left[\frac{1}{n}\bX_{n}^\top\bh_{n,j}\frac{1}{n}\bh_{n,j}^\top((1 - \lambda_{d})\bU_{n,>d} + \beps_n)\right]\\
    = & \sqrt{n}\left[\underbrace{\left(\frac{1}{n}\sum_{i=1}^nX_ih_{j}(S_i)\right)}_{R_{1,n}}\underbrace{\left(\frac{1}{n}\sum_{i=1}^n((1 - \lambda_{d})U_{>d,i}+\epsilon_i)h_{j}(S_i)\right)}_{R_{2,n}}\right]\\
    = & \sqrt{n} R_{1,n} \cdot R_{2,n}
    \end{split}
\end{equation}
Let us define $\bZ_i = \begin{bmatrix}
     Z_{i1} \coloneqq X_ih_{j}(S_i)\\
     Z_{i2} \coloneqq((1 - \lambda_{d})U_{>d,i}+\epsilon_i)h_{j}(S_i)
 \end{bmatrix}$ which implies $\frac{1}{n}\sum_{i=1}^n\bZ_i = \begin{bmatrix}
     R_{1,n}\\
     R_{2,n}
 \end{bmatrix}$. 
 First, we will derive a CLT for $\bZ_i$'s as they are iid and then use the Delta Method on the sample averages of $\bZ_i$ to get the CLT for $R_n$. Let us first derive the mean vector $\bmu_Z$ for $\bZ_i$. Remember that $\epsilon_i \perp S_i$. Hence $\E X_ih_{j}(S_i) = \alpha^{(j)}_x$ and $\E((1 - \lambda_{d})U_{>d,i}+\epsilon_i)h_{j}(S_i) = 0$ which implies that
 $
 \bmu_Z^\top = \begin{bmatrix}
    \alpha^{(j)}_x & 0
 \end{bmatrix}
 $. 
 Now, let us look at the covariance matrix of $\bZ_i$ denoted by $\bSig_Z$. Thus, we have:
 $$
     \bSig_Z = \begin{bmatrix}
         \sigma^2_1 & \sigma_{12}\\
         \sigma_{12} & \sigma^2_2 \\
     \end{bmatrix}
 $$
For this Delta Method, we will only need $\sigma^2_2$, which we will compute. For that, notice that:
$$
\begin{aligned}    
\sigma^2_2= & Var((1 - \lambda_{d})U_{>d,i}+\epsilon_i)h_{j}(S_i))\\
= & \E[(1 - \lambda_{d})U_{>d,i}+\epsilon_i)h_{j}(S_i)]^2\\
= & (1 - \lambda_{d})\E[U_{>d,i}h_j(S_i)]^2 + \sigma^2_\epsilon\\
\overset{(\blue{*})}{=} & (1 - \lambda_{d}) Var[U_{>d,i}h_j(S_i)] + \sigma^2_\epsilon\\
\coloneqq & (1 - \lambda_{d}) \sigma^2_{U,j,>d} + \sigma^2_\epsilon
\end{aligned}
$$
Let us discuss in short why $(\blue{*})$ holds in the chain of inequality above which is essentially showing why $\E [U_{>d}(s)h_j(S)] = 0$. Note that $U_{>d}(s) = \sum_{j \geq d+1}\alpha_u^{(j)}h_j(s)$, and let us define $U_{>d,k}(s) \coloneqq \sum_{j=d+1}^k \alpha_u^{(j)}h_j(s)$. Then we have $h_j(s)U_{>d,k}(s) \klim h_j(s)U_{>d}(s)$ pointwise for every $s \in \calD$ and for $1\leq j\leq d$, we have:
$$
\begin{aligned}
\E h_j(S)U_{>d,k}(S) =\int h_j(s)X_{>d,k}(s)\,f_Sds = &\sum_{l=d+1}^k\alpha_u^{(l)}\int h_j(s)h_l(s)\,f_S\,ds\\
= & \sum_{l=d+1}^k\alpha_u^{(l)}\mathbb{I}(l=j)=0
\end{aligned}
$$
The reason for the last equality is that $l \geq d+1$ and $l \leq d$. Thus Dominated Convergence Theorem (DCT) implies that:
$$
\E h_j(S)U_{>d,k}(S) \klim \E h_j(S)U_{>d}(S) = \sum_{l=d+1}^\infty\alpha_u^{(l)}\mathbb{I}(l=j) = 0
$$
This shows why the expectation of $h_j(S)U_{>d}(S)$ is 0. Thus, using Multivariate CLT, we can say that:
\begin{equation}\label{eq:joint-clt}
    \sqrt{n}\left(\dfrac{1}{n}\sum_{i=1}^n \bZ_i - \bmu_Z\right) = \sqrt{n}\left(\begin{bmatrix}
     R_{1,n}\\
     R_{2,n}
 \end{bmatrix} - \begin{bmatrix}
     \alpha_x^{(j)}\\
     0
 \end{bmatrix}\right) \dnlim \mathcal{N}_{2}\left(\bzero_{2},\bSig_Z\right)
\end{equation}
Recall from Equation \ref{eq:Rn-clt-map}, we want to get to the CLT for $R_n$ for which we will apply the Delta Method on the CLT that we just derived. For a function $g: \R^{2} \mapsto \R$ defined by $g(x,y) = xy$. The values of the functions $g$  evaluated at $\bmu_Z$ is given by $g(\bmu_Z) = 0$.
The gradient of the function $g$ is then given by $\nabla g^\top = \begin{bmatrix}
    y & x
\end{bmatrix}$. The value of $\nabla g$ evaluated at $\bmu_Z$ is given by 
        $\nabla g^\top(\bmu_Z) = \begin{bmatrix}
    0 & \alpha_x^{(j)} 
\end{bmatrix}$.
This means that:
\begin{equation}\label{eq:lim-var}
    \begin{split}
        \sigma^2_R \coloneqq & (\nabla g(\bmu_Z))^\top\bSig_Z \nabla g(\bmu_Z) \\
        =& \alpha_x^{(j)^2} \sigma^2_2\\
        = & \alpha_x^{(j)^2}(\sigma^2_{U,j,>d} + \sigma^2_\epsilon)
    \end{split}
\end{equation}
Thus applying Delta Method on the Equation \ref{eq:joint-clt}, we have:
\begin{equation}\label{eq:Rn-clt}
\begin{split}
    \sqrt{n}\left[g\left(\dfrac{1}{n}\sum_{i=1}^n \bZ_i\right) - g\left(\bmu_Z\right)\right]& \dnlim N\left(0, (\nabla g(\bmu_Z))^\top \bSig_Z \nabla g(\bmu_Z)\right)\\
    \implies \sqrt{n}\left(g\begin{bmatrix}
     R_{1,n}\\
     R_{2,n}
 \end{bmatrix} - 0\right) & \dnlim N(0,\sigma^2_R)\\
 \implies \sqrt{n}\left( R_{1,n}\cdot R_{2,n}\right)& \dnlim N(0,\sigma^2_R)\\
    \implies R_n & \dnlim R \sim N(0, \sigma^2_R) \quad \text{(see Eqn. \ref{eq:lim-var})}
    \end{split}
\end{equation}
Hence, combining all the components from Equation \ref{eq:bias-drop}, we have for $j \in O_X$:
\begin{equation}
    \begin{split}
        \sqrt{n}\left(\hat{\beta}_{-j,n} - \beta\right) = & \sqrt{n}L_n \zeta_{j,n}\\
        = & \frac{R_n + T_n}{\balp_{x,d}^\top A^{lim}_{j,d}\balp_{x,d}}\zeta_{j,n}\\
        = & \frac{R_n + T_n}{\alpha_x^{(j)^2}}\zeta_{j,n} \dnlim \frac{R}{\alpha_x^{(j)^2}} \sim N\left(0,\frac{\sigma^2_R}{\alpha_x^{(j)^4}}\right) = N\left(0,\frac{(1 - \lambda_{d}) \sigma^2_{U,j,>d} + \sigma^2_\epsilon}{\alpha_x^{(j)^2}}\right)
    \end{split}
\end{equation}
This completes the proof on the asymptotic normality of the valid candidates.

Finally, observe that when $U$ has a finite expansion with $d$ basis functions, i.e. when $\lambda_d = 1$, then for $j \in O_{X,d}$ recalling that  error vector $\beps_n$ is independent of the sampled location vector $\bS_n$, from Equation \ref{eq:bias-drop}, we have:
\begin{equation}
    \begin{split}
        \E\left[\hat{\beta}_{-j,n} - \beta\right] = & \E_{\beps_n,\bS_n}\left[\dfrac{\frac{1}{n}\bX_{n}^\top\bH_{d} \bA_{j,d}\frac{1}{n}\bH_{d}^\top\beps_n}{\frac{1}{n}\bX_{n}^\top\bH_{d} \bA_{j,d}\frac{1}{n}\bH_{d}^\top\bX_{n}}\right]\\
        = & \E_{\bS_n}\left[\E\left[\dfrac{\frac{1}{n}\bX_{n}^\top\bH_{d} \bA_{j,d}\frac{1}{n}\bH_{d}^\top\beps_n}{\frac{1}{n}\bX_{n}^\top\bH_{d} \bA_{j,d}\frac{1}{n}\bH_{d}^\top\bX_{n}}\Big| \bS_n\right]\right]\\
        = & \E_{\bS_n}\left[\dfrac{\frac{1}{n}\bX_{n}^\top\bH_{d} \bA_{j,d}\frac{1}{n}\bH_{d}^\top}{\frac{1}{n}\bX_{n}^\top\bH_{d} \bA_{j,d}\frac{1}{n}\bH_{d}^\top\bX_{n}}\E\left[\beps_n|\bS_n\right]\right]\\
        = & \E_{\bS_n}\left[\dfrac{\frac{1}{n}\bX_{n}^\top\bH_{d} \bA_{j,d}\frac{1}{n}\bH_{d}^\top}{\frac{1}{n}\bX_{n}^\top\bH_{d} \bA_{j,d}\frac{1}{n}\bH_{d}^\top\bX_{n}}\E\left[\beps_n\right]\right]\\
        = & 0
    \end{split}
\end{equation}
This proves the finite sample unbiasedness property of our valid drop-one candidate estimator under finite expansion assumption for the unmeasured confounder $U$. This completes the proof of Theorem \ref{thm:drop-est-prop}.
\end{proof}
\subsection{Proof of Theorem \ref{thm:asm-var-comparison}}\label{thm:asm-var-comparison-proof}
\begin{proof}
Recall from Theorems \ref{thm:IV-est-prop} and \ref{thm:drop-est-prop}, the asymptotic variance of the drop-one 
estimator for $j \in O_{X,d}$ is $avar\left(\hat{\beta}_{-j,n}\right)=\dfrac{(1-\lambda_{d})\sigma^2_{U,j,>d} + \sigma^2_\epsilon}{{\alpha_x^{(j)}}^2}$ and that for the projection estimator is $avar\left(\hat{\beta}^{PROJ}_{j,n}\right)=\frac{\sigma^2_{U,j}+ \sigma^2_\epsilon}{{\alpha_x^{(j)}}^2}$. Let us compare the asymptotic variances under two scenarios:

\textbf{\textit{Case 1: $(\lambda_d = 1)$}} This scenario implies that with respect to the basis function used for constructing the estimator, the unmeasured confounder assumes a finite expansion i.e. $ U(s) = \sum_{j=1}^d\alpha_u^{(j)}h_j(s)$ which gives us $avar\left(\hat{\beta}_{-j,n}\right) = \sigma^2_\epsilon/{{\alpha_x^{(j)}}^2}$. Note that, excluding pathological cases, $\sigma^2_{U,j}=\int_\calD (h_j(s)U(s))^2f_S(s)ds  = \text{Var}(h_j(S)U(S)) > 0$ since $\alpha_u^{(j)}=0$. So the asymptotic variance of $\hat{\beta}_{j,n}^{PROJ}$ is strictly greater than $\sigma^2_\epsilon/{\alpha_x^{(j)}}^2$. This completes the proof of this case.

\textbf{\textit{Case 2: $(\lambda_d = 0)$}}: For this case we will use Lemma \ref{lem:truncated-variance}. Since the basis functions have been assumed to continuous and the domain $\calD$ is compact, Assumption (\ref{asm:infty-bound-basis}) of the lemma is immediately satisfied. Assumption (\ref{asm:c-star}) is also satisfied for this basis function class for a choice of very small $\epsilon$ which corresponds a large enough $d$ since the unmeasured confounder is assumed to be a continuous function in the compact domain $\calD$. Thus for $j \in O_{X,d}$, invoking the lemma we have $\|h_j U\|^2_{f_S} > \|h_j U_{>d}\|^2_{f_S}$. Few things to observe here is that $\|h_j U\|^2_{f_S} = \int_\calD (h_jU)^2f_Sds = \sigma^2_{U,j}$. Next we see that $\sigma^2_{U,j,>d} = \int_\calD(h_j U_{>d})^2f_Sds= \|h_j U_{>d}\|^2_{f_S}$. Thus, using this lemma, it can be directly inferred from Theorems \ref{thm:IV-est-prop} and \ref{thm:drop-est-prop} that $\sigma^2_{U,j} > \sigma^2_{U,j,>d}$ for $j \in O_{X,d}$. Thus even for $U$ having possibly infinite expansion i.e. for $\lambda_{d} = 0$, if $\calH$ satisfies the assumptions of Theorem \ref{thm:asm-var-comparison} then:
$$ avar\left(\hat{\beta}_{-j,n}\right)
< avar\left(\hat{\beta}^{PROJ}_{j,n}\right).$$
This completes the proof of the theorem.   
\end{proof}
\end{document}